\documentclass[a4paper,USenglish,cleveref, autoref, thm-restate]{lipics-v2021}
\hideLIPIcs
\nolinenumbers
\makeatletter 
\def\@oddfoot{} 
\makeatother

\title{An Infinitary and a Cyclic Sequent Calculus\\ for Non-Monotone Inductive Definitions} 
\titlerunning{An Infinitary and a Cyclic Sequent Calculus for Non-Monotone Inductive Definitions} 

\author{Robbe {Van den Eede}}{KU Leuven, Department of Computer Science, Belgium \and Vrije Universiteit
	Brussel, Department of Computer Science, Belgium} 
	{robbe.vandeneede@kuleuven.be}
	{https://orcid.org/0000-0002-2579-9053}
	{This work was supported by Fonds Wetenschappelijk Onderzoek Vlaanderen (FWO Flanders) under the project 11A2R26N.}%

\authorrunning{R. Van den Eede} %

\Copyright{Robbe Van den Eede} %

\ccsdesc[500]{Theory of computation~Proof theory}
\ccsdesc[300]{Computing methodologies~Nonmonotonic, default reasoning and belief revision}
\ccsdesc[100]{Computing methodologies~Logic programming and answer set programming}

\keywords{
Sequent calculus, cyclic proofs, inductive definitions, non-monotone definitions, infinite descent} %

\acknowledgements{I would like to thank Bart Bogaerts and Marc Denecker for their insightful feedback and valuable discussions concerning the ideas presented in this paper.}%

\usepackage{mathtools} %
\usepackage{ebproof} %
\usepackage{xcolor}
\usepackage{xspace}
\usepackage{amsmath}
\usepackage{amsthm}
\usepackage{amsfonts}
\usepackage{amssymb}
\usepackage{tikz}\usetikzlibrary{arrows,fit,shapes.symbols,patterns}\usetikzlibrary{positioning}
\tikzset{just/.style={draw,->,thick}}
\usepackage{thmtools}

\newcommand{\longonly}[1]{\ifhmode\unskip\fi\ignorespaces}
\newcommand{\shortonly}[1]{#1}

\newcommand{\vsp}{\vspace{2mm}}

\newcommand{\nat}{\mathbb{N}}

\newcommand{\fo}{{\normalfont \textrm{FO}}\xspace}
\newcommand{\foid}{{\normalfont \textrm{FO(ID)}}\xspace}

\newcommand{\lk}{{\normalfont \textrm{LK}}\xspace}
\newcommand{\lj}{{\normalfont \textrm{LJ}}\xspace}
\newcommand{\lkid}{{\normalfont \textrm{LKID}}\xspace}
\newcommand{\lkidinf}{{\normalfont \textrm{LKID}$^\omega$}\xspace}
\newcommand{\clkid}{{\normalfont \textrm{CLKID}$^\omega$}\xspace}

\newcommand{\scfo}{{\normalfont\textrm{SC}$_{\fo}$}\xspace}
\newcommand{\scfoid}{{\normalfont\textrm{SC}$_{\foid}$}\xspace}
\newcommand{\scinf}{{\normalfont\textrm{SC}$_{\scriptsize\foid}^\infty$}\xspace}
\newcommand{\logic}{$\mathcal{L}$\xspace}
\newcommand{\scginf}{{\normalfont\textrm{SC}$^\infty$}\xspace}
\newcommand{\sccyc}{{\normalfont\textrm{SC}$_{\scriptsize\foid}^\circlearrowleft$}\xspace}
\newcommand{\scgcyc}{{\normalfont\textrm{SC}$^\circlearrowleft$}\xspace}

\newcommand{\voc}[1]{\mathsf{voc}(#1)}
\newcommand{\vocab}{\Sigma}
\newcommand{\nlsym}{\gamma}
\newcommand{\ar}[1]{\mathsf{ar}(#1)}
\newcommand{\form}{\varphi}
\newcommand{\formtwo}{\psi}
\newcommand{\formthree}{\chi}
\newcommand{\free}[1]{\normalfont\textrm{Free}(#1)}

\newcommand{\rul}{\leftarrow}
\newcommand{\defrul}{\forall \bar{x}: P(\bar{t}) \rul \form}
\newcommand{\defn}{\Phi}
\newcommand{\defns}{\defn_1, \dots, \defn_n}
\newcommand{\defin}[1]{\left\{ 
	\begin{array}{l} #1 \end{array} 
	\right\}}
\newcommand{\defintight}[1]{\left\{ 
	\hspace{-5pt} 
	\begin{array}{l} #1 \end{array} 
	\hspace{-5pt} 
	\right\}}
\newcommand{\defp}[1]{\mathrm{def}(#1)}
\newcommand{\pars}[1]{\mathrm{pars}(#1)}
\newcommand{\sym}[1]{\mathrm{sym}(#1)}

\newcommand{\dom}[1]{\mathsf{dom}(#1)}
\newcommand{\true}{\mathbf{t}}
\newcommand{\false}{\mathbf{f}}
\newcommand{\unknown}{\mathbf{u}}
\newcommand{\str}{\mathcal{I}}
\newcommand{\strtwo}{\mathcal{J}}
\newcommand{\leqt}{\leq_{\mathrm{t}}}

\newcommand{\leqp}{\leq_{\mathrm{p}}}
\newcommand{\tvstr}{\mathfrak{A}}
\newcommand{\tvstrtwo}{\mathfrak{B}}
\newcommand{\cont}{\mathcal{O}}
\newcommand{\wfind}{(\tvstr_i)_{0 \leq i \leq \beta}}
\newcommand{\modelswf}{\models_{\mathrm{wf}}}
\newcommand{\wfmod}{\mathcal{W}}

\newcommand{\consts}[1]{C_{#1}} %
\newcommand{\facts}[2]{\normalfont\textrm{Facts}(#1,#2)}
\newcommand{\fact}{\alpha}

\newcommand{\labfun}{\ell}
\newcommand{\branch}{\mathbf{b}}
	
\newcommand{\seq}{\Gamma \vdash \Delta}
\newcommand{\seqshort}{\mathcal{S}}
\newcommand{\seqs}{\mathsf{Seqs}}
\newcommand{\regseq}{\defns, \Gamma \vdash \Delta}
\newcommand{\sregseq}{\defn, \Gamma \vdash \Delta}
\newcommand{\deriv}{\mathcal{D}}
\newcommand{\seqfun}{\lambda}
\newcommand{\rulfun}{\rho}
\newcommand{\prf}{\mathcal{P}}
	
\newcommand{\pth}{\pi}
\newcommand{\trace}{\tau}
\newcommand{\repfun}{\mathcal{R}}
\newcommand{\graph}[1]{\mathcal{G}_{#1}}
\newcommand{\trunf}[1]{\mathcal{T}_{#1}}
\newcommand{\comp}{\hspace{2mm} (*)}
\newcommand{\comptwo}{\hspace{2mm} (\dagger)}
\newcommand{\eqp}{\approx}
\newcommand{\peq}{\simeq}
\newcommand{\branchset}[1]{\normalfont\textrm{Branch}(#1)}

\newcommand{\intset}{\mathbb{I}}
\newcommand{\tset}{\mathcal{T}}
\newcommand{\tval}{\normalfont\textrm{TVal}}
\newcommand{\seqsabst}{\normalfont\textrm{Seqs}}
\newcommand{\tpair}{\normalfont\textrm{TPair}}
\newcommand{\rules}{\normalfont\textrm{Rules}}
\newcommand{\otf}{\sigma}
\newcommand{\ord}{\normalfont\textrm{Ord}}
\newcommand{\seqabst}{S}

\newcommand{\structord}{\leq_\prf}
\newcommand{\indord}{\lhd}
\newcommand{\budset}{\mathcal{B}}
\newcommand{\bacyc}[1]{\mathcal{C}_{#1}}

\newcommand{\sregfix}{\defn, \Gamma_0 \vdash \Delta_0}
\newcommand{\cut}[1]{\langle \form, \normalfont\textrm{cut} \rangle}
\newcommand{\mset}{\mathcal{M}}
\newcommand{\cntmod}{\mathcal{I}_\omega}
\newcommand{\sched}{(\xi_i)_i}
\newcommand{\terms}{\normalfont\textrm{Terms}(\vocab)}
\newcommand{\eqrel}{\sim}

\newcommand{\limseq}{\defn, \Gamma_\omega \vdash \Delta_\omega}
\newcommand{\schtree}{\deriv_\omega}
\newcommand{\untbranch}{\pth^{\normalfont\textrm{u}}}

\newcommand{\tvstrseq}{(\tvstrtwo_i)_{0 \leq i \leq \beta}}
\newcommand{\varset}{\mathcal{V}}
\newcommand{\subpth}{(n_{s}, \dots, n_{e})}
\newcommand{\concsubpth}{(n_{e +1}, \dots, n_{r+1})}
\newcommand{\tsubpth}{(\trace_{s}, \dots, \trace_{e})}
\newcommand{\pthjust}{\mathbf{p}}
\newcommand{\subpthcor}[1]{(n_{s_{#1}}, \dots, n_{e_{#1}})}
\newcommand{\tsubpthcor}[1]{(\trace^{#1}_{s_{#1}}, \dots, \trace^{#1}_{e_{#1}})}

\newcommand{\ih}{\normalfont\textrm{IH}}
\newcommand{\ihs}[1]{\normalfont\textrm{IH}_{#1}}
\newcommand{\ihf}[2]{\normalfont\textrm{IH}_{#1}(#2)}
\newcommand{\setexp}[2]{\{ #1 \mid #2 \}}
\newcommand{\setexpg}{\setexp{\bar{z}}{\form}}
\newcommand{\substpos}[2]{[#1 \, {\slash^+} \, #2]}
\newcommand{\substposih}[1]{\substpos{\ihs{#1}}{#1}}
\newcommand{\substposihpi}{\substposih{\Pi}}
\newcommand{\substneg}[2]{[#1 \, {\slash^-} \, #2]}
\newcommand{\substnegih}[1]{\substneg{\ihs{#1}}{#1}}
\newcommand{\substnegihpi}{\substnegih{\Pi}}

\begin{document}

\maketitle

\begin{abstract}
	Inductive definitions are an important form of knowledge in mathematics and computer science.
	Two common techniques to prove theorems about inductive definitions are the principle of mathematical induction and the principle of infinite descent.
	To formalize these principles, Brotherston and Simpson introduced the sequent calculus proof systems \lkid, for mathematical induction, and \lkidinf and \clkid, for infinite descent.
	\lkidinf is an infinitary system, in which proofs are infinite trees, and \clkid a cyclic system, in which proofs are finite graphs.
	However, these calculi restrict to monotone definitions, while inductive definitions are generally non-monotone.
	The logic \foid extends classical first-order logic with non-monotone inductive definitions.
	In earlier work, we provided a formalization of the principle of mathematical induction for non-monotone definitions by extending \lkid to a sequent calculus \scfoid for \foid.
	In this paper, we provide a formalization of the principle of infinite descent for non-monotone definitions by extending \lkidinf and \clkid to sequent calculi \scinf resp.~\sccyc for \foid.
	Furthermore, we extend several proof-theoretic results for \lkidinf and \clkid to \scinf and \sccyc regarding soundness, completeness, cut-elimination and the relation with \scfoid.
\end{abstract}

\section{Introduction} \label{sec:introduction}

\subsection{Non-Monotone Inductive Definitions}

In the field of Knowledge Representation and Reasoning (KRR), first-order logic (\fo) is commonly used as a modeling language.
However, \fo is incapable of expressing some elementary and useful notions such as the set of natural numbers or the transitive closure of a graph.
To enhance expressivity, language constructs can be added to \fo.
A particularly useful language constructs is that of inductive definitions.

Inductive definitions constitute an important form of knowledge, specifying a wide range of notions in mathematics and computer science.
A prototypical example of an inductive definition the following definition of the natural numbers, which consists of two rules:
\begin{enumerate}
	\item $0$ is a natural number.
	\item If $n$ is a natural number, then so is its successor $s(n)$.
\end{enumerate}
An inductive definition specifies how to construct its defined set through the process of \emph{iterated rule application}, which starts from the empty set and consequently applies rules until saturation.
The natural number definition is a \emph{monotone} definition, meaning that once a rule is applicable, it remains applicable.
Not all inductive definitions are monotone, as demonstrated by the following definition of the satisfaction relation $\models$ for propositional logic:
\begin{enumerate}
	\item $\str \models P$ if $P \in \str$.
	\item $\str \models \form \land \formtwo$ if $\str \models \form$ and $\str \models \formtwo$.
	\item $\str \models \lnot \form$ if not $\str \models \form$.
\end{enumerate} 
(Here we view structures $\str$ as sets of propositional symbols $P$ and assume formulae $\form$ to be composed of propositional symbols with conjunctions $\land$ and negations $\neg$.)
This definition is non-monotone because of the third rule. 
At the start of the construction process, the defined relation is empty; hence, the third rule applies to all pairs $(\str, \form)$.
It is crucial, however, not to apply this rule prematurely, for the construction process may later derive $\str \models \form$, thereby invalidating the rule.
For non-monotone definitions, %
rules are to be applied \emph{safely},
i.e., when it is certain that the condition of a rule cannot be falsified later in the process.

The logic \foid \cite{tocl/DeneckerT08} is an extension of \fo with a language construct to express non-monotone inductive definitions.
Syntactically, \foid represents inductive definitions as sets of \emph{definitional rules}.
These are expressions of the form $\defrul$, where $\form$ can be any \fo-formula.
For instance, the natural number definition can be formalized through the following \foid-definition:
\begin{equation*}
	\defin{
		\mathit{Nat}(\mathit{zero}) \rul \top \\
		\forall n: \mathit{Nat}(\mathit{succ(n)}) \rul \mathit{Nat}(n)
	}
\end{equation*}

\foid has historical roots in Logic Programming, as it grew from research to assign declarative meaning to logic programs with negation as non-monotone inductive definitions \cite{tocl/DeneckerBM01}.
Hence, it is no coincidence that the rule-based \foid-definitions are suitable to capture (fragments of) logic programs.
Furthermore, the \emph{well-founded semantics} of \foid is strongly based on the homonymous semantics for logic programs \cite{GelderRS91}.
Denecker and Vennekens \cite{KR/DeneckerV14,Denecker98} argued that the well-founded semantics captures the construction processes behind non-monotone inductive definitions and the essential principle of safe rule application.

\foid serves as a formal scientific study of inductive definitions as they occur in mathematical texts.
It provides a more general account of inductive definitions than alternative logics, which commonly impose syntactic restrictions on their definitions such as \emph{positivity} \cite{ajm/Post43,Spector61,Moschovakis74,Aczel77} \shortonly{\,} and \emph{stratification} \cite{Kreisel63,Feferman70,MartinLoef71,BuchholzFPS81}. 
Positivity forbids negation in bodies of definitions rules, yielding in particular a notion of monotone definition.
Stratification allows for negation, under the condition that the definition can be split into a hierarchy of monotone definitions, in which each defined predicate depends negatively only on defined predicates of a lower level.
While less restrictive than positivity, stratification still excludes many examples of non-monotone inductive definitions \cite{lpnmr/VandenEedeVD24,arxiv}.
\foid does not impose any such syntactic constraints on its definitions, thereby capturing a more general class of inductive definitions.

Discarding stratification allows for \emph{non-total} definitions, for which the construction process terminates in a state where rules are applicable but none safely.
Intuitively, non-total definitions correspond to nonsensical or paradoxical definitions.
A simple example is $\defintight{P \rul \lnot P}$, which defines the propositional symbol $P$ as its own negation.
Non-total definitions are anomalies, to be avoided in practice.
But they are interesting anomalies, especially from the perspective of \foid as a formal scientific study of inductive definitions.
Natural language permits nonsensical definitions, and \foid offers tools to formally study these definitions and to distinguish them from sensible, i.e., total definitions.

\subsection{Proofs with Inductive Definitions}

As pointed out by Brotherston and Simpson \cite{jlc/BrotherstonS11}, two common principles to prove theorems about inductive definitions are \emph{mathematical induction} and \emph{infinite descent}.
To formalize these principles, they introduced the \emph{sequent calculi} \lkid, for mathematical induction, and \lkidinf and \clkid, for infinite descent.
A \emph{sequent calculus} is a style of proof system developed by Gentzen \cite{Gentzen35}, well-known for its theoretical elegance and goal-directed approach to theorem proving.
In the \emph{finitary} calculus \lkid, proofs take the form of finite trees, in the \emph{infinitary} calculus \lkidinf, they take the form of possibly infinite trees, and in the \emph{cyclic} calculus \clkid, they take the form of finite graphs.
Cyclic proofs constitute an active research area with recent work in several settings, including 
verification \cite{ccs/LinkerSCB25,tcs/Ikebuchi25,arxiv/BrotherstonLDO25,ijcar/RooduijnKS24,acmpl/TsukadaU22,pldi/JonesOR22,cpldi/ItzhakyPPRS21,jar/TellezB20,cpp/RoweB17}, 
arithmetic \cite{csl/CurziM26,fscd/DasM23,fscd/Das21,lmcs/Das20,fsscs/Simpson17,lics/BerardiT17}, 
fixpoint logics \cite{ljiglp/AfshariEL24,llc/AfshariL22,aratrm/Ashari21,csl/KoriTK21} 
and foundations \cite{mscs/AfshariW24}.
The sequent calculi by Brotherston and Simpson \cite{jlc/BrotherstonS11} restrict to positive definitions.
In previous work \cite{lpnmr/VandenEedeVD24,arxiv}, we extended \lkid to a sequent calculus \scfoid for \foid, thereby obtaining a formalization of %
mathematical induction for non-monotone %
definitions.

The principle of \emph{infinite descent} has first been rigorously described by Fermat \cite{MahoneyM94}, even though its usage dates back at least to the ancient Greeks.\footnote{
	Notably, Euclid's proof for the irrationality of $\sqrt{2}$ is essentially a proof by infinite descent.
}
A proof by infinite descent is a special kind of proof by contradiction, relying on the well-foundedness of sets.
It is commonly used to prove that a property $P$ holds for all natural numbers, by assuming the existence of a number $n$ that does not satisfy $P$, and deriving the existence of a smaller number $m$ that does not satisfy $P$ either.
Repeating this reasoning leads to an infinitely descending sequence of the natural numbers, which does not exist.
Even though infinite descent has historically been used in the setting of natural numbers, the principle applies to any inductively defined set.
Brotherston and Simpson formally showed this to be the case for positive definitions \cite{jlc/BrotherstonS11} and in this paper, we formally show it for general, i.e., non-monotone definitions.

\subsection{Contributions}

In this work, we extend \lkidinf and \clkid to sequent calculi \scinf resp.~\sccyc for \foid, thus 
obtaining a formalization of 
the principle of 
infinite descent for non-monotone inductive definitions.
Besides inductive definitions, our calculi \scinf and \sccyc can %
prove theorems about logic programs under the well-founded semantics.
We establish several proof-theoretic results for \scinf and \sccyc, extending results for \lkidinf and \clkid:
\begin{itemize}
	\item \scinf and \sccyc are sound.
	\item \scinf is complete.
	\item It is decidable whether a proof in \sccyc satisfies the \emph{trace condition}, ensuring soundness.
	\item Each proof in \sccyc can be transformed to a proof in %
	\emph{cycle normal form}.
	\item For proofs in cycle normal form, finitary alternatives to the trace condition exist.
	\item In \scinf, the cut rule can be eliminated on a fragment of positive definitions, and its instances can be \emph{restricted} on a fragment of stratified definitions.
	\item \scinf is stronger than \sccyc and \sccyc is stronger than \scfoid.
\end{itemize}

\subsection{Overview}

In Section \ref{sec:foid}, we introduce the syntax and semantics of \foid.
In Section \ref{sec:infinite-proofs}, we introduce the infinitary calculus \scinf and in Section \ref{sec:cyclic-proofs} the cyclic calculus \sccyc.
In Section \ref{sec:generic-calculus}, we show that \scinf is an instance of a generic infinitary sequent calculus studied by Brotherston \cite{Brotherston2006PhD}.
As a consequence, we obtain the results on soundness, decidability, cycle normalization and the finitary soundness conditions.
In Sections \ref{sec:completeness}, \ref{sec:cut-elimination} and \ref{sec:relation-calculi}, we establish the results on completeness, cut-elimination and the relation between the three calculi, respectively.
We conclude in Section \ref{sec:conclusion} and point to avenues for further research.
Further conceptual details and proofs are included in the appendix.

\section{FO(ID)} \label{sec:foid}

Large parts of this section appear in earlier papers \cite{tocl/DeneckerBM01,tocl/DeneckerT08,KR/DeneckerV14,lpnmr/VandenEedeVD24,arxiv}.

\subsection{Syntax} \label{sec:syntax}

	The syntax of \foid extends the syntax of first-order logic (\fo), which is defined as usual. 
	A \emph{vocabulary} is a set of non-logical symbols, which are subdivided into predicate symbols and function symbols. 
	We write $\ar{\nlsym}$ to refer to the arity of a predicate or function symbol $\nlsym$, and $\nlsym/n$ to indicate that $\nlsym$ has arity $n$.
	A \emph{propositional symbol} is a $0$-ary predicate symbol, and an \emph{object symbol} is a $0$-ary function symbol.\footnote{
		We do not explicitly distinguish between variables and constants.
		Such distinction can be made implicitly by seeing free occurrences of objects symbols %
		as constants and bound occurrences as variables.
	}
	An occurrence of an object symbol $x$ in a formula $\form$ is \emph{bound} if it is in the scope of a quantified subformula ${\exists x: \formtwo}$ or ${\forall x: \formtwo}$ of $\form$, and is \emph{free} otherwise.
	The set of freely occurring object symbols in a formula $\form$ is denoted by $\free{\form}$. 
	For a set of formulae $\Gamma$, we write $\free{\Gamma}$ for $\cup_{\form \in \Gamma} \free{\form}$. 
	Given a term $t$, an object symbol $x$ and a formula $\form$, we denote by $\form[t/x]$ the formula obtained from $\form$ by replacing all free occurrences of $x$ in $\form$ by $t$. 
	The substitution is \emph{safe} if $t$ contains no object symbol $y$ that gets bound in $\form[t/x]$ due to the existence of a free occurrence of $x$ in the scope of a quantification $\forall y$ or $\exists y$ within $\form$. 
	We will silently assume that such quantifications in $\form$ are renamed prior to the substitution, so that substitution is always safe. 
	Given a set of formulae $\Gamma$, we write $\Gamma[t/x]$ for the set $\cup_{\form \in \Gamma} \form[t/x]$.
	An occurrence of a subformula $\formtwo$ in a formula $\form$ is said to be \emph{positive} if it is in the scope of an even number of negation symbols $\lnot$ (after unwinding all %
	material implications $\Rightarrow$ in terms of $\land$, $\lor$ and $\lnot$), and is said to be \emph{negative} otherwise.
	We use the symbol $\doteq$ for syntactic equality.

\begin{definition} \label{def:definitions}
		A \emph{definitional rule} (or \emph{rule} for short) is an expression of the form $\forall \bar{x} : P(\bar{t}) \rul \form$, 
		where $\bar{x}$ is a tuple of object symbols, $P$ a predicate symbol, $\bar{t}$ an $\ar{P}$-tuple of terms, and $\form$ an \fo-formula. 
		The atom $P(\bar{t})$ is called the \emph{head} and the formula $\form$ the \emph{body} of the definitional rule. 
		An \emph{(\foid-)definition} is a finite set $\defn$ of definitional rules.
		A predicate symbol $P$ that occurs in the head of a rule of a definition $\defn$ is a \emph{defined predicate} of $\defn$.
		A \emph{defined atom} of $\defn$ is an atom $P(\bar{t})$ where $P$ is a defined predicate of $\defn$.
		All non-logical symbols in $\defn$ that are not defined predicates are called \emph{parameters} of $\defn$. 
		The set of defined predicates of $\defn$ is denoted by $\defp{\defn}$, the set of parameters of $\defn$ by $\pars{\defn}$, and their union by $\sym{\defn}$. 
		We extend the notions of free and bound object symbols to %
		rules and definitions.
\end{definition}

Other frameworks of inductive definitions often restrict to \emph{positive} or \emph{stratified} definitions.

\begin{definition}
	Let $\defn$ be a definition.
	We say that $\defn$ is \emph{positive} if the body $\form$ of every definitional rule $\defrul$ in $\defn$ is composed of atoms by conjunctions and disjunctions (and hence, contains no negations or material implications).
	A \emph{stratification} of $\defn$ is a function $\ell : \defp{\defn} \to \nat$ such that for all $P, Q \in \defp{\defn}$ and every %
	rule $\defrul$ in $\defn$:
	\begin{itemize}
		\item if $Q$ occurs in $\form$, then $\ell(Q) \leq \ell(P)$;
		\item if $Q$ occurs negatively in $\form$, then $\ell(Q) < \ell(P)$.
	\end{itemize}
	We say that $\defn$ is \emph{stratified} if there exists a stratification of $\defn$.
\end{definition}

\begin{example} \label{ex:def_even}
The set of even numbers can be defined by the following \foid-definition of a predicate $\mathit{Even}/1$, in terms of the constant $\mathit{zero}$, the function $\mathit{succ}/1$ and the natural number predicate $\mathit{Nat}/1$:
\[ 
\defin{
	\mathit{Even}(\mathit{zero}) \rul \top \\
	\forall n : \mathit{Even}(\mathit{succ}(n)) \rul \mathit{Nat}(n) \land \lnot \mathit{Even}(n)
} 
\]
This definition admits no stratification $\ell$,
as the second rule would impose $\ell(\mathit{Even}) < \ell(\mathit{Even})$.
\end{example}

	The \emph{definitional implication} $\rul$ should not be confused with material implication $\Rightarrow$.
	An important difference is that definitional rules are not assigned truth values in structures. 
	Definitions, on the other hand, \emph{are} assigned truth values in structures,
	as we will see in Section \ref{sec:well-founded-semantics}. 
	Intuitively, a definition $\defn$ is satisfied in a structure $\str$ if $\str$ interprets the defined predicates of $\defn$ as specified by the rules of $\defn$. 

\foid-formulae are defined as \fo-formulae, but with the addition of the following rule:
\begin{itemize}
	\item $\defn$ is an \foid-formula if $\defn$ is an \foid-definition.
\end{itemize}

\subsection{Well-Founded Semantics} \label{sec:well-founded-semantics}

The well-founded semantics was originally developed for logic programs with negation \cite{GelderRS91}.
Denecker and Vennekens \cite{KR/DeneckerV14,Denecker98} argued that the well-founded semantics captures the semantics of non-monotone inductive definitions by formalizing the \emph{induction} or \emph{construction processes} behind inductive definitions.
We restrict to an intuitive discussion of the well-founded semantics in this section, and refer to Appendix \ref{app:well-founded-semantics} for a more detailed exposition.

The key notion in formalizing the construction process behind inductive definitions is that of \emph{well-founded induction}.
Let $\defn$ be a definition and $\cont$ a $\pars{\defn}$-structure, which we call a \emph{$\defn$-context}.
A \emph{well-founded induction} of $\defn$ in $\cont$ is a (possibly transfinite) sequence $\wfind$ of \emph{three-valued structures}.
Three-valued structures assign a \emph{truth value} to each atom, which is either true ($\true$), false ($\false$) or unknown ($\unknown$).
The three-valued structures $\tvstr_i$ have vocabulary $\sym{\defn}$ and expand $\cont$ (when viewed as a three-valued structure).
The first structure $\tvstr_0$ maps every defined atom to unknown, reflecting that at the start of the construction process, nothing is known yet about the defined predicates.
For every $i$, $\tvstr_{i+1}$ \emph{refines} $\tvstr_i$ by setting some unknown atoms to true or false, given that this can be derived safely from the rules of $\defn$.
Once no more refinements apply, the \emph{well-founded model} $\tvstr_\beta$ of $\defn$ in $\cont$ is reached.
The \emph{well-founded satisfaction relation} is a relation between structures $\str$ and $\foid$-formulae $\form$.
It is defined as the satisfaction relation $\models$ for \fo, but with addition of the following rule:
\begin{itemize}
	\item $\str \modelswf \defn$ if $\str|_{\sym{\defn}}$ is the well-founded model of $\defn$ in $\str|_{\pars{\defn}}$.
\end{itemize}
An \foid-formula $\form$ is said to be \emph{valid} (under the well-founded semantics) if $\str \modelswf \form$ for all structures $\str$ that interpret $\form$, and \emph{invalid} otherwise.

We say that a definition $\defn$ is \emph{(non-)total} in a $\defn$-context $\cont$ if the well-founded model $\wfmod$ of $\defn$ in $\cont$ is (not) two-valued, i.e., if (not) every defined atom of $\defn$ is interpreted as $\true$ or $\false$ by $\wfmod$.
Intuitively, $\defn$ is non-total in $\cont$ if $\defn$ is a nonsensical definition given $\cont$.

\section{An Infinitary Sequent Calculus for Non-Monotone Definitions} \label{sec:infinite-proofs}

In a sequent calculus, proofs take the form of trees or graphs, in which the nodes are labeled with \emph{sequents}.
We define an (\foid-)\emph{sequent} %
to be an expression $\sregseq$, where $\defn$ is an \foid-definition and $\Gamma$ and $\Delta$ are finite sets of \fo-formulae.\footnote{
	In previous work \cite{lpnmr/VandenEedeVD24,arxiv}, we defined an \foid-sequent as an expression of the form $\seq$ such that $\Gamma$ and $\Delta$ are sets of \foid-formulae, which is more general than the notion in this paper.
	In particular, an \foid-sequent in \cite{lpnmr/VandenEedeVD24,arxiv} may contain multiple definitions, definitions may occur on both sides of the sequent, and they may occur as components of complex formulae.
	The sequent calculi presented in this paper support a more general notion of \foid-sequent than we consider.
	For instance, we may allow for sequents of the form $\regseq$ with multiple definitions $\defns$, given that they can be merged into a single definition $\defn = \defn_1 \cup \dots \cup \defn_n$ in a model-preserving way.
	Denecker and Ternovska \cite{tocl/DeneckerT08} investigated under which conditions this is the case.
	For simplicity, we restrict to sequents $\sregseq$ with a single definition $\defn$.
	It is currently unknown how general of a notion of \foid-sequent our calculi support.
}
Semantically, an \foid-sequent $\sregseq$ says that whenever $\defn$ holds (under the well-founded semantics) and all formulae in $\Gamma$ hold, then at least one formula in $\Delta$ holds.
We extend the well-founded satisfaction relation $\modelswf$ to \foid-sequents by letting $\str \modelswf \, \sregseq$ if $\str \modelswf \defn$, $\str \models \bigwedge \Gamma$ and $\str \not\models \bigvee \Delta$.
An \emph{\fo-sequent} $\seq$, in which $\Gamma$ and $\Delta$ are finite sets of \fo-formulae, corresponds to an \foid-sequent $\sregseq$ in which $\defn$ is empty.

A sequent calculus contains \emph{inference rules}, deriving a \emph{conclusion} from one or more \emph{premises}, all of which are sequents.
Figure \ref{fig:scfo} presents the inference rules of the sequent calculus \scfo for \fo. %
Some rules are only applicable under certain conditions, which are written next to them.
The \emph{logical rules} in \scfo consist of \emph{left} and \emph{right introduction rules} for the connectives $\lnot$, $\lor$, $\land$, $\Rightarrow$, the quantifiers $\forall$ and $\exists$, and for equality $=$, introducing them in the left resp.~right side of the conclusion.
In the introduction rules for connectives and quantifiers, we refer to the formula in the conclusion that is not in $\Gamma$ or $\Delta$ as the \emph{active} formula, and to the formulae in the premises that are not in $\Gamma$ or $\Delta$ as the \emph{auxiliary} formulae.

\begin{figure}[t]
	\centering
	
	\noindent \textbf{Structural rules}
	
	\vsp
	
	\begin{tabular}{l r}
		$
		\begin{prooftree}[center=false]
			\hypo{ }
			\infer1[$\Gamma \cap \Delta \neq \emptyset$ or $\bot \in \Gamma$ or $\top \in \Delta$ (ax)]{\Gamma \vdash \Delta}
		\end{prooftree}
		$ & $
		\begin{prooftree}[center=false]
			\hypo{\Gamma' \vdash \Delta'}
			\infer1[$\Gamma' \subseteq \Gamma$, $\Delta' \subseteq \Delta$ (wk)]{\Gamma \vdash \Delta}
		\end{prooftree}
		$
		\\
		$
		\begin{prooftree}[center=false]
			\hypo{\Gamma \vdash \Delta}
			\infer1[(subst)]{\Gamma[t / x] \vdash \Delta[t/x]} %
		\end{prooftree}
		$ & $
		\begin{prooftree}[center=false]
			\hypo{\Gamma \vdash \form, \Delta}
			\hypo{\Gamma, \form \vdash \Delta}
			\infer2[(cut)]{\Gamma \vdash \Delta} %
		\end{prooftree}
		$
	\end{tabular}
	
	\vsp
	
	\noindent \textbf{Logical rules}
	
	\vsp
	
	\begin{tabular}{l r}
		$
		\begin{prooftree}[center=false]
			\hypo{\Gamma \vdash \form, \Delta}
			\infer1[($\lnot$L)]{\Gamma, \lnot \form \vdash \Delta}
		\end{prooftree}
		$ & $ 
		\begin{prooftree}[center=false]
			\hypo{\Gamma, \form \vdash \Delta}
			\infer1[($\lnot$R)]{\Gamma \vdash \lnot \form, \Delta}
		\end{prooftree}
		$
		\\
		$
		\begin{prooftree}[center=false]
			\hypo{\Gamma, \form \vdash \Delta}
			\hypo{\Gamma, \formtwo \vdash \Delta}
			\infer2[($\lor$L)]{\Gamma, \form \lor \formtwo \vdash \Delta}
		\end{prooftree}
		$ & $ 
		\begin{prooftree}[center=false]
			\hypo{\Gamma \vdash \form, \formtwo, \Delta}
			\infer1[($\lor$R)]{\Gamma \vdash \form \lor \formtwo, \Delta}
		\end{prooftree}
		$
		\\
		$
		\begin{prooftree}[center=false]
			\hypo{\Gamma, \form, \formtwo \vdash \Delta}
			\infer1[($\land$L)]{\Gamma, \form \land \formtwo \vdash \Delta}
		\end{prooftree}
		$ & $ 
		\begin{prooftree}[center=false]
			\hypo{\Gamma \vdash \form, \Delta}
			\hypo{\Gamma \vdash \formtwo, \Delta}
			\infer2[($\land$R)]{\Gamma \vdash \form \land \formtwo, \Delta}
		\end{prooftree}
		$
		\\
		$
		\begin{prooftree}[center=false]
			\hypo{\Gamma \vdash \form, \Delta}
			\hypo{\Gamma, \formtwo \vdash \Delta}
			\infer2[($\Rightarrow$L)]{\Gamma, \form \Rightarrow \formtwo \vdash \Delta}
		\end{prooftree}
		$ & $ 
		\begin{prooftree}[center=false]
			\hypo{\Gamma, \form \vdash \formtwo, \Delta}
			\infer1[($\Rightarrow$R)]{\Gamma \vdash \form \Rightarrow \formtwo, \Delta}
		\end{prooftree}
		$
		\\
		$
		\begin{prooftree}[center=false]
			\hypo{\Gamma, \form[t/x] \vdash \Delta}
			\infer1[($\forall$L)]{\Gamma, {\forall x : \form} \vdash \Delta}
		\end{prooftree}
		$ & $ 
		\begin{prooftree}[center=false]
			\hypo{\Gamma \vdash \form, \Delta}
			\infer1[$x \not\in \free{\Gamma \cup \Delta}$ ($\forall$R)]{\Gamma \vdash {\forall x: \form}, \Delta}
		\end{prooftree}
		$
		\\
		$
		\begin{prooftree}[center=false]
			\hypo{\Gamma, \form \vdash \Delta}
			\infer1[$x \not\in \free{\Gamma \cup \Delta}$ ($\exists$L)]{\Gamma, {\exists x:  \form} \vdash \Delta}
		\end{prooftree}
		$ & $ 
		\begin{prooftree}[center=false]
			\hypo{\Gamma \vdash \form[t/x], \Delta}
			\infer1[($\exists$R)]{\Gamma \vdash {\exists x: \form}, \Delta}
		\end{prooftree}
		$
		\\
		$
		\begin{prooftree}[center=false]
			\hypo{\Gamma[s/x, t/y] \vdash \Delta[s/x, t/y]}
			\infer1[($=$L)]{\Gamma[t/x, s/y], {t=s} \vdash \Delta[t/x, s/y]}
		\end{prooftree}
		$ & $ 
		\begin{prooftree}[center=false]
			\hypo{\phantom{\Gamma \vdash \Delta}}
			\infer1[($=$R)]{\Gamma \vdash{ t=t}, \Delta}
		\end{prooftree}
		$
	\end{tabular}

	\caption{
		Inference rules of \scfo, based on \cite{jlc/BrotherstonS11}.
	 	\scfo can be seen as an extension of Gentzen's sequent calculus \lk for classical first-order logic \cite{Gentzen35} with introduction rules for equality.
	}
	\label{fig:scfo}
\end{figure}

The inference rules of the infinitary sequent calculus \scinf extend the inference rules of \scfo with a left and a right introduction rule for defined atoms.
The \textbf{right introduction rule for defined atoms} takes the following form:
\begin{equation*} 
	\begin{prooftree}
		\hypo{\defn, \Gamma \vdash \form[\bar{s}/\bar{x}], \Delta}
		\infer1[(def R)]{\defn, \Gamma \vdash P(\bar{t}[\bar{s}/\bar{x}]), \Delta}
	\end{prooftree}
\end{equation*}
Here $\defn$ is a definition with a rule $\forall \bar{x}: P(\bar{t}) \rul \form$ and $\bar{s}$ is a tuple of object symbols with the same length as $\bar{x}$.
Intuitively, (def R) allows deriving heads of definitional rules from bodies.

The \textbf{left introduction rule for defined atoms} in \scinf is a case distinction rule:\footnote{
	We name the rule (case) instead of (def L) to distinguish it from the induction rule (ind) in \scfoid, which serves as the left introduction rule for defined atoms in \scfoid.
}
\begin{equation*}
	\begin{prooftree}
		\hypo{\text{case distinctions}}
		\infer1[(case)]{\defn, \Gamma, P(\bar{v}) \vdash \Delta}
	\end{prooftree}
\end{equation*}
Here $P(\bar{v})$ is a defined atom of $\defn$.
The rule has a \emph{case distinction} for every rule $\defrul$ in $\defn$ defining $P$.
This is a premise $\defn, \Gamma, {\bar{v} = \bar{t}[\bar{y}/\bar{x}]}, \form[\bar{y}/\bar{x}] \vdash \Delta$, where $\bar{y}$ is a tuple of object symbols with the same length as $\bar{x}$, none of which occurs freely in $\defn$, $\Gamma$, $P(\bar{v})$ or $\Delta$.

\begin{definition} \label{def:inf-derivation}  %
		Let $(N, E, r, \seqfun)$ be a labeled, rooted, directed tree, where $N$ is the set of \emph{nodes}, $E \subseteq N \times N$ the set of \emph{edges}, $r \in N$ the \emph{root}, and $\seqfun: N \to \seqs$ the \emph{labeling function}.
		Here $\seqs$ denotes the set of \foid-sequents. %
		Let $\seqshort = \seqfun(r)$, and let $\rulfun$ be a partial function sending nodes to inference rules in \scinf.
		Suppose that for every $n \in N$:
		\begin{itemize}
			\item if $\rulfun(n)$ is defined, then $\seqfun(n)$ is the conclusion and $\{ \seqfun(m) \mid (n,m) \in E \}$ the set of premises of an instance of the rule $\rulfun(n)$;
			\item if $\rulfun(n)$ is undefined, then $n$ has no children, %
			and we call $n$ a \emph{bud}. 
		\end{itemize}
		Then we call $\deriv = (N, E, r, \seqfun, \rulfun)$ an \emph{\scinf-derivation} of $\seqshort$.
		If $\deriv$ has no buds, we call it an \emph{\scinf-pre-proof} of $\seqshort$.
\end{definition}

Note that Definition \ref{def:inf-derivation} does not impose finiteness on derivations. 
Hence, an \scinf-derivation may have \emph{infinite branches}.
As illustrated in Example \ref{ex:bad-pre-proof}, not all infinite branches reflect a sound reasoning.
Soundness is guaranteed by the \emph{trace condition}, which intuitively says that along every infinite branch, a definition is ``unfolded'' infinitely often.

\begin{definition} \label{def:trace}
	Let $\deriv = (N, E, r, \seqfun, \rulfun)$ be an \scinf-derivation %
	and $\pth = (n_i)_{i}$ a path in $\deriv$.
	A \emph{trace} along $\pth$ is a sequence $(\trace_i)_{i}$ of \fo-formulae
	such that for all $i$:
	\begin{itemize}
		\item if $\seqfun(n_i) = \Gamma_i \vdash \Delta_i$, then $\trace(n_i) \in \Gamma_i$ or $\trace(n_i) \in \Delta_i$;
		\item if $\rulfun(n_i) = \textup{(subst)}$, then $\trace_i = \trace_{i+1}[t/x]$, where $[t/x]$ is the substitution associated with the instance of \textup{(subst)} applied at $n_i$;
		\item if $\rulfun(n_i) = \textup{($=$L)}$ and ${t=s}$ is the equality associated with the instance of \textup{($=$L)}, then there exists an  \fo-formula $\form$ and object symbols $x$ and $y$ such that $\trace_i = \form[t/x, s/y]$ and $\trace_{i+1} = \form[s/x, t/y]$;
		\item if $\rulfun(n_i) \notin \{\textup{(subst)}, \textup{($=$L)}\}$ and $\trace_i$ is the active formula of the instance of $\rulfun(n_i)$, then $\trace_{i+1}$ is an auxiliary formula of this instance;
		if furthermore $\rulfun(n_i) = \textup{(case)}$, %
		then $i$ is said to be a \emph{progression point}; %
		\item if $\rulfun(n_i) \notin \{\textup{(subst)}, \textup{($=$L)}\}$ and $\trace_i$ is not the active formula of the instance, then $\trace_{i} = \trace_{i+1}$.
	\end{itemize}
	
	An \emph{(infinite) branch} in $\deriv$ is an (infinite) path in $\deriv$ starting at the root $r$ of $\deriv$.
	We say that $\deriv$ satisfies the \emph{trace condition} if for every infinite branch $\pth$ in $\deriv$, %
	there exists an \emph{infinitely progressing trace}, i.e., a trace with infinitely many progression points, along some tail of $\pth$.
	An \emph{\scinf-proof} is an \scinf-pre-proof that satisfies the trace condition.
	If there exists an \scinf-proof of an \foid-sequent $\seqshort$, we call $\seqshort$ an \emph{\scinf-theorem}.
\end{definition}

The trace condition entails soundness w.r.t.~the well-founded semantics by appealing to the well-foundedness of the ordinals, through an argument by contradiction.
On a high level, the reasoning goes as follows.
Suppose that $\deriv$ is an \scinf-proof of a sequent $\sregseq$, but that $\sregseq$ is invalid.
Then we can construct an infinite path $\pth= (n_i)_{i}$ in $\deriv$, a structure $\str$ such that $\str$ is a countermodel of $\seqfun(n_i)$ %
for all $i$,\footnote{
	A structure $\str$ is a \emph{countermodel} for an \foid-sequent $\sregseq$ if $\str \modelswf \defn$, $\str \models \bigwedge \Gamma$ and $\str \not\models \bigvee \Delta$.
}  
and a sequence of three-valued structures $\tvstrseq$ such that for all $i$,
$\tvstrtwo_i$ expands $\tvstr_i$ for a well-founded induction $\wfind$ of $\defn$ in $\str|_{\pars{\defn}}$.
Furthermore, we can construct $\pth$, $I$ and $\tvstrseq$ in such way that if $(\trace_i)_i$ is a trace along $\pth$, then the sequence of ordinals $(\alpha_i)_i$ defined by 
\begin{equation*}
	\alpha_i \text{ is the least } \alpha \text{ such that } \trace_i^{\tvstrtwo_\alpha} \in \{\true, \false\}
\end{equation*}
has the property that for all $i$: $\alpha_{i+1} \leq \alpha_{i}$ and $\alpha_{i+1} < \alpha_{i}$ if $i$ is a progression point of $(\trace_i)_i$.
Since $\deriv$ satisfies the trace condition, there exists an infinitely progressing trace $(\trace_i)_i$ along $\pth$.
This implies the existence of an infinite sequence of ordinals $(\alpha_i)_i$ such that $\alpha_{i+1} \leq \alpha_{i}$ for all $i$ and $\alpha_{i+1} < \alpha_{i}$ for infinitely many $i$.
However, this contradicts the well-foundedness of the ordinals.

In Section \ref{sec:generic-calculus}, we will rigorously establish the soundness of \scinf.

\begin{example}
	Let $\defn$ be the definition of $\mathit{Even}$ from Example \ref{ex:def_even}, and let $\Gamma$ denote the Peano axioms $\forall x: \forall y: {\mathit{succ}(x) = \mathit{succ}(y)} \Rightarrow {x=y}$ and $\forall x: {\mathit{succ}(x) \neq \mathit{zero}}$.\footnote{
		We use $t \neq s$ as shorthand notation for $\lnot {(t=s)}$.
	}
	Consider the sequent $\defn, \Gamma \vdash \lnot \mathit{Even}(\mathit{succ}(\mathit{zero}))$, which says that, given $\defn$ and $\Gamma$, the number one is not even.
	Below, we display an \scinf-derivation $\deriv$ of this sequent, rendered in a typical proof format, using horizontal bars for rule applications.
	For space reasons, we denote $\mathit{zero}$ by $0$, abbreviate every other non-logical symbol to its first letter, and split $\deriv$ into two parts.
	\begin{equation*} \small
		\begin{prooftree}
			\infer0[(ax)]{\defn, s(0) = s(n) \vdash s(0) = s(n), E(n)}
			
			\infer0[(ax)]{\defn \vdash \top}
			\infer1[(def R)]{\defn \vdash E(0)}
			\infer1[($=$L)]{\defn, {0=n} \vdash E(n)}
			\infer1[(wk)]{\defn, {0=n}, {s(0) = s(n)} \vdash E(n)}
			
			\infer2[($\Rightarrow$L)]{\defn, {s(0) = s(n)} \Rightarrow {0=n}, {s(0) = s(n)} \vdash E(n)}
		\end{prooftree}
	\end{equation*}

	\begin{equation*} \small
	\begin{prooftree}%
		\infer0[(ax)]{{s(0) = 0} \vdash {s(0) = 0} }
		\infer1[($\lnot$L)]{{s(0) \neq 0}, {s(0) = 0} \vdash}
		\infer1[($\forall$L)]{\forall x: {s(x) \neq 0}, {s(0) = 0} \vdash}
		\infer1[(wk)]{\defn, \Gamma, {s(0) = 0}, \top \vdash}

		\hypo
		{\defn, {s(0) = s(n)} \Rightarrow {0=n}, {s(0) = s(n)} \vdash E(n)}
		\infer1[($\forall$L)]{\defn, \forall y: {s(0) = s(y)} \Rightarrow {0=y}, {s(0) = s(n)} \vdash E(n)}
		\infer1[($\forall$L)]{\defn,\forall x: \forall y: {s(x) = s(y)} \Rightarrow {x=y}, {s(0) = s(n)} \vdash E(n)}
		\infer1[(wk)]{\defn, \Gamma,{s(0) = s(n)}, N(n) \vdash E(n)}
		\infer1[($\lnot$L)]{\defn, \Gamma,{s(0) = s(n)}, N(n), \lnot E(n) \vdash}
		\infer1[($\land$L)]{\defn, \Gamma,{s(0) = s(n)}, N(n) \land \lnot E(n) \vdash}
		
		\infer2[(case)]{\defn, \Gamma, E(s(0)) \vdash}
		\infer1[($\lnot$R)]{\defn, \Gamma \vdash \lnot E(s(0))}
	\end{prooftree}
	\end{equation*}
	Since $\deriv$ has no buds, it is an \scinf-pre-proof.
	Since it has no infinite branches, $\deriv$ trivially satisfies the trace condition and hence, it is an \scinf-proof of $\defn, \Gamma \vdash \lnot \mathit{Even}(\mathit{succ}(\mathit{zero}))$.
	
	Reading the proof bottom-up, it corresponds to the following informal reasoning. 
	If one were an even number, it would have to be equal to zero or to the successor of a non-even natural number $n$.
	Since one is not zero, the latter case must hold.
	As one is only the successor of zero, zero must be a non-even natural number.
	However, zero is even.
\end{example}

\begin{example} \label{ex:bad-pre-proof}
Let $\defn \doteq \defintight{P \rul P}$.
The well-founded model of $\defn$ (in any $\defn$-context) interprets $P$ as false.
Consider the following \scinf-pre-proof:
\begin{equation*}
	\begin{prooftree}
		\hypo{\vdots}
		\infer[no rule]1[]{\defn \vdash P}
		\infer1[(def R)]{\defn \vdash P}
		\infer1[(def R)]{\defn \vdash P}
	\end{prooftree}
\end{equation*}
Since it has no applications of (case), this \scinf-pre-proof does not satisfy the trace condition.
Therefore, it is not an \scinf-proof.
\end{example}

\begin{example} \label{ex:inf-proof-nat-even-odd}
	Let $\defn_N$ be the definition of $\mathit{Nat}$ from Section \ref{sec:introduction}, and consider the following (mutual) definition of $\mathit{Even}$ and $\mathit{Odd}$:
	\begin{equation*}
		\defn_{E, O} \doteq \defin{
			\mathit{Even}(\mathit{zero}) \rul \top\\
			\forall n: \mathit{Even}(\mathit{succ}(n)) \rul \mathit{Odd}(n)\\
			\forall n: \mathit{Odd}(\mathit{succ}(n)) \rul \mathit{Even}(n)
		}
	\end{equation*}
	Let $\defn = \defn_N \cup \defn_{E, O}$.
	Consider the sequent $\defn, \mathit{Nat}(n) \vdash \mathit{Even}(n), \mathit{Odd}(n)$, which says that, under $\defn$, any natural number is even or odd (as $n$ can be interpreted arbitrarily).
	Below, we display a segment of an infinite \scinf-pre-proof $\deriv$ of this sequent.
	As before, we denote $\mathit{zero}$ by $0$ and abbreviate every other non-logical symbol to its first letter.\footnote{
		This example comes from \cite{jlc/BrotherstonS11}.
	}
	\begin{equation*}
		\begin{prooftree}
						\infer0[(ax)]{\Phi \vdash \top}
						\infer1[(def R)]{\Phi \vdash E(0)}
						\infer1[($=$L)]{\Phi, n=0 \vdash E(n)}
						\infer1[(wk)]{\Phi, n=0, \top \vdash E(n), O(n)}
						\hypo{\vdots \hspace{6mm}}
						\infer[no rule]1[]{\Phi, \underline{N(m)} \vdash O(m), E(m)}
						\infer1[(def R)]{\Phi, \underline{N(m)} \vdash O(m), O(s(m))}
						\infer1[(def R)]{\Phi, \underline{N(m)} \vdash E(s(m)), O(s(m))}
						\infer1[($=$L)]{\Phi, n=s(m), \underline{N(m)} \vdash E(n), O(n)}
						\infer2[(case)]{\Phi, \underline{N(n)} \vdash E(n), O(n)}
					\end{prooftree}
	\end{equation*}
	We underlined a trace along %
	the single infinite branch in $\deriv$.
	Since this trace has infinitely many progression points, %
	$\deriv$ satisfies the trace condition, and hence, it 
	is an \scinf-proof.
	
	The above \scinf-proof corresponds to the following informal proof by infinite descent.
	Suppose that $n$ is a natural number that is neither even nor odd.
	Since $n$ is a natural number, it is either zero or the successor of a smaller natural number $m$.
	As zero is even, $n$ must be the successor of a natural number $m$.
	Note that $m$ is neither even nor odd, since otherwise its successor $n$ would be even or odd.
	Repeating this reasoning, we obtain an infinite sequence of decreasing natural numbers, which does not exist.
\end{example}

\begin{example} \label{ex:inf-proof-even-implies-double}
	Consider the definition
	\begin{equation*}
		\defn_D \doteq \defin{
			\mathit{Double}(\mathit{zero}) \rul \top \\
			\forall n: \mathit{Double}(\mathit{succ}(\mathit{succ}(n))) \rul \mathit{Double}(n)
		} 
	\end{equation*}
	and note that it is an alternative, monotone definition of the even numbers.
	Let $\defn_N$ be the definition of $\mathit{Nat}$ from Section \ref{sec:introduction} and $\defn_E$ the definition of $\mathit{Even}$ from Example \ref{ex:def_even}, and let %
	$\defn = \defn_N \cup \defn_E \cup \defn_D$.
	Below, we display a segment of an infinite \scinf-pre-proof $\deriv$ of the sequent %
	$\defn, \mathit{Even}(n) \vdash \mathit{Double}(n)$.
	As before, we denote $\mathit{zero}$ by $0$, abbreviate every other non-logical symbol to its first letter, and split $\deriv$ into two parts.
	\begin{equation*}\small
		\begin{prooftree}
			\infer0[(ax)]{\defn \vdash \top}
			\infer1[(def R)]{\defn \vdash E(0)}
			\infer1[($=$L)]{\defn, m=0 \vdash E(m)}
			\infer1[(wk)]{\defn, m=0, \top \vdash E(m), D(s(m))}
			
			\infer0[(ax)]{\defn, N(p) \vdash N(p), D(p)}
			\hypo{\hspace{3mm} $\vdots$}
			\infer[no rule]1[]{\defn, \underline{E(p)} \vdash D(p)}
			\infer1[(wk)]{\defn, N(p), \underline{E(p)} \vdash D(p)}
			\infer1[($\lnot$R)]{\defn, N(p) \vdash \underline{\lnot E(p)}, D(p)}
			\infer2[($\land$R)]{\defn, N(p) \vdash \underline{N(p) \land \lnot E(p)}, D(p)}
			\infer1[(def R)]{\defn, N(p) \vdash \underline{E(s(p))}, D(p)}
			\infer1[(def R)]{\defn, N(p) \vdash \underline{E(s(p))}, D(s(s(p)))}
			\infer1[($=$L)]{\defn, m=s(p), N(p) \vdash \underline{E(m)}, D(s(m))}
			\infer2[(case)]{\defn, N(m) \vdash \underline{E(m)}, D(s(m))}
		\end{prooftree}
	\end{equation*}

	\begin{equation*}\small
	\begin{prooftree}
		\infer0[(ax)]{\defn, \top \vdash \top}
		\infer1[(def R)]{\defn, \top \vdash D(0)}
		\infer1[($=$L)]{\defn, n=0, \top \vdash D(n)}
		\hypo{\defn, N(m) \vdash \underline{E(m)}, D(s(m))}
		\infer1[($\lnot$L)]{\defn, N(m), \underline{\lnot E(m)} \vdash D(s(m))}
		\infer1[($\land$L)]{\defn, \underline{N(m) \land \lnot E(m)} \vdash D(s(m))}
		\infer1[($=$L)]{\defn, n = s(m), \underline{N(m) \land \lnot E(m)} \vdash D(n)}
		\infer2[(case)]{\defn, \underline{E(n)} \vdash D(n)} %
	\end{prooftree}
	\end{equation*}
	We underlined a trace along %
	the single infinite branch in $\deriv$.
	Since this trace has infinitely many progression points, %
	$\deriv$ satisfies the trace condition, and hence, it 
	is an \scinf-proof.
\end{example}

\section{A Cyclic Sequent Calculus for Non-Monotone Definitions} \label{sec:cyclic-proofs}

Many infinite proofs can be represented in a finite way.
Our cyclic calculus \sccyc can be seen as the restriction of \scinf to \emph{regular trees}, which are trees with only finitely many distinct subtrees.
It is well-known that regular trees are precisely the tree unravelings of finite graphs \cite{lp/Colmerauer82,panjoc/Mauborgne00,acm/TurbakW01}.

\begin{definition}
	An \emph{\sccyc-derivation} $\deriv = (N, E, r, \seqfun, \rulfun)$ of an \foid-sequent $\seqshort$ is defined as an \scinf-derivation of $\seqshort$, with the difference that $(N, E)$ is a finite graph instead of a (potentially infinite) tree.
	A node $n \in N$ is called a \emph{companion} of a node $m \in N$ in $\deriv$ if %
	$\seqfun(n) = \seqfun(m)$.
	A \emph{repeat function} $\repfun$ for $\deriv$ maps every bud of $\deriv$ to a companion.
	An \emph{\sccyc-pre-proof} $\prf$ of $\seqshort$ is a pair $(\deriv, \repfun)$ of an \sccyc-derivation $\deriv$ of $\seqshort$ and a repeat function $\repfun$ for $\deriv$.
	The \emph{graph} $\graph{\prf}$ of an \sccyc-pre-proof $\prf$ is obtained by identifying every bud $n$ with its companion $\repfun(n)$.
	An \emph{\sccyc-proof} of $\seqshort$ is an \sccyc-pre-proof of $\seqshort$ for which $\graph{\prf}$ satisfies the trace condition.
	If there exists an \sccyc-proof of an \foid-sequent $\seqshort$, we call $\seqshort$ an \emph{\sccyc-theorem}.
\end{definition}

\begin{example}
	The \scinf-proof $\deriv$ from Example \ref{ex:inf-proof-nat-even-odd} can easily be transformed into an \sccyc-proof, by replacing the top node of the displayed segment with the derivation
	\begin{equation*}
		\begin{prooftree}
			\hypo{\Phi, \underline{N(n)} \vdash O(n), E(n)}
			\infer1[(subst)]{\Phi, \underline{N(m)} \vdash O(m), E(m)}
		\end{prooftree}
	\end{equation*}
	and letting $\repfun$ map the upper node to its (unique) companion in $\deriv$.
	In a similar way, the \scinf-proof $\deriv$ from Example \ref{ex:inf-proof-even-implies-double} can be transformed into an \sccyc-proof.
\end{example}

\scinf and \sccyc allow proving non-totality of definitions $\defn$ by deriving sequents of the form $\defn \vdash$.
Indeed, since $\defn \vdash$ has an empty right side, it is semantically equivalent to $\lnot \defn$.
Validity of $\lnot \defn$ means that $\defn$ has no two-valued models; 
in other words, the well-founded model of $\defn$ in any $\defn$-context must be strictly three-valued.

\begin{example} \label{ex:proof-liar}
	Let $\defn \doteq \defintight{P \rul \lnot P}$.
	Consider the following \sccyc-pre-proof $\prf$ of the sequent $\defn \vdash$, where we represent the repeat function $\repfun$ by the annotations $(*)$ and $(\dagger)$:
	\begin{equation*}
		\begin{prooftree}
			\hypo{\defn, \underline{P} \vdash \comp}
			\infer1[($\lnot$R)]{\defn \vdash \underline{\lnot P}}
			\infer1[(def R)]{\defn \vdash  \underline{P} \comptwo}
			\hypo{\defn \vdash \underline{P} \comptwo}
			\infer1[($\lnot$L)]{\defn,  \underline{\lnot P} \vdash}
			\infer1[(case)]{\defn,  \underline{P} \vdash \comp}
			\infer2[(cut)]{\defn \vdash}
		\end{prooftree}
	\end{equation*}
	The graph $\graph{\prf}$ of $\prf$ has two infinite branches, sharing a tail.
	We underlined an infinitely progressing trace along this tail, %
	showing that $\prf$ is an \sccyc-proof.
\end{example}

\begin{example} \label{ex:proof-choice-def}
	Let $\defn \doteq \defintight{
		P \rul \lnot Q \\
		Q \rul \lnot P
	}$. 
	Consider the following \sccyc-pre-proof $\prf$ of $\defn \vdash$ :%
	\begin{equation*}
		\begin{prooftree}
			\hypo{\defn \vdash \underline{P} \comp}
			\infer1[($\lnot$L)]{\defn, \underline{\lnot P} \vdash}
			\infer1[(case)]{\defn, \underline{Q} \vdash}
			\infer1[($\lnot$R)]{\defn \vdash \underline{\lnot Q}}
			\infer1[(def R)]{\defn \vdash \underline{P} \comp}
			\hypo{\defn, \underline{P} \vdash \comptwo}
			\infer1[($\lnot$R)]{\defn \vdash \underline{\lnot P}}
			\infer1[(def R)]{\defn \vdash \underline{Q}}
			\infer1[($\lnot$L)]{\defn, \underline{\lnot Q} \vdash}
			\infer1[(case)]{\defn, \underline{P} \vdash \comptwo}
			\infer2[(cut)]{\defn \vdash}
		\end{prooftree}
	\end{equation*}
	The graph $\graph{\prf}$ of $\prf$ has two infinite branches.
	We underlined an infinitely progressing trace along tails of these branches, showing that $\prf$ is an \sccyc-proof.
\end{example}

\section{A Generic Infinitary Calculus} \label{sec:generic-calculus}

In his PhD thesis \cite{Brotherston2006PhD}, Brotherston studied a generic infinitary calculus \scginf for a generic logic \logic.
	The only requirements on \logic and \scginf are that: (1) \logic has a notion of \emph{sequent},\footnote{
		These \emph{sequents} need not be of the form $\seq$ as defined earlier.
		They could also take the form of formulae $\form$, for instance.
	} 
	a notion of \emph{interpretation}, and a notion of \emph{satisfaction relation} $\models$ between interpretations and sequents; and (2) the inference rules in \scginf are of the form 
	\[
	\begin{prooftree}
		\hypo{\seqabst_1}
		\hypo{\dots}
		\hypo{\seqabst_n}
		\infer3[(R)]{\seqabst}
	\end{prooftree}
	\]
	for sequents $\seqabst_1$, $\dots$, $\seqabst_n$ and $\seqabst$ in \logic.
Brotherston defined a generic notion of trace condition for \scginf-pre-proofs, stating that every infinite branch must have infinitely many \emph{progression points} in some general sense.

Our infinitary calculus \scinf is an instance of the generic infinitary calculus \scginf, and (consequently) our cyclic calculus \sccyc is an instance of the corresponding generic cyclic calculus \scgcyc.
We refer to Appendix \ref{app:generic-calculus} for the proof.
As a consequence, \scinf and \sccyc inherit the properties proven by Brotherston about \scginf and \scgcyc:

\begin{theorem}[Soundness] \label{thm:soundness}
	Let $\seqshort$ be an \foid-sequent.
	If there exists an \scinf-proof or an \sccyc-proof of $\seqshort$, then $\seqshort$ is valid.
\end{theorem}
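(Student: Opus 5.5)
We reduce soundness to Brotherston's generic soundness theorem for \scginf. That theorem states that if every rule is locally sound and every infinite path of countermodels carries an ordinal-valued measure on traces, then every proof with the trace condition proves a valid sequent. More precisely, it requires the following: whenever the conclusion of a rule instance has a countermodel, some premise has a countermodel. It also requires an ordinal trace valuation: for each trace pair along an edge, the ordinal assigned to the premise formula is at most the ordinal assigned to the conclusion formula, and strictly smaller at progression points. Having the rest of the paper set up \scinf as an instance of \scginf, we verify these two requirements. For \sccyc the result then follows, since the unfolding of $\graph{\prf}$ into a tree is an \scinf-proof: it is a regular tree, infinite branches of the unfolding correspond to infinite paths in $\graph{\prf}$, and traces transfer along this correspondence.

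The interpretations we use are pairs $(\str, \tvstrseq)$. Here $\str \modelswf \defn$, and $\tvstrseq$ is a sequence with each $\tvstrtwo_i$ expanding $\tvstr_i$ for a fixed well-founded induction $\wfind$ of $\defn$ in $\str|_{\pars{\defn}}$, where $\tvstrtwo_i$ also interprets the object symbols as $\str$ does. For a formula $\trace$ true (if in $\Gamma$) or false (if in $\Delta$) in $\str$, we let its ordinal be the least $\alpha$ with $\trace^{\tvstrtwo_\alpha}\in\{\true,\false\}$. This is well defined because the limit $\tvstr_\beta$ is two-valued on the relevant atoms, being the restriction of $\str$ when $\str\modelswf\defn$.

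Local soundness is routine for the \scfo rules. For (subst), ($\forall$R), ($\exists$L) and (case) we modify $\str$ on fresh object symbols, which does not affect $\defn$ since those symbols are not free in $\defn$. (def R) is sound because the well-founded model satisfies each rule as a material implication. (case) is sound because in the well-founded model a true defined atom has some rule body true; this is the completion property of well-founded models, which we take from Appendix \ref{app:well-founded-semantics}. For the ordinal conditions, the connective rules satisfy non-increase by monotonicity of three-valued evaluation along the precision order: the least stage at which a compound becomes determined is at least the stage at which the relevant subformula, chosen as the determined disjunct or witness, becomes determined. We must choose the continuation premise and the witness instance accordingly, picking the auxiliary formula with minimal ordinal. (subst) and ($=$L) preserve ordinals because the valuation of a formula depends only on the denotations of its terms.

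The main obstacle is strict decrease at (case). If $P(\bar v)$ is true in $\str$ and first becomes true at stage $\alpha$, then it was set to true at step $\alpha$ because some rule body $\form[\bar{y}/\bar{x}]$ was already true in $\tvstr_{\alpha-1}$. Hence the auxiliary body has ordinal $<\alpha$. This requires that $\alpha$ be a successor ordinal and that the refinement step only sets atoms true when a body is true in the previous three-valued structure. We will rely on the precise definition of well-founded induction in the appendix, in particular that unfounded sets are only ever made false and never true; a true defined atom on the left is exactly what (case) consumes. Equalities $\bar v = \bar t[\bar y/\bar x]$ are parameter facts with ordinal $0$, so they cause no difficulty.
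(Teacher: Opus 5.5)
Your proposal is correct and follows essentially the paper's route. The paper likewise obtains soundness by exhibiting \scinf as an instance of Brotherston's generic calculus (Proposition~\ref{prop:instance-generic-inf-calculus}): it checks rule by rule that a countermodel of the conclusion yields a countermodel of some premise with non-increasing ordinals, gets strict decrease at (case) because the relevant rule body is already true at an earlier stage of the well-founded induction, and handles \sccyc via tree unraveling.

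The differences are cosmetic. You build a fixed well-founded induction into the interpretation (it must be \emph{terminal}, which you should state), whereas the paper uses plain structures and takes the least determining stage over all well-founded inductions. You also leave implicit the non-strict ordinal bound for (def R), which follows from the false-refinement clause $\form_A^{\tvstr'}=\false$ in Definition~\ref{def:refinement}.
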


\begin{theorem}
	It is decidable %
	whether an \sccyc-pre-proof is an \sccyc-proof.
\end{theorem}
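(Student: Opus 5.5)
The plan is to reduce the statement to the decidability result Brotherston proved for the generic cyclic calculus \scgcyc, using the fact (stated above and proved in Appendix \ref{app:generic-calculus}) that \sccyc is an instance of \scgcyc. That result rests on a reduction to Büchi automata, and I will also spell out this reduction concretely so that the argument is self-contained for \foid-sequents.

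\textbf{Step 1: the branch automaton.} Given an \sccyc-pre-proof $\prf = (\deriv, \repfun)$, form the finite graph $\graph{\prf}$. Build a Büchi automaton $\mathcal{B}_1$ whose states are the nodes of $\graph{\prf}$, with the root as initial state and every state accepting. Transitions follow the edges of $\graph{\prf}$, labelled by the target node. Then $\mathcal{B}_1$ accepts exactly the infinite branches of $\graph{\prf}$.

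\textbf{Step 2: the trace automaton.} Build a second automaton $\mathcal{B}_2$ that accepts the node sequences along which, on some tail, an infinitely progressing trace exists. The key finiteness observation is that at each node $n$ a trace can only pass through the finitely many formulae in $\Gamma \cup \Delta$, where $\seqfun(n) = \Gamma \vdash \Delta$. Moreover, whether a pair $(\trace, \trace')$ with $\trace$ at $n$ and $\trace'$ at a child $m$ is a valid trace step, and whether it is a progression point, is decidable by inspecting the rule instance $\rulfun(n)$. The clauses for (subst) and ($=$L) in Definition \ref{def:trace} are syntactic checks on finite formulae. The states of $\mathcal{B}_2$ are a fresh initial state plus pairs $(n,\trace)$. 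From the initial state, the automaton loops nondeterministically while reading nodes and then jumps into some pair $(n,\trace)$; this handles the requirement that the trace only live on a tail. It then moves along trace-step pairs, and accepting states are those reached through a progression step. This is obtained by flagging states with a bit that records whether the last step was a progression point.

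\textbf{Step 3: reduction to inclusion.} The trace condition holds iff $\mathcal{L}(\mathcal{B}_1) \subseteq \mathcal{L}(\mathcal{B}_2)$. Inclusion of Büchi automata is decidable, since Büchi automata can be complemented (Büchi, Safra) and emptiness is decidable. This yields the statement.

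The main obstacle is Step 2: checking that the trace relation between consecutive nodes is finitely and decidably representable. Two points need care. First, for (case), every auxiliary formula counts, including the equality $\bar v = \bar t[\bar y/\bar x]$; the definition permits this, so it is harmless. Second, the clause for ($=$L) existentially quantifies over $\form$, $x$, $y$, but this amounts to a finite matching check because all relevant formulae are fixed and finite. If I relied purely on the instance result from the appendix, this concern would be discharged there, since Brotherston's generic trace pairs are required to be decidable relations on finite sets.
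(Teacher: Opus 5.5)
Your proposal is correct and follows the paper's route. The paper gives no separate argument: it obtains this theorem as a direct corollary of Proposition~\ref{prop:instance-generic-inf-calculus}, which supplies finitely many trace values per sequent and a computable trace pair function, combined with Brotherston's decidability result for \scgcyc. That result is proved by exactly the B\"uchi-automata inclusion argument in your Steps~1--3, so your explicit construction only spells out the computability of the trace steps (including the (subst) and ($=$L) matching checks) that the paper dismisses as straightforward to check.
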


Furthermore, we obtain \emph{cycle normalization} and the existence of \emph{trace manifolds} for \sccyc \cite{tableaux/Brotherston05,Brotherston2006PhD}.
For space reasons, we only discuss these properties informally in the main text and refer to Appendix \ref{app:generic-calculus} for a more detailed exposition.

An \sccyc-(pre-)proof $\prf = (\deriv, \repfun)$ is in \emph{cycle normal form} if for every bud $n$ of $\prf$, the companion $\repfun(n)$ of $n$ is an ancestor of $n$ in $\deriv$.
\emph{Cycle normalization} for \sccyc says that every \sccyc-(pre-)proof $\prf$ can be transformed into an \emph{equivalent} \sccyc-(pre-)proof $\prf'$ in cycle normal form; equivalent in the sense that $\prf$ and $\prf'$ have the same tree unraveling.
Cycle normalization is valuable in proof search, as it assures that one can restrict to ancestors of buds when searching for companions in pre-proofs.\footnote{
	On the other hand, proofs in cycle normal form may in the worst case be exponentially larger than the smallest proof of the same sequent that is not in cycle normal form \cite{Brotherston2006PhD}.
}

Pre-proofs in cycle normal form admit notions of \emph{trace manifold}, which provide finitary alternatives to the trace condition.
Intuitively, a trace manifold for $\prf$ consists of a collection of traces along finite paths in $\graph{\prf}$, together with conditions that these traces can be ``glued together'' to form traces along all infinite paths in $\graph{\prf}$.
Brotherston introduced two equivalent notions of trace manifolds: one in terms of \emph{strongly connected subgraphs} and one in terms of an \emph{induction order} \cite{Brotherston2006PhD}.
While these trace manifold conditions are more restrictive than the trace condition, they have the benefit of being finitary and more explicit.

\section{Completeness} \label{sec:completeness}

Like the infinitary calculus \lkidinf of Brotherston and Simpson \cite{jlc/BrotherstonS11}, \scinf is complete.

\begin{theorem}[restate=Completeness, name=Completeness] \label{thm:completeness}
	Let $\seqshort$ be an \foid-sequent.
	If $\seqshort$ is valid, then there exists an \scinf-proof of $\seqshort$.
\end{theorem}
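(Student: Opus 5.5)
We follow the standard completeness argument for \lkidinf of Brotherston and Simpson \cite{jlc/BrotherstonS11}: a fair, cut-free proof-search procedure builds a (possibly infinite) derivation, and any failure of the trace condition or any bud yields a countermodel. Fix a valid sequent $\sregseq$. We build a \emph{search tree} $\schtree$ by a fair schedule $\sched$ that enumerates, infinitely often, every pair of a formula and a rule together with every relevant term of $\terms$. Each node is expanded by applying the scheduled logical rule, (case) to a defined atom on the left, or (def R) to a defined atom on the right, each instantiated with every term at some later stage. Weakening keeps all formulae, so the sequents along a branch grow monotonically. Every axiom instance closes the branch. If $\schtree$ is a proof, we are done. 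Otherwise we show that the sequent is invalid, which gives a contradiction. Since we only need the existence of a proof, we may reduce the untruncated tree to a pre-proof by pruning. The only interesting case is a failure of the trace condition, or the absence of a pre-proof.

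Suppose $\schtree$ is not a proof. Then there is a branch $\untbranch$ that is either open and finite, which fairness excludes except as a genuine countermodel, or infinite and carries no infinitely progressing trace. Let $\Gamma_\omega$ and $\Delta_\omega$ be the unions of the left and right sides along $\untbranch$. We form the term model $\cntmod$ over $\terms$ quotiented by the equivalence $\eqrel$ generated by the equalities in $\Gamma_\omega$. Parameters of $\defn$ are interpreted so that their atoms in $\Gamma_\omega$ are true and all others false. Fairness then gives, as in the \fo case, that every \fo-formula of $\Gamma_\omega$ over parameter atoms behaves as expected. The genuinely new part concerns the defined predicates, which must be interpreted by the well-founded model $\wfmod$ of $\defn$ in $\cntmod|_{\pars{\defn}}$. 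We need three facts: (i) every defined atom in $\Gamma_\omega$ is true in $\wfmod$; (ii) every defined atom in $\Delta_\omega$ is false in $\wfmod$; (iii) $\wfmod$ is two-valued, so that $\cntmod \modelswf \defn$. A structural induction on formulae then lifts (i) and (ii) to all of $\Gamma_\omega$ and $\Delta_\omega$, making $\cntmod$ a countermodel of $\sregseq$.

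To establish (i)–(iii) we argue by contradiction, using the ordinal stages of the well-founded induction $\wfind$. Assume some formula of $\Gamma_\omega$ is false in $\cntmod$, or some formula of $\Delta_\omega$ is true. Choose such a \emph{bad} formula whose defined atoms become decided at a minimal stage; since $\Gamma_\omega$ and $\Delta_\omega$ contain no defined-atom formulae undecided in $\wfmod$, a non-two-valued $\wfmod$ also produces a bad formula. Fairness guarantees that the bad formula is eventually decomposed along $\untbranch$. A left defined atom is unfolded by (case), and the case matching the actual term in $\cntmod$ has a bad auxiliary body whose stage is at most the atom's stage. A right defined atom is handled by (def R), instantiated with every term, so if its body is true at an earlier stage the matching instance appears on the right. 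Tracking this yields a trace along a tail of $\untbranch$ in which the ordinal stage never increases and strictly decreases at each (case) step. Because $\untbranch$ has no infinitely progressing trace, the trace passes through (case) only finitely often, and from then on it is a bad formula that is only logically decomposed. This is impossible by the \fo argument. The delicate point here is that negation flips sides, so falsity of left defined atoms at stage $\alpha$ relies on right defined atoms being decided at stages below $\alpha$. We must therefore track three-valued truth, with the ordinal measure defined as in the soundness sketch, namely the least $\alpha$ such that $\trace^{\tvstrtwo_\alpha}$ is decided.

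We expect the main obstacle to be (iii) together with the non-monotone sides, that is, showing that undecided atoms of $\wfmod$ cannot occur in $\Gamma_\omega \cup \Delta_\omega$. Our first attempt would be to characterize $\wfmod$ via unfounded sets. We would show that the set of left defined atoms admitting no infinitely progressing trace is closed under the one-step derivation operator, so those atoms are derived true at some stage. We would likewise show that the right defined atoms, which are never justified, form an unfounded set and are therefore false. This would mirror the argument of \cite{jlc/BrotherstonS11} for least fixpoints, but applied to both components of the three-valued well-founded fixpoint.
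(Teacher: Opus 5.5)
There is a genuine gap, and it comes from your first design decision: the search is cut-free, but \scinf is not cut-free complete. Take $\defn \doteq \defintight{P \rul \lnot P}$ from Example~\ref{ex:proof-liar}. The sequent $\defn \vdash$ is valid, because the well-founded model of $\defn$ leaves $P$ unknown, so no two-valued structure satisfies $\defn$. Yet the only rules with a conclusion of this shape are (cut), and (wk) and (subst) with premise $\defn \vdash$ again. Your cut-free search therefore has nothing to decompose. Its branch has $\Gamma_\omega = \Delta_\omega = \emptyset$ and carries no trace, and the well-founded model $\wfmod$ in the resulting context is not two-valued, so there is no countermodel to extract.

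This is exactly where (iii) fails. The claim that ``a non-two-valued $\wfmod$ also produces a bad formula'' presupposes that undecided atoms occur in $\Gamma_\omega \cup \Delta_\omega$, and nothing in a cut-free search guarantees that. The unfounded-set repair you sketch also speaks only about atoms that do occur on the branch. Your argument for (i) and (ii) is essentially Brotherston and Simpson's, and it can only work when $\wfmod$ happens to be two-valued (e.g.\ for positive definitions, cf.\ Theorem~\ref{thm:cut-free-completeness}). It also has the descent in the wrong place. Strict ordinal descent happens at (def R) on right atoms true in $\wfmod$, not at (case) on left atoms false in $\wfmod$: by Definition~\ref{def:refinement}, such atoms become false at the same refinement step as their bodies. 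So after the last progression point you need a lexicographic measure (stage, formula size), not just ``the \fo argument''.

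The missing idea is to build (cut) into the search. The paper's schedule contains $\langle \form, \textrm{cut} \rangle$ for every \fo-formula $\form$ over $\vocab$, so $\Gamma_\omega$ and $\Delta_\omega$ partition all formulae (Proposition~\ref{prop:limit-sequent-partition}). It then interprets \emph{every} predicate, defined ones included, by membership in $\Gamma_\omega$ rather than by $\wfmod$, and proves the \fo part by structural induction. It obtains $\cntmod \modelswf \defn$ from Proposition~\ref{prop:wf-sem-just} by growing, for each fact, a justification along $\untbranch$ together with a trace along a tail of $\untbranch$. A true defined atom lies in $\Gamma_\omega$ and is expanded by the body on the (case) premise the branch takes. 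A false one lies in $\Delta_\omega$, and by (def R) so do all its bodies. A positive or mixed infinite justification branch visits infinitely many true defined atoms, each a (case) progression point of the associated trace, which contradicts untraceability; hence the justifications are good. This settles totality. It also excludes the genuinely non-monotone mixed loops, such as $P \rul \lnot Q$, $Q \rul \lnot P$ with $P$ on the left and $Q$ on the right, which support and unfounded-set conditions alone would accept.
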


Our completeness proof is based on the proof of cut-free completeness of \lkidinf by Brotherston and Simpson \cite{jlc/BrotherstonS11}.
It proceeds along the following lines:
\begin{itemize}
	\item We fix an \foid-sequent $\seqshort$ and construct a pre-proof $\schtree$ of $\seqshort$ called the \emph{search tree} for $\seqshort$. 
	Intuitively, $\schtree$ corresponds to an exhaustive search for a proof of $\seqshort$.
	Our notion of search tree differs from the one in \cite{jlc/BrotherstonS11}, as we incorporate the cut rule in it.
	\item If $\schtree$ is a proof of $\seqshort$, then the statement holds for $\seqshort$.
	If not, there must be a branch in $\schtree$ along which no infinitely progressing trace exists. %
	We fix such branch and refer to it as the \emph{untraceable branch} $\untbranch$.
	\item Based on $\untbranch$, we construct a structure $\cntmod$ that we show the be a countermodel of $\seqshort$.
	\item In conclusion, this shows that every \foid-sequent either has a proof or a countermodel.
\end{itemize}

The most innovative part of the proof lies in showing that $\cntmod$ is a countermodel of $\seqshort$, and in particular that 
$\cntmod$ satisfies the definition $\defn$ in $\seqshort$.
We do so by showing that for every defined atom $P(\bar{t})$ of $\defn$, $P(\bar{t})$ is true in the well-founded model $\wfmod$ of $\defn$ in $\cntmod|_{\pars{\defn}}$ iff $P(\bar{t})$ is true in $\cntmod$.
Intuitively, the untraceable branch $\untbranch$ in $\schtree$ provides enough structural information to extract a derivation of the truth of $P(\bar{t})$ in $\wfmod$ in case $\cntmod \models P(\bar{t})$ and, using the untraceability of $\untbranch$, the absence of such a reason in case $\cntmod \not\models P(\bar{t})$.

On a technical level, we achieve this %
by employing the semantic framework of \emph{justification theory} \cite{DeneckerS93,DeneckerPhD93}, which characterizes the well-founded semantics through tree-like (or graph-like in some more recent versions \cite{lpnmr/DeneckerBS15,phd/Marynissen22}) objects called \emph{justifications}. 
Justifications formalize the construction processes behind non-monotone inductive definitions, and as such, they provide an alternative to the well-founded inductions from Section \ref{sec:well-founded-semantics}.
We refer to Appendix \ref{app:justifications} for an exposition on justification theory and to Appendix \ref{app:completeness} for a proof of Theorem \ref{thm:completeness}.

As a consequence of G\"odel's first incompleteness theorem, the set of \scinf-theorems is not recursively enumerable.
Since \sccyc-proofs are finite objects, the set of \sccyc-theorems is recursively enumerable, and hence, by G\"odel's first incompleteness theorem, an analogous completeness result cannot hold for \sccyc.

\section{Cut-Elimination} \label{sec:cut-elimination}

Brotherston and Simpson showed that \lkidinf is \emph{cut-free complete}, meaning that every valid sequent admits an \lkidinf-proof without (cut).
Together with soundness, this implies \emph{cut-elimination} for \lkidinf, which says that every provable sequent in \lkidinf is also provable in \lkidinf without (cut).
Cut-elimination is a fundamental result in proof theory, originally shown by Gentzen for his sequent calculi \lk and \lj for classical resp.~intuitionistic first-order logic \cite{Gentzen35}.
Since the cut rule can intuitively be seen as the application of a lemma, cut-elimination intuitively says that every theorem can be proven without the need to introduce lemmas.
This lemma is formalized by the \emph{cut formula} $\form$ in (cut).
As the cut formula can be any \fo-formula, the cut rule is an obstacle for proof search.
Cut-elimination guarantees that the cut rule can be avoided in proof search.\footnote{
	However, restricting to cut-free proofs may significantly enlarge the size of proofs \cite{Boolos1984}.
}

Cut-elimination does {not} hold for \scinf or \sccyc.
This is shown by the proof in Example \ref{ex:proof-liar}, for instance, as a simple inspection of the inference rules of \scinf and \sccyc reveals that the sequent $\defn \vdash$ cannot be proven with any rule other than (cut).
However, cut is eliminable in Brotherston and Simpson's infinitary calculus \lkidinf \cite{jlc/BrotherstonS11}.
Slightly extending their proof of cut-elimination for \lkidinf, we obtain a proof of cut-elimination for \scinf on \foid-sequents $\sregseq$ with positive definitions $\defn$.\footnote{
	The extension of Brotherston and Simpson's proof of cut-free completeness for \lkidinf to \scinf is similar to the extension of their proof of cut-free completeness for \lkid to \scfoid \cite{arxiv}.
}

\begin{theorem}[Cut-Free Completeness] \label{thm:cut-free-completeness}
	Let $\sregseq$ be an \foid-sequent with a positive definition $\defn$.
	If $\sregseq$ is valid, then there exists a cut-free \scinf-proof of $\sregseq$.
\end{theorem}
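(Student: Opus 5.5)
We follow the completeness proof of Theorem \ref{thm:completeness}, but with the cut-free search tree of Brotherston and Simpson for \lkidinf instead of the search tree with (cut). We fix a valid sequent $\sregseq$ with $\defn$ positive and build a cut-free pre-proof $\schtree$ by a fair schedule. The schedule applies every applicable logical rule to every formula. It applies (case) to every defined atom on the left. It applies (def R) to every defined atom on the right, for every rule of $\defn$ and every tuple of terms. Each node carries all earlier formulae, so weakening is only used implicitly, and axioms close branches as soon as they apply.

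If $\schtree$ satisfies the trace condition, it is a cut-free \scinf-proof and we are done. Otherwise we fix an untraceable branch $\untbranch$ and let $\Gamma_\omega$ and $\Delta_\omega$ be the unions of the left and right sides along it. As in the completeness proof, we build the term model $\cntmod$. Its domain is $\terms$ modulo the equalities in $\Gamma_\omega$, and an atom is true in $\cntmod$ iff it is (equivalent to) an atom in $\Gamma_\omega$. The usual \fo argument by induction on formulae uses fairness and the absence of axioms on $\untbranch$. It shows that $\cntmod$ satisfies every formula of $\Gamma_\omega$ that is not a defined atom and falsifies every formula of $\Delta_\omega$ that is not a defined atom. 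It remains to show that $\cntmod \modelswf \defn$.

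Because $\defn$ is positive, its well-founded model in $\cntmod|_{\pars{\defn}}$ is two-valued and equals the least fixpoint of the monotone immediate consequence operator. So we must show two things: every defined atom true in $\cntmod$ is in the least fixpoint, and the set of true defined atoms is closed under the rules.

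\textbf{Closure.} Suppose the body of a rule instance holds in $\cntmod$ but its head $P(\bar{t})$ is false, so $P(\bar{t}) \notin \Gamma_\omega$. Without (cut), the right side may not contain $P(\bar{t})$ at all, so the search tree by itself does not force this. The argument of Brotherston and Simpson avoids this by a different choice of $\cntmod$: defined atoms are set false exactly when they lie in $\Delta_\omega$ or are otherwise forced false. Concretely, we interpret the defined predicates by the least fixpoint $\mu$ of $\defn$ over the parameter part of $\cntmod$, and then check that $\cntmod$ is still a countermodel. For left atoms $P(\bar{t}) \in \Gamma_\omega$ we need $P(\bar{t}) \in \mu$. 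For right atoms $P(\bar{t}) \in \Delta_\omega$ we need $P(\bar{t}) \notin \mu$. The first condition does not need a separate closure argument.

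\textbf{Right atoms.} We argue by induction on the approximant stage of $\mu$. If $P(\bar{t})$ entered $\mu$ at some stage, then the body of some rule instance held at an earlier stage. Fair (def R) puts that body instance in $\Delta_\omega$. Positivity means the body is built from atoms by $\land$ and $\lor$, so decomposing it on the right yields a defined atom that entered $\mu$ earlier and is in $\Delta_\omega$, or a parameter formula false in $\cntmod$ that is in $\Delta_\omega$. The induction hypothesis rules out the first, and the \fo part rules out the second. This step uses no traces.

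\textbf{Left atoms (main obstacle).} Suppose $P(\bar{t}) \in \Gamma_\omega$ but $P(\bar{t}) \notin \mu$. Fair (case) unfolds it, and untraceability does not directly name the case that was taken. We follow the case premises that lie on $\untbranch$ and use positivity to keep decomposing. This produces a trace along $\untbranch$ whose atoms are all outside $\mu$. If this chain never reached a defined atom in $\mu$, the trace would progress at every unfolding, contradicting untraceability. Making this precise needs the same ordinal or approximant bookkeeping as Brotherston and Simpson use. Positivity is essential here, because negation would flip atoms to the right side and break the trace. We would reuse their lemma almost verbatim, adapting it only for arbitrary \fo parameter formulae in the bodies.

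Together these give that $\cntmod$ is a countermodel of $\sregseq$, which contradicts validity. Hence $\schtree$ is a cut-free \scinf-proof.
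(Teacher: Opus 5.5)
Your proposal is correct and follows the route the paper points to: adapting Brotherston and Simpson's cut-free completeness proof for \lkidinf, with a cut-free fair search tree, an untraceable branch, a term model whose defined predicates are interpreted by the least fixpoint $\mu$ (which is the well-founded model for positive $\defn$), induction on approximants for defined atoms in $\Delta_\omega$, and an infinitely progressing trace for defined atoms in $\Gamma_\omega$. Two small corrections: in the right-atom step, the parameter atom you reach is \emph{true} in $\cntmod$ (that is what contradicts its membership in $\Delta_\omega$), and the left-atom step needs no ordinal bookkeeping, because closure of $\mu$ under the rules makes the body in the (case) premise on $\untbranch$ false, so positive left decomposition always yields a false defined atom in $\Gamma_\omega$, and each unfolding is a progression point.
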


\begin{theorem}[Cut-Elimination] \label{thm:cut-elimination}
Let $\sregseq$ be an \foid-sequent with a positive definition $\defn$.
If $\sregseq$ is provable in \scinf, then it %
is cut-free provable in \scinf.
\end{theorem}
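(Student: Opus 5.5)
The statement follows by combining soundness with cut-free completeness, so no syntactic cut-reduction procedure is needed. I would argue as follows. Suppose $\sregseq$ has an \scinf-proof, where $\defn$ is positive. This proof may contain instances of (cut).

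First, Theorem \ref{thm:soundness} applies to every \scinf-proof, with or without cut, and so $\sregseq$ is valid under the well-founded semantics. Second, $\defn$ is positive by hypothesis, so Theorem \ref{thm:cut-free-completeness} applies to the valid sequent $\sregseq$ and yields a cut-free \scinf-proof of it. This is exactly the claim.

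The only thing to check is that the hypotheses of the two theorems line up. Soundness is stated for arbitrary \foid-sequents and arbitrary \scinf-proofs. Cut-free completeness needs only validity and positivity of the single definition $\defn$, and the proof of $\sregseq$ does not change $\defn$.

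All the real difficulty therefore lies in Theorem \ref{thm:cut-free-completeness}, which I take as given here. Its proof would adapt Brotherston and Simpson's search-tree argument for \lkidinf without adding cut to the search tree. It relies on the fact that, for a positive definition, the well-founded model coincides with the least fixpoint. This coincidence is what lets the countermodel built from an untraceable branch satisfy $\defn$ without needing cuts to decide defined atoms.

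Once that theorem is in hand, Theorem \ref{thm:cut-elimination} is a semantic consequence and needs no further work.
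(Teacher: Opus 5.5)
Your proposal is correct and takes exactly the paper's route: as for \lkidinf, cut-elimination for positive definitions follows semantically by composing soundness (Theorem~\ref{thm:soundness}) with cut-free completeness (Theorem~\ref{thm:cut-free-completeness}), and all the real work lies in the latter. Your side remark on how the cut-free completeness proof goes is not needed for this step.
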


Cut-elimination does not hold for \sccyc, as it does not hold for \clkid either.
This was shown by Oda et al.~\cite{jlc/OdaBT25}, confirming a conjecture by Brotherston \cite{Brotherston2006PhD}.

For \scfoid, we also proved a \emph{cut-restriction} result for %
sequents with stratified definitions, saying that these sequents can be proven with cuts of a very specific form \cite{arxiv}.
By the same reasoning as in \cite{arxiv}, this result extends to \scinf.

\begin{definition}
	An \foid-sequent $\sregseq$ is \emph{EC-provable} (EC standing for elementary cut) in \scinf if there exists an \scinf-proof of $\sregseq$ in which %
	every cut formula
	is of the form $\forall \bar{y}: Q(\bar{y}) \lor \lnot Q(\bar{y})$ such that $Q$ appears negatively in the body of a rule of $\defn$.
\end{definition}

\begin{theorem}[EC-Completeness]
	\label{thm:ec-completeness}
	Let $\sregseq$ be an \foid-sequent with a stratified definition $\defn$.
	If $\sregseq$ is valid, then it is EC-provable in \scinf.
\end{theorem}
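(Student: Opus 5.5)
We plan to adapt the completeness proof of Theorem~\ref{thm:completeness}: build a search tree for $\sregseq$, but allow only cuts on elementary formulas, and show that if this restricted search tree is not a proof then its untraceable branch yields a countermodel. Concretely, we construct the search tree $\schtree$ as in the completeness proof, scheduling all logical rules, (case) and (def R) fairly. The only change is in the cut step: instead of scheduling cuts on arbitrary \fo-formulae, we schedule, for every predicate $Q$ occurring negatively in the body of a rule of $\defn$, a single cut on $\forall \bar{y}: Q(\bar{y}) \lor \lnot Q(\bar{y})$, followed by the rules ($\forall$L) and ($\lor$L) applied to all terms. Along every branch, each ground-term instance $Q(\bar{t})$ then ends up either on the left or on the right (as $\lnot Q(\bar t)$, hence via ($\lnot$L) on the right). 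If $\schtree$ satisfies the trace condition, it is an EC-proof, because no other cuts occur.

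Otherwise we fix an untraceable branch $\untbranch$ and build the term model $\cntmod$ from it as before: atoms on the left of $\untbranch$ are made true, equality is interpreted as the congruence induced by the left equalities. As in the \fo part of the completeness argument, $\cntmod$ satisfies $\Gamma$ and falsifies $\Delta$. It remains to show $\cntmod \modelswf \defn$. In the unrestricted completeness proof this relied on the cuts deciding \emph{every} relevant formula along $\untbranch$. Here only negatively occurring defined atoms are decided. We therefore argue by induction on the stratification $\ell$. For level $k$, assume the lower levels of $\cntmod$ agree with the well-founded model $\wfmod$. Since $\defn$ is stratified, the restriction of $\defn$ to level $k$ is a positive definition in which the lower-level predicates act as parameters, and the negative occurrences only concern lower levels. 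For those lower-level atoms, the elementary cuts guarantee that their truth in $\cntmod$ is fixed by their side in $\untbranch$ and agrees with $\wfmod$ by the induction hypothesis. Thus on level $k$ we are in the monotone situation, and we reproduce the argument of cut-free completeness (Theorem~\ref{thm:cut-free-completeness}, following Brotherston and Simpson):
\begin{itemize}
\item a left atom $P(\bar t)$ of level $k$ is unfolded by (case) along $\untbranch$, and untraceability forbids an infinitely progressing chain of such unfoldings, so we get a well-founded derivation of $P(\bar t)$, i.e., membership in the least fixpoint;
\item a true atom in the least fixpoint appears via (def R) unfoldings on the right only if it is already false in $\cntmod$, which contradicts the fact that the right-hand side is falsified.
\end{itemize}
Since the well-founded model of a stratified definition is the iterated least fixpoint over the strata, this gives $\cntmod|_{\sym{\defn}} = \wfmod$.

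The main obstacle is the second direction at level $k$. We must show that every atom true in $\cntmod$ is derivable in the least fixpoint relative to the lower strata, i.e., that no level-$k$ atom is true without support. This is where the previous proof used general cuts together with justifications. We would first try to carry over the justification-theoretic argument of Appendix~\ref{app:completeness}, observing that the only branching decisions it needs about non-positive subformulas concern negated atoms of lower strata, and these are exactly the ones supplied by the elementary cuts. Failing that, we would follow the corresponding EC-completeness argument for \scfoid in \cite{arxiv} and transport it, replacing induction-rule applications by the infinite (case)-unfoldings, where the trace condition plays the role of the induction hypothesis.
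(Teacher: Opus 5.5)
Your overall shape (a search tree with only elementary cuts, an untraceable branch, a term model, induction on the strata) is the right one. There is, however, a genuine gap in how you interpret the defined predicates. You build $\cntmod$ ``as before'', so a defined atom is true in $\cntmod$ iff it lies in $\Gamma_\omega$, and you then aim to prove $\cntmod|_{\sym{\defn}} = \wfmod$. In the paper's completeness proof this interpretation works only because the unrestricted cuts place every formula in exactly one of $\Gamma_\omega$ and $\Delta_\omega$ (Proposition~\ref{prop:limit-sequent-partition}). With elementary cuts only, atoms of defined predicates that do not occur negatively in $\defn$ are in general in neither set, and your target equality is then false. Concretely, let $\defn \doteq \defintight{P \rul \top}$. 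This definition is positive, hence stratified, so your search tree contains no cuts at all. Consider the invalid sequent $\defn, \forall x: S(x) \vdash R(c)$. Its search tree is a single infinite branch of ($\forall$L) steps with no (case), so it is untraceable. $P$ never occurs in the limit sequent, hence $\cntmod \not\models P$, while $P$ is true in $\wfmod$. Thus $\cntmod \not\modelswf \defn$, even though a countermodel exists.

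Your first bullet shows that every level-$k$ atom of $\Gamma_\omega$ lies in the least fixpoint. Your second bullet, as far as it can be read, shows that no atom of $\Delta_\omega$ does. The ``main obstacle'' you single out is again the first inclusion, which untraceability handles. The inclusion that $\cntmod|_{\sym{\defn}} = \wfmod$ additionally needs is that every atom of the least fixpoint lies in $\Gamma_\omega$. You never address it, and the example shows it cannot be proved.

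The remedy follows the Brotherston--Simpson style cut-free completeness argument on which the paper relies (via \cite{arxiv}). \emph{Define} the defined predicates of the countermodel to be the well-founded model of $\defn$ in the term-model context $\cntmod|_{\pars{\defn}}$; for stratified $\defn$ this is the iterated least fixpoint. Then $\cntmod \modelswf \defn$ holds by construction, and what you prove instead is a truth lemma: every formula of $\Gamma_\omega$ is true and every formula of $\Delta_\omega$ is false. For defined atoms this goes by induction on the strata. Within a stratum, atoms of $\Delta_\omega$ are false by induction on the fixpoint stage, using the fairly scheduled (def R) steps. Atoms of $\Gamma_\omega$ are true, because otherwise the (case) unfoldings along $\untbranch$ would yield an infinitely progressing trace. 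Atoms in negative position in a body belong to lower strata, where the induction hypothesis applies.

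Three further points need attention. First, Theorem~\ref{thm:cut-free-completeness} only concerns bodies built from atoms by $\land$ and $\lor$, so you cannot quote it for a stratum whose bodies contain quantifiers and negated lower-stratum formulae. Second, without general cuts the search tree may have finite open branches, and these must also yield countermodels. Third, the steps of Proposition~\ref{prop:cnt-FO-forms} that rely on Proposition~\ref{prop:limit-sequent-partition}, such as reflexivity of $\eqrel$, must be redone, as you began to do for equality.
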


\begin{theorem}[Cut-Restriction] 
	\label{thm:cut-restriction}
	Let $\sregseq$ be an \foid-sequent with a stratified definition $\defn$.
	If $\sregseq$ is provable in \scinf, then it is EC-provable in \scinf.
\end{theorem}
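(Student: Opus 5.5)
The statement to prove is the last one, Cut-Restriction (Theorem \ref{thm:cut-restriction}). The plan is to derive it in the same way Theorem \ref{thm:cut-elimination} is derived from Theorem \ref{thm:cut-free-completeness}: combine soundness with the corresponding completeness result.

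Suppose $\sregseq$, with $\defn$ stratified, is provable in \scinf. By Theorem \ref{thm:soundness}, $\sregseq$ is valid under the well-founded semantics. Since $\defn$ is stratified, Theorem \ref{thm:ec-completeness} applies to the valid sequent $\sregseq$. It yields an \scinf-proof in which every cut formula has the form $\forall \bar{y}: Q(\bar{y}) \lor \lnot Q(\bar{y})$, with $Q$ occurring negatively in a rule body of $\defn$. That is exactly EC-provability, so the theorem follows.

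Two hypotheses need checking, and both are met. First, soundness holds for arbitrary \foid-sequents, including non-total definitions, so it covers the stratified case. Second, the hypothesis of Theorem \ref{thm:ec-completeness} concerns only the definition $\defn$, which is unchanged, so it applies directly.

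The real work therefore lies in Theorem \ref{thm:ec-completeness}. I would prove it by adapting the completeness argument of Theorem \ref{thm:completeness}. The search tree would use only the elementary cuts on negatively occurring predicates $Q$, scheduled fairly, instead of arbitrary cuts. Along the untraceable branch, these cuts decide every atom $Q(\bar{t})$ for such predicates. Stratification then lets one argue by induction on strata, as in \cite{arxiv}, that the resulting countermodel $\cntmod$ agrees with the well-founded model.

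The main obstacle is showing that $\cntmod \modelswf \defn$ once arbitrary cuts are no longer available. In the positive fragment this is handled by the Brotherston–Simpson argument. For stratified definitions, the negative dependencies must be resolved by the elementary cuts level by level, so that within each stratum the definition behaves positively relative to the already-decided lower strata.
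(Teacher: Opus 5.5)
Your proof is correct and follows the paper's intended route: Cut-Restriction follows from Soundness (Theorem \ref{thm:soundness}) together with EC-Completeness (Theorem \ref{thm:ec-completeness}), just as Cut-Elimination follows from Cut-Free Completeness plus soundness. Your sketch of EC-Completeness is not needed for this statement, since the paper also defers it to the reasoning of \cite{arxiv}; if you flesh it out, note that without arbitrary cuts a defined atom may lie in neither $\Gamma_\omega$ nor $\Delta_\omega$ (e.g.\ $\defn \doteq \{P \rul \top\}$ with the sequent $\defn \vdash$), so the defined predicates of the countermodel cannot simply be read off $\Gamma_\omega$ as in Definition \ref{def:countermodel} but must instead come from the well-founded model of the context.
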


Note that Theorems \ref{thm:ec-completeness} and \ref{thm:cut-restriction} generalize Theorems \ref{thm:cut-free-completeness} and \ref{thm:cut-elimination}, respectively, as for positive definitions, EC-provability %
comes down to (regular) provability. %

Cut-restriction is valuable for proof search, as it narrows the %
cut formulae down to a finite set, bounded by the defined predicates of a definition $\defn$.%

It is currently unknown whether Theorem \ref{thm:cut-restriction} can be strengthened to full cut-elimination.
Cut-elimination may thus hold for the broader fragment of \foid-sequents with stratified definitions and, as far as we know, even for \foid-sequents with total definitions.

\section{Relation between \texorpdfstring{\scfoid}{SCFO(ID)}, \texorpdfstring{\scinf}{SCFO(ID)-inf} and \texorpdfstring{\sccyc}{SCFO(ID)-cyc}} \label{sec:relation-calculi}

Since every \sccyc-proof unravels into an \scinf-proof, every \sccyc-theorem is also an \scinf-theorem.
Conversely, not every \scinf-theorem is an \sccyc-theorem, as \scinf is complete whereas \sccyc is not.

Brotherston and Simpson showed that every \lkid-theorem is also a \clkid-theorem \cite{jlc/BrotherstonS11,tableaux/Brotherston05}, and they conjectured the converse to hold as well \cite{jlc/BrotherstonS11}.
Their conjecture was later disproven by Berardi and Tatsuta \cite{Berardi2019}, who provided a counterexample involving only the natural number definition from Section \ref{sec:introduction}.
The same counterexample shows that not every \sccyc-theorem is an \scfoid-theorem.
Additionally, the sequent $\defn \vdash$ from Example \ref{ex:proof-choice-def} is an \sccyc-theorem but not an \scfoid-theorem.
This sequent is not provable in \scfoid, as \scfoid is sound w.r.t.~the \emph{stable (model) semantics} (which also originates from Logic Programming \cite{iclp/GelfondL88}) \cite{arxiv}, and $\defn \vdash$ is not valid under the stable semantics.

Extending Brotherston and Simpson's proof of the fact that every \lkid-theorem is a \clkid-theorem, we can show that  every \scfoid-theorem is an \sccyc-theorem.
The crux of the proof lies in showing that every instance of the \emph{induction rule} (ind), which is the left introduction rule for defined atoms in \scfoid \cite{lpnmr/VandenEedeVD24,arxiv}, is \emph{derivable} in \sccyc:

\begin{lemma}[restate=IndDerivableInCyc, name=] 
	\label{lem:ind-derivable-in-cyc}
	For every instance of \emph{(ind)} with conclusion $\seqshort$ and premises $\seqshort_1, \dots, \seqshort_n$, there exists an \sccyc-derivation of $\seqshort$ satisfying the global trace condition, the buds of which are labeled by $\seqshort_1, \dots, \seqshort_n$.
\end{lemma}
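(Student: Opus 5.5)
The plan follows Brotherston and Simpson's proof that every \lkid-theorem is a \clkid-theorem, adapted to the form of (ind) in \scfoid. Recall the shape of (ind) from \cite{lpnmr/VandenEedeVD24,arxiv}. The conclusion is $\defn, \Gamma, P(\bar{v}) \vdash \Delta$. The rule chooses a set $\Pi$ of mutually dependent defined predicates containing $P$, and an induction hypothesis $\ihs{Q} = \setexpg$ for every $Q \in \Pi$. There is a \emph{minor premise} $\defn, \Gamma, \ihf{P}{\bar{v}} \vdash \Delta$. For every rule $\forall \bar{x}: Q(\bar{t}) \rul \form$ of $\defn$ with $Q \in \Pi$ there is a \emph{major premise} $\defn, \Gamma, \form\substposihpi\substnegihpi \vdash \ihf{Q}{\bar{t}}, \Delta$ (with fresh $\bar{x}$). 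I take $\substposihpi$ to replace positive occurrences of atoms over $\Pi$ by the corresponding induction hypotheses, possibly conjoined with the atom itself, and $\substnegihpi$ to treat negative occurrences as in \scfoid. I will only use two facts about $\substnegihpi$: that it is derivable from $\form$, and that it introduces no open obligations. The derivation will be built so that every bud is labeled by one of these premises, or by a single \emph{companion sequent} per $Q \in \Pi$ that is closed off by a back-link.

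\textbf{Step 1 (root).} Apply (cut) on $\ihf{P}{\bar{v}}$ to the conclusion. The right premise $\defn, \Gamma, \ihf{P}{\bar{v}} \vdash \Delta$ is left as a bud labeled by the minor premise. The left premise $\defn, \Gamma, P(\bar{v}) \vdash \ihf{P}{\bar{v}}, \Delta$ is obtained by (subst) from
\[ C_P \doteq \defn, \Gamma, P(\bar{z}) \vdash \ihf{P}{\bar{z}}, \Delta \]
with $\bar{z}$ fresh. For each $Q \in \Pi$ I introduce an analogous node $C_Q$; these nodes serve as companions.

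\textbf{Step 2 (unfolding a companion).} At each $C_Q$ apply (case) on $Q(\bar{z})$. The case for the rule $\forall \bar{x}: Q(\bar{t}) \rul \form$ is $\defn, \Gamma, \bar{z} = \bar{t}[\bar{y}/\bar{x}], \form[\bar{y}/\bar{x}] \vdash \ihf{Q}{\bar{z}}, \Delta$. Using ($=$L), reduce it to $\defn, \Gamma, \form' \vdash \ihf{Q}{\bar{t}'}, \Delta$, where $\form'$ and $\bar{t}'$ are the renamed body and head terms. Then cut on $\form'\substposihpi\substnegihpi$. The right premise is, up to (subst) and renaming, a bud labeled by the major premise for that rule. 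The left premise $\defn, \Gamma, \form' \vdash \form'\substposihpi\substnegihpi, \Delta$ is proved by an auxiliary lemma.

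\textbf{Auxiliary lemma.} For every \fo-formula $\chi$ there is a finite derivation of $\defn, \Gamma, \chi \vdash \chi\substposihpi\substnegihpi, \Delta$ with the following properties. It uses only logical rules, (ax), (wk) and (def R)-free structural steps, decomposing $\chi$ and its image in parallel, and it has the obvious dual version for negative positions. Its only buds are of the form $\defn, \Gamma, Q'(\bar{u}) \vdash \ihf{Q'}{\bar{u}}, \Delta$, one for each positive occurrence of an atom $Q'(\bar{u})$ with $Q' \in \Pi$. Along every path from the root to such a bud there is a trace that starts at $\chi$ on the left and ends at $Q'(\bar{u})$ on the left. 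The lemma is proved by induction on $\chi$. The left atom of each bud is connected to the companion $C_{Q'}$ by (subst), instantiating $\bar{z}$ to $\bar{u}$; the repeat function then maps these buds to the $C_{Q'}$.

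\textbf{Trace condition.} Every infinite path in the resulting graph eventually cycles through companion nodes $C_Q$. Between two consecutive visits it passes through a (case) step on the left atom $Q(\bar{z})$. From there it follows the body formula via ($=$L) and the left branch of the cut, where the lemma supplies a trace down to an atom $Q'(\bar{u})$. It then passes through (subst) back to $Q'(\bar{z})$ at $C_{Q'}$. Concatenating these segments gives a trace on the tail of the path with a progression point in every cycle, so the global trace condition holds. Paths that leave the cycles end in buds labeled by premises of (ind).

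\textbf{Main obstacle.} I expect the main difficulty to lie in the auxiliary lemma, specifically in the negative occurrences of $\Pi$-predicates. In the monotone case these do not exist. Here I must check that $\substnegihpi$ replaces them by something derivable from the original atoms without creating back-links. I would first try to handle them by (ax) or (wk), possibly using a cut whose traces do not need to progress. I must also make sure that the traces through ($\lnot$L)/($\lnot$R) switch sides consistently, which requires matching the polarity bookkeeping of \scfoid exactly. A secondary issue is the eigenvariable conditions on $\bar{z}$ and $\bar{y}$, which I will handle by choosing them fresh for $\Gamma$, $\Delta$ and all $\ihs{Q}$.
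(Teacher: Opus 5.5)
\textbf{Verdict: correct, but organised differently from the paper.}

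A naming note first: you have swapped the premise names. In \scfoid the rule-premises $\defn, \Gamma, \formtwo\substposihpi \vdash \ihf{Q}{\bar s}, \Delta$ are the \emph{minor} premises, and $\defn, \Gamma, \ihf{P}{\bar v} \vdash \Delta$ is the \emph{major} one.

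\textbf{Your route.} You transcribe the monotone (\lkid-to-\clkid) construction directly. Your companions $\defn, \Gamma, Q(\bar z) \vdash \ihf{Q}{\bar z}, \Delta$ carry the side context. Every unfolding cuts on the $\Pi$-substituted body, so each minor premise hangs off a cycle as an open bud.

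\textbf{The paper's route.} It internalises the minor premises as the set $\mset$ of formulae $\forall \bar y: \formtwo\substposihpi \Rightarrow \ihf{Q}{\bar s}$. A single cut at the root on $\bigwedge \mset \Rightarrow \ihf{P}{\bar v}$ exposes every premise of (ind) exactly once, outside all cycles, via ($\Rightarrow$L), ($\land$R), ($\forall$R) and ($\Rightarrow$R). The cyclic part is then a closed proof of the $\Gamma,\Delta$-free sequent $\defn, P(\bar z_P), \mset \vdash \ihf{P}{\bar z_P}$. Where you use a cut at each unfolding, the paper uses ($\forall$L) and ($\Rightarrow$L) on the corresponding member of $\mset$.

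\textbf{What each buys.} The paper's packaging gives a core whose freshness conditions involve only $\mset$. It also instantiates the rule variables by ($\forall$L), rather than by renaming a minor premise with (subst) as you do. Your version keeps all cut formulae simple and stays closest to the monotone argument. The price is carrying $\Gamma$ and $\Delta$ around the cycles, so every eigenvariable must also avoid them.

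\textbf{Traces.} The traces are the same in both: progress at (case) on $Q(\bar z)$, follow the original body through its decomposition to a positive $\Pi$-atom on the left, and return by (subst) to $Q'(\bar z)$. You spell them out more explicitly than the paper does.

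\textbf{Your ``main obstacle'' does not arise.} (ind) replaces only positive occurrences, and replaces them by $\ihf{R}{\bar u}$ alone, not conjoined with the atom. Negative occurrences are left untouched. So your $\form\substposihpi\substnegihpi$ is simply $\form\substposihpi$. In the dual half of your lemma (the paper's sequents $\defn, \mset, \formtwo\substnegihpi \vdash \formtwo$, where $\substnegihpi$ replaces the negative occurrences of $\Pi$-atoms), an atomic $\formtwo \doteq Q(\bar u)$ yields $Q(\bar u) \vdash Q(\bar u)$. That is closed by (ax), with no back-link and no extra cut.
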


\begin{theorem}[restate=IndThmIsCycThm, name=] 
	\label{thm:ind-thm-is-cyc-thm}
	Every \scfoid-theorem is an \sccyc-theorem.
\end{theorem}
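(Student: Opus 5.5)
We prove Theorem~\ref{thm:ind-thm-is-cyc-thm} by structural induction on \scfoid-proofs. We show that every finite \scfoid-proof $\prf$ of a sequent $\seqshort$ can be turned into an \sccyc-proof of $\seqshort$. The rules of \scfoid fall into two groups:
\begin{itemize}
	\item the rules of \scfo, together with the right introduction rule (def R) for defined atoms, which are shared with \sccyc;
	\item the induction rule (ind), which serves as the left introduction rule for defined atoms.
\end{itemize}
We translate $\prf$ bottom-up. Every application of a shared rule is kept verbatim. Every application of (ind), with conclusion $\seqshort'$ and premises $\seqshort'_1, \dots, \seqshort'_k$, is replaced by the \sccyc-derivation $\deriv_{\seqshort'}$ given by Lemma~\ref{lem:ind-derivable-in-cyc}. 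This derivation satisfies the global trace condition and its buds are labelled by $\seqshort'_1, \dots, \seqshort'_k$. We then plug the translations of the subproofs of the premises into these buds. The result is a finite derivation $\deriv$ of $\seqshort$. Its remaining buds are exactly the internal buds of the derivations $\deriv_{\seqshort'}$, each equipped with the repeat function $\repfun_{\seqshort'}$ that Lemma~\ref{lem:ind-derivable-in-cyc} provides. Taking the union of these repeat functions gives an \sccyc-pre-proof $(\deriv,\repfun)$.

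The main step is to check that $\graph{(\deriv,\repfun)}$ satisfies the trace condition. Every back-edge introduced by $\repfun$ stays inside a single embedded block $\deriv_{\seqshort'}$: buds of $\deriv_{\seqshort'}$ point to companions inside $\deriv_{\seqshort'}$, and the plugged-in subproofs are attached only at the premise buds. The coarse structure of the graph is therefore still the finite tree shape of $\prf$, with no edge going back from a subproof into an ancestor block. Hence any infinite path in $\graph{(\deriv,\repfun)}$ leaves the shared part and enters a block $\deriv_{\seqshort'}$, and can leave that block only into the translation of a subproof attached to a premise of $\seqshort'$. Leaving blocks in this way can happen only finitely often, because $\prf$ has finite height. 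So every infinite path has a tail that lies entirely inside a single block $\graph{\deriv_{\seqshort'}}$ and never reaches one of its premise buds. By the global trace condition in Lemma~\ref{lem:ind-derivable-in-cyc}, that tail carries an infinitely progressing trace.

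The obstacle I anticipate is making precise the phrase ``satisfying the global trace condition'' for a derivation that still has open buds. The intended reading is that every infinite path in the graph obtained by identifying only the internal buds with their companions has an infinitely progressing trace on some tail. If the lemma is read this way, the argument above goes through directly. I would state this reading explicitly, and also check that trace conditions are preserved when whole subgraphs are grafted onto premise buds: traces on a tail are unaffected by the finite prefix of the path, which justifies this. Soundness of the resulting \sccyc-proof is then just Theorem~\ref{thm:soundness}, so it needs no separate argument.
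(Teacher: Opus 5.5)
Your proposal is correct and follows the paper's own proof. You replace each (ind) instance by the derivation from Lemma~\ref{lem:ind-derivable-in-cyc}, graft the translated subproofs onto its premise buds, and observe that every infinite branch eventually stays inside a single substituted block, where the lemma's trace condition applies; you also make explicit what the paper leaves implicit, namely that companions stay within their own block and that blocks can be exited only finitely often because the \scfoid-proof is finite.
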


Appendix \ref{app:relation-calculi} contains proofs of Lemma \ref{lem:ind-derivable-in-cyc} and Theorem \ref{thm:ind-thm-is-cyc-thm}.

The extensions of \lkidinf and \clkid to \scinf resp.~\sccyc are arguably more natural than the extension of \lkid to \scfoid.
To extend the induction rule (ind) to non-monotone definitions, we introduced an asymmetry between positive and negative occurrences of defined predicates \cite{lpnmr/VandenEedeVD24,arxiv}, deviating further from the %
principle of mathematical induction as commonly used in pen-and-paper proofs.
The case distinction rule (case) in \lkidinf and \clkid, on the other hand, is essentially\footnote{
	Since definitions are implicit in the logic of Brotherston and Simpson, they do not appear in the inference rules of their sequent calculi.
} identical to the corresponding rule in \lkidinf and \clkid.
To accommodate for non-monotone definitions in \lkidinf and \clkid, we only needed to refine the notion of trace condition.
While our notion of trace condition is more involved than Brotherston and Simpson's notion,
the underlying intuition of ``unfolding'' a definition infinitely often along infinite paths remains.

\scinf and \sccyc 
are 
more tailored towards the well-founded semantics than \scfoid.
Indeed, since \scfoid is sound w.r.t.~the stable semantics, it cannot prove sequents that are invalid under the stable semantics, such as the sequent from Example \ref{ex:proof-choice-def}.
\scfoid, on the other hand, 
is
more broadly applicable, as it can additionally be used to prove theorems about logic programs under the stable semantics.

\section{Conclusion} \label{sec:conclusion}

In this paper, we provided a formalization of the principle of infinite descent for general, i.e., non-monotone inductive definitions. 
We accomplished this by extending the infinitary sequent calculus \lkidinf and the cyclic sequent calculus \clkid by Brotherston and Simpson \cite{jlc/BrotherstonS11} to the sequent calculi \scinf resp.~\sccyc for \foid.
Furthermore, we extended various results about \lkidinf and \clkid to \scinf resp.~\sccyc, thereby providing a solid proof-theoretic evaluation of our sequent calculi.

This work opens several avenues for future research:
\begin{itemize}
	\item In \cite{lpnmr/VandenEedeVD24,arxiv}, we introduced the sequent calculus \scfoid for a more general notion of \foid-sequent $\seq$, in which $\Gamma$ and $\Delta$ are sets of $\foid$-formulae.
	It is currently unknown whether our calculi \scinf and \sccyc can be refined to calculi that are sound w.r.t.~this more general notion of \foid-sequent.
	\item Our cut-elimination result for \scinf can potentially be strengthened to include sequents with stratified definitions or even total definitions.
	\item In Section \ref{sec:cyclic-proofs}, we saw that \scinf and \sccyc can be used to prove non-totality of definitions (as is the case for \scfoid \cite{lpnmr/VandenEedeVD24,arxiv}).
	Proving totality of definitions is currently impossible, however, a first obstacle being that we cannot express totality in the syntax of \foid.
	Therefore, one may investigate how to extend the logic \foid and its proof systems to enable proofs of totality of definitions.
	\item Due to structural similarities, we expected justification theory to play a role in the soundness criterion for pre-proofs in \scinf and \sccyc.
	Since the trace condition does not involve justifications, the question remains whether there exist notions of infinitary and cyclic calculus that are explicitly based on justification theory.
	Since justification theory has grown into a uniform framework for capturing the semantics of several non-monotonic logics (among which the stable semantics) \cite{lpnmr/DeneckerBS15,phd/Marynissen22}, such notions could provide sequent calculi for all these logics.
	\item Since the formalization of mathematical proofs is an important motivation of our work, it would be sensible to develop \emph{natural deduction} counterparts to our sequent calculi.
	Natural deduction is another style of proof system by Gentzen, aiming to come ``as close as possible to actual reasoning'' \cite{Gentzen35}.
	\item One could extend our calculi to extensions of \foid with additional language constructs that are relevant in KRR, such as 
	aggregates, partial functions and modal operators.
	\item One could implement an automated theorem prover for \sccyc by instantiating the generic cyclic theorem prover \textsc{Cyclist} by Brotherston et al.~\cite{pls/BrotherstonGP12}.
\end{itemize}

Together with the previous work \cite{lpnmr/VandenEedeVD24,arxiv}, this paper shows that existing proof systems for monotone inductive definitions can be extended to general non-monotone definitions.

\bibliography{krrlib.bib,refs_not_in_krrlib.bib}

\appendix

\section{Well-Founded Semantics} \label{app:well-founded-semantics}

In this appendix, we rigorously define the well-founded semantics for \foid.
Large parts of this appendix appear in earlier papers \cite{tocl/DeneckerBM01,tocl/DeneckerT08,KR/DeneckerV14,lpnmr/VandenEedeVD24,arxiv}.
We start with an intuitive discussion of the semantics of monotone definitions and how it can be refined to extend to non-monotone definitions.

	For monotone inductive definitions, the standard construction process starts from the assumption that all defined facts are false, and then gradually revises some of these assumptions through iterated rule application.
	Once saturation is reached, i.e., once there are no more applicable rules, it concludes that all defined atoms that have not been derived to be true must be false.
	At each point of the construction process, partial information about the defined atoms is available: atoms that are true at an intermediate stage have reached their defined value, but atoms that are false at an intermediate stage not necessarily; these may still be derived to be true at a later stage.
	
	This construction process for monotone definitions does not work for non-monotone definitions, since non-monotone rules can only be applied safely when it is certain that the negative subformulae $\neg \formtwo$ in their bodies are true, i.e., that $\formtwo$ will not be derived later in the construction process.
	Based on the well-founded semantics for logic programs \cite{GelderRS91}, Denecker and Vennekens introduced a new formalization of the construction process \cite{KR/DeneckerV14}.
	By re-formalizing the construction process as a sequence of three-valued structures of increasing precision, it was made explicit at each stage of the process whether a defined fact was derived to be true ($\true$), derived to be false ($\false$), or not yet derived, i.e., still unknown ($\unknown$).
	As in the monotone case, a fact is derived to be true if the body of one of its rules is true in the current three-valued structure.
	Deriving an unknown fact to be (certainly) false is more subtle.
	It uses the notion of \emph{unfounded set} \cite{GelderRS91}, which is a set of unknown atoms such that, if all of these atoms are set to false, then the bodies of all the rules deriving these facts become false as well.
	Intuitively, the atoms in an unfounded set can only make each other true, and therefore, they can be safely derived to be false.
	We will formalize the aforementioned notions in the remainder of this appendix.
	
	(Two-valued) structures $\str$ are defined as usual, consisting of a set $D$ called the \emph{domain} of $\str$, also denoted by $\dom{\str}$, and a mapping from non-logical symbols $\nlsym$ to appropriate values $\nlsym^{\str}$ in $D$, called the \emph{interpretation} of $\nlsym$ in $\str$. 
	In particular, $\nlsym^{\str} \subseteq D^n$ if $\nlsym$ is an $n$-ary predicate symbol, and $\nlsym^{\str} : D^n \to D$ if $\nlsym$ is an $n$-ary function symbol.
	By abuse of notation, we will view the interpretation of a propositional symbol as a Boolean value, $\false$ or $\true$ (standing for `false' and `true', respectively),  
	and the interpretation of an object symbol as an element of $D$.
	The set of interpreted symbols of $\str$ is called the \emph{vocabulary} $\voc{\str}$ of $\str$.
	If $\voc{\str} = \vocab$, we cal $\str$ a \emph{$\vocab$-structure}.
	We say that a structure $\str$ \emph{interprets} an expression $\epsilon$ (i.e., a term or a formula) if all freely occurring non-logical symbols in $\epsilon$ are in $\voc{\str}$. 
	In this case, we call $\epsilon$ an expression \emph{over} $\voc{\str}$.
	Given a tuple of terms $\bar{t} = (t_1, \dots, t_n)$ interpreted by a structure $\str$, we write $\bar{t}^{\str}$ for $(t_1^{\str}, \dots, t_n^{\str})$.
	
	We say that a structure $\str$  \emph{expands} a structure $\strtwo$ if $\voc{\strtwo} \subseteq \voc{\str}$, $\dom{\str} = \dom{\strtwo}$, and $\nlsym^{\str}$ = $\nlsym^{\strtwo}$ for all $\nlsym \in \voc{\strtwo}$. 
	Given a structure $\str$ with domain $D$, 
	a non-logical symbol $\nlsym$, and a corresponding value $\alpha$ in $D$, we denote by $\str[\nlsym : \alpha]$ the expansion of $\str$ such that $\voc{\str[\nlsym : \alpha]} = \voc{\str} \cup \{\nlsym\}$ and $\nlsym^\str = \alpha$.
	If $\Sigma$ is a subset of $\voc{\str}$, %
	then $\str|_{\Sigma}$ denotes the structure obtained from $\str$ by restricting its vocabulary (and interpretation mapping) 
	to $\Sigma$.
	
	Denecker and Vennekens formalize the construction processes behind inductive definitions as \emph{well-founded inductions}, which are (possibly transfinite) sequences of increasingly precise \emph{three-valued structures} \cite{KR/DeneckerV14}.
	These structures have three different \textit{truth values}: $\false$, $\unknown$ and $\true$, where $\unknown$ stands for `unknown'.
	We consider two orders on the set of truth values: the \emph{truth order} $\leqt$ and the \emph{precision order} $\leqp$, 
	defined by $\false \leqt \unknown \leqt \true$ and $\unknown \leqp \false$, $\unknown \leqp \true$, respectively. 
	We also write $\lnot \false \coloneq \true$, $\lnot \unknown \coloneq \unknown$ and $\lnot \true \coloneq \false$.
	A three-valued structure $\tvstr$ is defined similarly as a two-valued structure, except that $n$-ary predicate symbols are interpreted as functions from $D^n$ to $\{ \false, \unknown, \true\}$, where $D$ is the domain of $\tvstr$. 
	We extend the truth and precision order pointwisely to functions with codomain $\{ \false, \unknown, \true\}$, i.e., given $f,g : X \to \{ \false, \unknown, \true\}$, we let $f \leqt g$ iff $f(x) \leqt g(x)$ for all $x \in X$, and similarly for $\leqp$.
	Furthermore, we extend these orders to three-valued structures $\tvstr, \tvstrtwo$ by letting $\tvstr \leqt \tvstrtwo$ iff $\tvstr$ and $\tvstrtwo$ have the same domain and vocabulary, the same interpretation for non-predicate symbols, and $P^{\tvstr} \leqt P^{\tvstrtwo}$ for every predicate symbol $P$, and similarly for $\leqp$.
	Note that a two-valued structure can be seen as a special case of a three-valued structure, by identifying a subset $S$ of $D^n$ with its characteristic function, sending every element of $S$ to $\true$ and every non-element to $\false$. 
	In this way, two-valued structures correspond to maximally precise three-valued structures. 
	In the sequel, we will frequently abuse this correspondence, for instance by writing $P^\str(\bar{a}) = \true$ instead of $\bar{a} \in P^\str$ for two-valued structures $\str$.
	
	The value $t^\tvstr$ of a term $t$ in a three-valued structure $\tvstr$ is defined as for two-valued structures, via the rule $f(t_1, \dots, t_n)^{\tvstr} = f^{\tvstr}(t_1^{\tvstr}, \dots, t_n^{\tvstr})$.
	Given an \fo-formula $\form$ and a three-valued structure $\tvstr$, we define the \emph{truth value} or \emph{truth assignment} $\form^{\tvstr}$ of $\form$ in $\tvstr$ by structural induction on $\form$, via the following rules:\footnote{
		This truth assignment is known as Kleene's truth assignment.
		In \cite{KR/DeneckerV14}, Denecker and Vennekens also allow other truth assignments.
		The only conditions that such an assignment must satisfy are $\leqp$-monotonicity (i.e., $\tvstr \leqp \tvstrtwo$ implies $\form^{\tvstr} \leqp \form^{\tvstrtwo}$) and restriction to the standard truth assignment on two-valued structures.
		Another sensible truth assignment, for instance, is \emph{supervaluation}, which defines $\form^{\tvstr}$ as $\mathsf{glb}_{\leqp} \{ \form^{\str} \mid \str \text{ is two-valued and } \tvstr \leqp \str\}$ (where $\mathsf{glb}$ stands for greatest lower bound) \cite{jp/vanFraassen66}.
		Supervaluation is more precise than Kleene's truth assignment.
		For instance, for $P^{\tvstr} = \unknown$, Kleene's truth assignment of $P \lor \lnot P$ is $\unknown$, while the supervaluation of $P \lor \lnot P$ is $\true$.
		Different truth assignments lead to different notions of \emph{well-founded model}.
		Kripke's truth assignments leads to the \emph{standard} well-founded model, while supervaluation leads to the \emph{ultimate} well-founded model \cite{DeneckerMT04}. 
		We restrict to Kleene's truth assignment for simplicity, but since our proof of soundness w.r.t.\ the well-founded semantics does not rely on the specifics of Kleene's truth assignment, our sequent calculi are also sound w.r.t.\ other notions of the well-founded semantics, corresponding to different truth assignments.
	}
	\begin{itemize}
		\item $(t=s)^{\tvstr} = \true$ if $t^{\tvstr} = s^{\tvstr}$ and $(t=s)^{\tvstr} = \false$ if $t^{\tvstr} \neq s^{\tvstr}$; %
		\item $P(\bar{t})^{\tvstr} = P^{\tvstr}\!(\bar{t}^{\tvstr})$; %
		\item $(\lnot \form)^{\tvstr} = \lnot \form^{\tvstr}$;
		\item $(\form \land \formtwo)^{\tvstr} = \min_{\leqt}\{\form^{\tvstr}, \formtwo^{\tvstr}\}$;
		\item $(\forall x: \form)^{\tvstr} = \min_{\leqt} \{ \form^{\tvstr[x:a]} \mid a \in \dom{\tvstr} \}$.
	\end{itemize}
	The rules for $\lor$, $\Rightarrow$ %
	and $\exists$ can de derived from the above rules through reformulation in terms of $\lnot$, $\land$ and $\forall$.
	It is straightforward to verify that this truth assignment is $\leqp$-monotone, i.e., $\tvstr \leqp \tvstrtwo$ implies $\form^{\tvstr} \leqp \form^{\tvstrtwo}$; %
	and that if $\tvstr$ is two-valued, then $\form^{\tvstr}$ takes the standard truth value of $\form$ in $\tvstr$.
	We extend the satisfaction relation $\models$ for \fo to three-valued structures by letting $\tvstr \models \form$ iff $\form^{\tvstr} = \true$. 
	
	Given a definition $\defn$, the interpretation of the defined predicates of $\defn$ generally depends on an interpretation of the parameters of $\defn$.
	A structure with vocabulary $\pars{\defn}$ is called a $\defn$-\emph{context}.
	For the remainder of the section, we fix a definition $\defn$ and a $\defn$-context $\cont$, and come to the notion of \emph{well-founded model} of $\defn$ in $\cont$.

	Given a tuple of non-logical symbols $\bar{\nlsym}$ (possibly) occurring in a formula $\form$, together with a tuple of corresponding values $\bar{\alpha}$ in $D$, we write $\form(\bar{\alpha}/\bar{\nlsym})^{\tvstr}$, or shortly $\form(\bar{\alpha})^{\tvstr}$, to refer to $\form^{\tvstr[\bar{\nlsym} : \bar{\alpha}]}$. 
	A \emph{domain atom} is a pair $(P, \bar{a})$, where $P$ is a predicate symbol and $\bar{a} \in D^{\ar{P}}$. 
	With an abuse of notation, we will often write $P(\bar{a})$ for $(P, \bar{a})$, and $P(\bar{a})^{\tvstr}$ for $P^{\tvstr}\!	(\bar{a})$.
	
	Given a definition $\defn$, a defined predicate $P$ of $\defn$, and an $\ar{P}$-tuple of object symbols $\bar{y}$ not occurring freely in the body $\form$ of any definitional rule of $\defn$ of the form $\defrul$, we write $\form_P(\bar{y})$ to denote the disjunction of all formulae $\exists \bar{x}: \bar{y} = \bar{t} \land \form$ such that $\defrul$ is a rule in $\defn$.
	Intuitively, $\form_P(\bar{y})$ merges all bodies of instantiations of rules in $\defn$ that could derive $P(\bar{y})$.
	For instance, for the even number definition from Example \ref{ex:def_even}, $\form_{\mathit{Even}}(y)$ is equal to $(y = \mathit{zero} \land \top) \lor (\exists n: y=\mathit{succ}(n) \land \mathit{Nat}(n) \land \lnot \mathit{Even}(n))$.
	Given $\bar{a} \in D^{\ar{P}}$, we write $\form_{P(\bar{a})}^{\tvstr}$ for $\form_{P}(\bar{y})^{\tvstr[\bar{y} : \bar{a}]}$.
	Given a set of predicate symbols $\Pi$, we denote by $At_D^\Pi$ the set of domain atoms $(P, \bar{a})$ for which $P \in \Pi$.
	
	\begin{definition} \label{def:refinement}
		Let $\tvstr$ be a three-valued $\sym{\defn}$-structure with domain $D$. 
		A three-valued $\sym{\defn}$-structure $\tvstr'$ is said to be a \emph{$\defn$-refinement} of $\tvstr$ if there exists a non-empty set $U \subseteq At_{D}^{\defp{\defn}}$ such that $A^{\tvstr} = \unknown$ for all $A \in U$, and either
		\begin{itemize}
			\item $\tvstr' = \tvstr[U : \true]$ and for all $A \in U$, $\form_A^{\tvstr} = \true$, or
			\item $\tvstr' = \tvstr[U : \false]$ and for all $A \in U$, $\form_A^{\tvstr'} = \false$.
		\end{itemize}
		Here $\tvstr[U : \true]$ refers to the structure identical to $\tvstr$, except that its interpretation of $P$ maps $\bar{a}$ to $\true$ for all $(P,\bar{a}) \in U$, and similarly for $\tvstr[U : \false]$.
	\end{definition}
	
	Note the asymmetry between deriving truth and falsity of defined atoms: the truth of defined atoms can only be derived if the corresponding bodies are true in $\tvstr$, while the falsity of defined atoms can be derived if this implies the falsity of the corresponding bodies in $\tvstr'$. 
	The second kind of refinement relates to the notion of \emph{unfounded set} in the well-founded semantics for logic programs \cite{GelderRS91}.
	
	\begin{definition} \label{def:well-founded-induction}
		A \emph{well-founded induction} of $\defn$ in $\cont$ is a sequence $(\tvstr_i)_{0 \leq i \leq \beta}$ of three-valued $\sym{\defn}$-structures extending $\cont$, such that:
		\begin{itemize}
			\item $P^{\tvstr_0}(\bar{a}) = \unknown$ for all $P \in \defp{\defn}$ and all $\bar{a} \in \dom{\tvstr_0}^{\ar{P}}$;
			\item $\tvstr_{i+1}$ is a $\defn$-refinement of $\tvstr_i$ for every ordinal $0 \leq i < \beta$; and 
			\item $\tvstr_\lambda$ is the $\leqp$-limit of $(\tvstr_i)_{0 \leq i < \lambda}$ for every limit ordinal $0 < \lambda \leq \beta$. 
		\end{itemize}
		A well-founded induction $(\tvstr_i)_{0 \leq i \leq \beta}$ is \emph{terminal} if its \emph{limit} $\tvstr_\beta$ has no $\defn$-refinement. 
		The \emph{well-founded model} of $\defn$ in $\cont$ is the limit of any terminal well-founded induction of $\defn$ in $\cont$. 
	\end{definition}
	
	\begin{remark} \label{rem:props-wf-inductions}
		Note that well-founded inductions are $\leqp$-increasing, which guarantees that the $\leqp$-limits are well-defined.\footnote{
			The existence of a $\leqp$-limit follows from a generalization of the Knaster-Tarski fixpoint theorem \cite{au/Markowsky76}, as the $\leqp$-relation defines a \emph{chain-complete partial order} on the set of three-valued structures.
		} 
		Since the truth function is $\leqp$-monotone, well-founded inductions $(\tvstr_i)_{0 \leq i \leq \beta}$ have the property that once $\form^{\tvstr_i} \in \{\false, \true\}$ for some $i$, then $\form^{\tvstr_i} = \form^{\tvstr_j}$ for all $j \geq i$.
	\end{remark}
	
	The following result %
	guarantees that the notion of well-founded model %
	is well-defined.
	
	\begin{theorem}[Denecker and Vennekens %
		\cite{KR/DeneckerV14}]
		Any definition has a terminal well-founded induction in any context. 
		Furthermore, all terminal well-founded inductions of a given definition in a given context have the same limit.
	\end{theorem}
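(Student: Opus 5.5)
The statement has two parts, existence and uniqueness of the limit. I would handle them separately, fixing $D = \dom{\cont}$ and writing $At = At_D^{\defp{\defn}}$ throughout.

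\textbf{Existence.} I would build the sequence by transfinite recursion. Set $\tvstr_0$ to be $\cont$ expanded with all defined atoms unknown. At a successor stage, if $\tvstr_i$ has a $\defn$-refinement, choose one (using choice) as $\tvstr_{i+1}$; otherwise stop. At a limit $\lambda$, take the $\leqp$-limit, i.e.\ the pointwise least upper bound, which exists by Remark \ref{rem:props-wf-inductions}. Every refinement turns a non-empty set of unknown atoms of $At$ into known ones. Hence the map sending $i$ to the set of atoms made known at step $i$ is injective into the non-empty subsets of $At$. So the recursion must halt before the Hartogs number of $At$, and the final structure has no refinement.

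\textbf{A consistency invariant.} For uniqueness I would first prove a lemma about every well-founded induction $(\tvstrtwo_j)_j$ with limit $\tvstrtwo$:
\begin{itemize}
\item every atom $A$ true in $\tvstrtwo$ has $\form_A^{\tvstrtwo} = \true$;
\item every atom $A$ false in $\tvstrtwo$ has $\form_A^{\tvstrtwo} = \false$.
\end{itemize}
This follows because the body is two-valued at the stage where $A$ is decided: at stage $j$ for truth, and at stage $j+1$ for falsity. The body then keeps that value by $\leqp$-monotonicity (Remark \ref{rem:props-wf-inductions}).

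\textbf{Uniqueness.} Take two terminal inductions $(\tvstr_i)_{i\le\beta}$ and $(\tvstrtwo_j)_{j\le\gamma}$ with limits $\tvstr$ and $\tvstrtwo$. I would show $\tvstr_i \leqp \tvstrtwo$ for all $i$ by transfinite induction on $i$. By symmetry this gives $\tvstr = \tvstrtwo$. The base case and limit cases are immediate.

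\emph{True refinement step.} Each $A\in U$ has $\form_A^{\tvstr_i}=\true$, so $\form_A^{\tvstrtwo}=\true$ by monotonicity. $A$ cannot be false in $\tvstrtwo$, by the invariant. $A$ cannot be unknown either: the set of unknown atoms of $\tvstrtwo$ with true body would give a refinement of the terminal $\tvstrtwo$. So $A$ is true in $\tvstrtwo$.

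\emph{False refinement step (the main obstacle).} Here we know $\form_A^{\tvstr_i[U:\false]}=\false$ for $A\in U$, and must show every $A\in U$ is false in $\tvstrtwo$.
\begin{itemize}
\item Let $U'' = \{A \in U \mid A^{\tvstrtwo} \neq \false\}$. By the induction hypothesis, $\tvstr_i[U:\false] \leqp \tvstrtwo[U'':\false]$.
\item Hence every $A\in U''$ has $\form_A^{\tvstrtwo[U'':\false]}=\false$.
\item Suppose some atom of $U''$ is true in $\tvstrtwo$. Pick one, $A$, that was made true at the earliest stage $j$ among such atoms, so $\form_A^{\tvstrtwo_j}=\true$. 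At stage $j$ all atoms of $U''$ that are true in $\tvstrtwo$ are still unknown, while every other atom already decided in $\tvstrtwo_j$ has the same value as in $\tvstrtwo[U'':\false]$. So $\tvstrtwo_j \leqp \tvstrtwo[U'':\false]$, and hence $\form_A^{\tvstrtwo[U'':\false]}=\true$. This contradicts the previous point.
\item So $U''$ consists only of atoms unknown in $\tvstrtwo$. If $U''$ were non-empty, $\tvstrtwo[U'':\false]$ would be a refinement of the terminal $\tvstrtwo$, which is impossible. Hence $U''=\emptyset$.
\end{itemize}

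This minimal-stage argument, which rules out atoms of $U$ that are true in $\tvstrtwo$, is the delicate point of the whole proof. Once it is in place, the rest is routine bookkeeping with $\leqp$-monotonicity.
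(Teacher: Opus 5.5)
The paper does not prove this theorem; it imports it from Denecker and Vennekens. There it is obtained algebraically, inside approximation fixpoint theory, by showing that every terminal induction converges to the well-founded fixpoint of the associated approximator. Your argument is a correct, self-contained proof by a different, elementary route.

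\textbf{Existence.} The transfinite recursion works.

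\textbf{Uniqueness.} You show directly that every stage $\tvstr_i$ of an arbitrary induction lies $\leqp$-below the limit $\tvstrtwo$ of any terminal induction. I checked the delicate false-refinement case:
\begin{itemize}
\item $\tvstr_i[U:\false] \leqp \tvstrtwo[U'':\false]$ holds, because atoms of $U\setminus U''$ are already false in $\tvstrtwo$.
\item Your earliest-stage choice does give $\tvstrtwo_j \leqp \tvstrtwo[U'':\false]$.
\item Once the true atoms are excluded, a non-empty $U''$ makes $\tvstrtwo[U'':\false]$ a genuine refinement of $\tvstrtwo$.
\end{itemize}

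\textbf{Comparison with the algebraic route.} Your proof does not give a canonical description of the common limit, for instance as the well-founded fixpoint or as the limit of the induction that always takes maximal steps. It also lacks the lattice-theoretic generality of the algebraic route. In return it has two advantages:
\begin{itemize}
\item It uses nothing about the truth assignment except $\leqp$-monotonicity. So it also covers the supervaluation variant that the paper's footnote invokes when claiming soundness for other well-founded semantics.
\item Your transfinite induction yields a stronger by-product than uniqueness: every induction, terminal or not, stays $\leqp$-below the well-founded model. This is exactly what Appendix~\ref{app:generic-calculus} tacitly needs. There $\otf$ is defined as a minimum over arbitrary (not necessarily terminal) inductions. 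The (case) and (def R) cases then assume that an atom decided in such an induction gets the value it has in $I$.
\end{itemize}

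\textbf{A small slip in the existence part.} An injection into the non-empty subsets of $At$ only bounds the length by the Hartogs number of $\mathcal{P}(At)$. To get the Hartogs number of $At$ itself, use that the sets decided at different steps are pairwise disjoint, and pick one atom from each. Either bound suffices for termination.
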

	
	Well-founded models are not always two-valued.
	
	\begin{definition}
		We say that a definition $\defn$ is \emph{total} in a $\defn$-context $\cont$ is the well-founded model of $\defn$ in $\cont$ is two-valued.
		Otherwise, we say that $\defn$ is \emph{non-total} in $\cont$.
		If $\defn$ is total in any $\defn$-context, %
		then $\defn$ is \emph{total}.
	\end{definition}
	
	Non-totality of $\defn$ in $\cont$ means that the well-founded inductions of $\defn$ in $\cont$ leave at least one defined atom unknown. 
	This usually indicates some sort of flaw in %
	$\defn$ or $\cont$.
	
		\begin{example}
			Let $\defn$ be the even number definition from Example \ref{ex:def_even}, and let $\cont$ be the $\defn$-context with $ \dom{\defn} = \nat$ and with the standard interpretation for $\mathit{zero}$ and $\mathit{succ}$, i.e., $\mathit{zero}^\cont = 0$ and $\mathit{succ}^\cont(n) = n+1$ for all $n \in \nat$. 
			Recall that $\form_{\mathit{Even}}(y)$ is equal to $(y = \mathit{zero} \land \top) \lor \exists x: y = s(x) \land \lnot \mathit{Even}(x)$.
			As can be checked, the assignments 
			\begin{align*}
				\mathit{Even}^{\tvstr_0}(n) &= \unknown \, \text{  for } n \in \nat\\
				\mathit{Even}^{\tvstr_1}(0) &= \true, \ \ \mathit{Even}^{\tvstr_1}(n) = \unknown  \, \text{  for  } n \in \nat \setminus \{0\}\\
				\mathit{Even}^{\tvstr_2}(0) &= \true, \ \ \mathit{Even}^{\tvstr_2}(1) = \false, \ \ \mathit{Even}^{\tvstr_2}(n) = \unknown \, \text{  for } n \in \nat \setminus \{0,1\}\\
				\mathit{Even}^{\tvstr_3}(0) &= \true, \ \ \mathit{Even}^{\tvstr_3}(1) = \false, \ \ \mathit{Even}^{\tvstr_3}(2) = \true, \ \ \mathit{Even}^{\tvstr_3}(n) = \unknown \, \text{  for } n \in \nat \setminus \{0,1, 2\}\\
				& \; \; \vdots \\
				\mathit{Even}^{\tvstr_\omega}(n) &= \true \, \text{  for } n \in 2\nat ,\ \ \mathit{Even}^{\tvstr_\omega}(n) = \false \, \text{  for } n \in 2\nat+1
			\end{align*}
			specify a terminal well-founded induction $(\tvstr_i)_{0 \leq i \leq \omega}$ of $\defn$ in $\cont$ (in fact, the only one). 
			Here $\omega$ denotes the first infinite ordinal.  
			Since the well-founded model $\tvstr_\omega$ is two-valued, $\defn$ is total in $\cont$. 
			The definition $\defn$ is not total in general, however.
			Indeed, let $\cont'$ be the $\defn$-context such that $ \dom{\defn} = \{0,1\}$, $\mathit{zero}^{\cont'} = 0$, $\mathit{succ}^{\cont'}\!(0) = 1$ and $\mathit{succ}^{\cont'}\!(1) = 1$.
			As can be checked, the assignments 
			\begin{align*}
				\mathit{Even}^{\tvstr_0}(0) &= \unknown, \ \ \mathit{Even}^{\tvstr_0}(1) = \unknown\\
				\mathit{Even}^{\tvstr_1}(0) &= \true, \ \ \mathit{Even}^{\tvstr_1}(1) = \unknown
			\end{align*}
			specify a terminal well-founded induction $(\tvstr_i)_{0 \leq i \leq \omega}$ of $\defn$ in $\cont'$ (in fact, the only one). 
			Indeed, $\tvstr_1$ has no $\defn$-refinements, since $\form_{\mathit{Even}(1)}^{\tvstr_1} = \false$ and $\form_{\mathit{Even}(1)}^{\tvstr_1[\mathit{Even}(1):\false]} = \true$.
			Thus, the well-founded model of $\defn$ in $\cont'$ is strictly three-valued, and therefore, $\defn$ is non-total in $\cont'$.
		\end{example}
	
	The \emph{well-founded satisfaction relation} $\modelswf$ is a binary relation between two-valued structures $\str$ and \foid-formulae $\form$ of which all non-logical symbols are interpreted by $\str$, defined by the following rules:
	\begin{itemize}
		\item $\str \modelswf {t=s}$ if $t^{\str} = s^{\str}$;
		\item $\str \modelswf P(\bar{t})$ if $\bar{t} \in P^{\str}$;
		\item $\str \modelswf \neg \form$ if not $\str \modelswf \form$;
		\item $\str \modelswf \form \land \formtwo$ if $\str \modelswf \form$ and $\str \modelswf \formtwo$;
		\item $\str \modelswf {\forall x: \form}$ if $\str[x:a] \modelswf \form$ for all $a \in \dom{\str}$;
		\item $\str \modelswf \defn$ if $\str|_{\sym{\defn}}$ is the well-founded model of $\defn$ in $\str|_{\pars{\defn}}$.
	\end{itemize}
	The rules for $\lor$, $\Rightarrow$ %
	and $\exists$ can de derived from the above rules through reformulation in terms of $\lnot$, $\land$ and $\forall$.
	An \foid-formula $\form$ is said to be \emph{valid (under the well-founded semantics)} if $\str \modelswf \form$ for all structures $\str$ interpreting the non-logical symbols of $\form$.

\section{A Generic Infinitary Calculus} \label{app:generic-calculus}

In this appendix, we define the generic infinitary calculus \scginf from Section \ref{sec:generic-calculus}, show that our infinitary calculus \scinf is an instance of \scginf, and present the results about cycle normalization and trace manifolds mentioned in Section \ref{sec:generic-calculus}.
The conceptualization in this appendix is strongly based on \cite{Brotherston2006PhD,tableaux/Brotherston05}.

\subsection{The Generic Infinitary Calculus} \label{app:the-generic-calculus}

We define the generic calculus \scginf for the logic \logic, subject to the following conditions:
\begin{enumerate}
	\item \logic has a notion of \emph{sequent}, a notion of \emph{interpretation}, and a notion of \emph{satisfaction relation} $\models$ between interpretations and sequents; 
	\item the inference rules in \scginf are of the form 
	\[
	\begin{prooftree}
		\hypo{\seqabst_1}
		\hypo{\dots}
		\hypo{\seqabst_n}
		\infer3[(R)]{\seqabst}
	\end{prooftree}
	\]
	for sequents $\seqabst_1$, $\dots$, $\seqabst_n$ and $\seqabst$ in \logic.
\end{enumerate}
We denote the set of sequents in \logic by $\seqsabst$, the set of interpretations in \logic by $\intset$, and the set of inference rules in \scginf by $\rules$.

	\begin{definition} \label{def:generalized-trace}
		Let $\tset$ be a set and let $\tval \subseteq \tset \times \seqsabst$ be such that for any $\seqabst \in \seqsabst$, there are only finitely many $\trace \in \tset$ such that $(\trace, \seqabst) \in \tval$.
		Let $\tpair: \tset \times \tset \to (\seqsabst \times \rules \times \seqsabst \to \{0,1,2\})$ be a computable function such that for all $\trace, \trace' \in \tset$, all $\seqabst, \seqabst' \in \seqsabst$, and all $R \in \rules$: if $(\trace, \seqabst) \notin \tval$ or $(\trace', \seqabst') \notin \tval$, then $\tpair(\trace,\trace')(\seqabst ,R , \seqabst')=0$.
		Suppose that there exists a function $\otf: \tset \times \intset \to \ord$, where $\ord$ is some initial fragment of the ordinals, such that for every $I \in \intset$, every \scginf-pre-proof $\deriv = (N, E, r, \seqfun, \rulfun)$ (defined similarly as an \scinf-pre-proof) and every $n \in N$: if $I \not\models \seqfun(n)$, then there exists an $n' \in N$ and an $I' \in \intset$ such that $(n, n') \in E$ and $I' \not\models \seqfun(n')$, and for all $\trace, \trace' \in \tset$:
		\begin{enumerate}
			\item if $\tpair(\trace, \trace')(\seqfun(n), \rulfun(n), \seqfun(n'))=1$, then $\otf(\trace',I') \leq \otf(\trace,I)$;
			\item if $\tpair(\trace, \trace')(\seqfun(n), \rulfun(n), \seqfun(n'))=2$, then $\otf(\trace',I') < \otf(\trace,I)$.
		\end{enumerate}
		Then we call $\tval$ a \emph{trace value relation} and $\tpair$ a \emph{trace pair function} for \scginf.
		We call $\otf$ the \emph{ordinal trace function} associated with $\tval$ and $\tpair$. 
		A pair $(\trace, \trace') \in \tset \times \tset$ is said to be a \emph{valid trace pair} on a pair of nodes $(n, n') \in N \times N$ if $\tpair(\trace, \trace')(\seqfun(n), r(n), \seqfun(n')) \neq 0$, and is said to be a \emph{progressing trace pair} on $(n, n')$ if $\tpair(\trace, \trace')(\seqfun(n), r(n), \seqfun(n')) = 2$. 
		A sequence $(\trace_i)_i$ of elements of $\tset$ is said to be a \emph{generalized trace} along a path $(n_i)_i$ in $\deriv$ if for all $i$, $(\trace_i, \trace_{i+1})$ is a valid trace pair on $(n_i, n_{i+1})$. 
		We call $j$ a \emph{progression point} of $(\trace_i)_i$ if $(\trace_j, \trace_{j+1})$ is a progressing trace pair.
		If a generalized trace has infinitely many progression points, we call it \emph{infinitely progressing}.
		We say that $\deriv$ satisfies the \emph{generalized trace condition} w.r.t.~$\tval$, $\tpair$ and $\otf$ if for every infinite branch $\pth$ in $\deriv$, there exists an infinitely progressing generalized trace along a tail of $\pth$.
		An \emph{\scginf-proof} is an \scginf-pre-proof that satisfies the generalized trace condition.
	\end{definition}

	\begin{proposition} \label{prop:instance-generic-inf-calculus}
		There exists a trace value relation $\tval$, a trace pair function $\tpair$ and an associated ordinal trace function $\otf$ such that an \scinf-pre-proof satisfies the generalized trace condition w.r.t.~$\tval$, $\tpair$ and $\otf$ (according to Definition \ref{def:generalized-trace}) iff it satisfies the trace condition (according to Definition \ref{def:trace}).
	\end{proposition}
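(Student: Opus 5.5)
We take $\tset$ to be the set of \fo-formulae and let $(\trace, \seqabst) \in \tval$ iff $\seqabst = \sregseq$ with $\trace \in \Gamma \cup \Delta$. Finiteness holds because $\Gamma$ and $\Delta$ are finite. The trace pair function $\tpair(\trace,\trace')(\seqabst,R,\seqabst')$ is $0$ unless both pairs are in $\tval$ and $(\trace,\trace')$ satisfies the local clause of Definition \ref{def:trace} for the instance $R$ with conclusion $\seqabst$ and premise $\seqabst'$. It equals $2$ if in addition $R$ is (case) with $\trace$ active, and $1$ otherwise. This is computable: for (subst) and the non-active cases we check a syntactic equality; for active formulae we check the finitely many auxiliary formulae; for ($=$L) we decide whether $\trace$ and $\trace'$ arise as $\form[t/x,s/y]$ and $\form[s/x,t/y]$ from some $\form$. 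The last check is decidable by a finite search over which occurrences of $t$ and $s$ to abstract. With these choices, generalized traces are exactly traces and progression points coincide, so the equivalence of the two trace conditions is immediate. The real content is constructing $\otf$.

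For $\otf$, take interpretations $\intset$ to be pairs $(\str, \tvstrtwo)$. Here $\str$ is a structure interpreting the sequent, and $(\tvstrtwo_i)_{i\le\beta}$ is a sequence of three-valued structures over the full vocabulary expanding a terminal well-founded induction $\wfind$ of $\defn$ in $\str|_{\pars{\defn}}$, with parameters and free symbols interpreted as in $\str$ and limit $\tvstrtwo_\beta = \str$. Define $\otf(\trace,(\str,\tvstrtwo))$ as the least $\alpha$ with $\trace^{\tvstrtwo_\alpha} \in \{\true,\false\}$, or $0$ if $\trace$ is not interpreted. This value exists because $\str \modelswf \defn$ makes the limit two-valued. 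Non-satisfaction $I \not\models \seqabst$ means $\str$ is a countermodel. By Remark \ref{rem:props-wf-inductions}, once $\trace$ is decided it stays fixed and agrees with $\str$.

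The main work is the local soundness lemma. Fix a node $n$ whose conclusion has countermodel $I$. We must find a premise $n'$ and a countermodel $I'$ such that every valid trace pair is non-increasing in $\otf$, and every progressing pair strictly decreases it.

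For the connective rules, take the premise that fails in $\str$ (for $\forall$R and $\exists$L, expand $\str$ at the fresh variable with a witness). Keep the same induction, extended to the new symbol. Then $\otf$ of an active formula is the maximum over its relevant components; for $\lor$L and $\land$R it is the value of the false/true component in Kleene logic. Either way an auxiliary formula cannot have larger rank. Non-active formulae keep their values because the fresh symbol does not occur in them.

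For (subst) and ($=$L), reinterpret the symbol $x$ as $t^\str$, respectively swap the values appropriately. The traced formula then has literally the same truth values at every stage.

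For (def R), the premise body is false in $\str$. Traces through the active atom are not required to be related (the active $P(\bar t)$ has no auxiliary trace unless the clause demands it); in Definition \ref{def:trace} the body is auxiliary. So we need the rank of the body to be at most that of the head. If $P(\bar t)$ became false at stage $\alpha$, then by the definition of refinement the body is false in $\tvstrtwo_\alpha$, giving rank $\le \alpha$.

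The obstacle is (case). The atom $P(\bar v)$ is true in $\str$ and first becomes true at some stage $\alpha+1$, via a refinement $\tvstr_\alpha \to \tvstr_{\alpha+1}$ in which $\form_{P(\bar v)}^{\tvstr_\alpha} = \true$. Limit stages never decide atoms newly, since a limit only collects what earlier stages decided. Hence some disjunct $\exists\bar x: \bar v=\bar t\land\form$ is already true at stage $\alpha$. Choose that rule's premise and interpret $\bar y$ as the witnesses. The equality and the body are then decided by stage $\alpha < \alpha+1$, which gives the strict decrease. For (cut), pick the failing premise, whose rank conditions are trivial. Weakening, axioms and the remaining rules are routine.

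With this lemma, Definition \ref{def:generalized-trace} is satisfied, so $\tval$ and $\tpair$ are legitimate, and the equivalence of the two trace conditions holds as noted above.
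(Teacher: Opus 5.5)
Your core argument is the paper's. $\tval$ and $\tpair$ encode the local clauses of Definition~\ref{def:trace}, with value $2$ exactly when a trace passes from the active atom of a (case) to an auxiliary formula. The ordinal of a traced formula is the first stage at which it is decided in a well-founded induction. The strict decrease at (case) comes from the truth-refinement condition: the merged body is already true one stage earlier. (def R) uses that a falsity refinement makes every body false at the new stage.

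The bookkeeping differs. The paper keeps $\intset$ as two-valued structures and avoids dependence on a particular induction by taking the least decision stage over all well-founded inductions. Since $\otf$ must then know which definition to run, its trace values are pairs $(\defn,\form)$, and $\tpair$ is $0$ whenever the two definitions differ. Its final equivalence step therefore assumes that all sequents of a derivation carry the same definition.

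By storing one terminal induction in the interpretation, you can take $\tset$ to be plain formulae. Generalized traces are then literally traces, even across (subst) or ($=$L) steps that substitute into $\defn$. The price is that you change the logic's notion of interpretation, so you need three additions:
\begin{itemize}
\item[(i)] Define satisfaction on pairs so that $(\str,\tvstrtwo)$ fails a sequent only if $\str$ is a countermodel and $\tvstrtwo$ expands a terminal induction of that sequent's definition with limit $\str$.
\item[(ii)] Transport the induction in (subst), e.g.\ to $\tvstrtwo_i[x:t^{\str}]$. This is a terminal induction of $\defn$ in the new context because $\chi^{\tvstrtwo_i[x:t^{\str}]}=\chi[t/x]^{\tvstrtwo_i}$ for every formula $\chi$, so the refinement steps for $\defn[t/x]$ and for $\defn$ coincide.
\item[(iii)] Note that every $\modelswf$-countermodel $\str$ extends to a failing pair, since a terminal induction with limit $\str|_{\sym{\defn}}$ exists. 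Only then does Brotherston's soundness theorem yield Theorem~\ref{thm:soundness}.
\end{itemize}

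One step fails as written. In ($\lor$L), ($\land$R) and ($\Rightarrow$L) you may not take an arbitrary premise that fails in $\str$, and in ($\forall$R) and ($\exists$L) you may not take an arbitrary witness. Suppose $\form\lor\formtwo$ is true in $\str$, with $\form$ decided at stage $2$ and $\formtwo$ only at stage $5$. Then both premises fail in $\str$ and the disjunction has rank $2$. Choosing the $\formtwo$-premise makes the valid pair $(\form\lor\formtwo,\formtwo)$ go from $2$ to $5$, violating condition~1. So your claim that an auxiliary formula cannot have larger rank is false for that choice.

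You must choose the component that realizes the rank of the active formula. In ($\forall$R), choose an element $a$ with $\form^{\tvstrtwo_\alpha[x:a]}=\false$, where $\alpha$ is the stage at which $\forall x: \form$ first becomes false; dually for ($\exists$L). This is exactly what the paper does in its ($\lor$L) and ($\forall$R) cases. With this correction and (i)--(iii), your proof goes through.
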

	
	\begin{proof}
		We start by specifying the trace value relation $\tval$, the trace pair function $\tpair$ and the associated ordinal trace function $\otf$.
		Let %
		\begin{equation*}
			\tset \coloneq \{ (\defn, \form) \mid \text{$\defn$ is an \foid-definition and $\form$ an \fo-formula} \}
		\end{equation*}
		and 
		\begin{equation*}
			\tval \coloneq \{ ((\defn, \form), (\sregseq)) \in \tset \times \seqsabst \mid \form \in \Gamma \cup \Delta\}
		\end{equation*}
		(where $\seqsabst$ is the set of \foid-sequents).
		Since \foid-sequents contain only finitely many formulae, there are only finitely many $\trace \in \tset$ such that $(\trace, \seqabst) \in \tval$ for any $\seqabst \in \seqsabst$.
		
		We define the trace pair function $\tpair: \tset \times \tset \to (\seqsabst \times \rules \times \seqsabst \to \{0,1,2\})$ as follows.
		Take $\trace, \trace' \in \tset$, $\seqabst, \seqabst' \in \seqsabst$ and $R \in \rules$.
		Write $\trace = (\defn, \form)$ and $\trace' = (\defn', \form')$.
		If $(\trace, \seqabst) \notin \tval$ or $(\trace', \seqabst') \notin \tval$, or if $\defn \neq \defn'$, then we set $\tpair(\trace,\trace')(\seqabst,R,\seqabst')= 0$.
		Now assume that $(\trace, \seqabst) \in \tval$, $(\trace', \seqabst') \in \tval$ and $\defn = \defn'$.
		If there is no instance of $R$ with conclusion $\seqabst$ and with (one of the) premise(s) $\seqabst'$, we also set $\tpair(\trace,\trace')(\seqabst,R,\seqabst')= 0$.
		If there is such an instance, then consider an \scinf-derivation $\deriv$ representing this instance.
		Concretely, the root $r$ of $\deriv$ has label $\seqabst$ and has one child per premise; in particular, $r$ has a child $n$ with label $\seqabst'$.
		If $(\form, \form')$ is not a trace along the path $(r,n)$ in $\deriv$, we set $\tpair(\trace,\trace')(\seqabst,R,\seqabst')= 0$.
		Otherwise, if $(\form, \form')$ has a progression point, %
		we set $\tpair(\trace,\trace')(\seqabst,R,\seqabst')= 2$, and if not, we set $\tpair(\trace,\trace')(\seqabst,R,\seqabst')= 1$.
		The function $\tpair$ is computable since each of its defining conditions is computable, 
		as is straightforward to check.
		In our definition of $\tpair$, we guaranteed that $\tpair(\trace,\trace')(\seqabst,R,\seqabst')= 0$ whenever $(\trace, \seqabst) \notin \tval$ or $(\trace', \seqabst') \notin \tval$.
		
		We define the ordinal trace function $\otf$ as follows.
		First, we set 
		\begin{equation*}
			\ord = \bigcup \{ \beta \mid \wfind \text{ is a well-founded induction of a definition } \defn \text{ in a $\defn$-context} \} .
		\end{equation*}
		Stated differently, $\ord$ is the supremum of all lengths of well-founded inductions.
		
		To define $\otf$, we take $\trace \in \tset$ and $I \in \intset$ (where $\intset$ is the set of two-valued structures), and we write $\trace = (\defn, \form)$.
		If $I$ is not a model of $\defn$ (under the well-founded semantics) or if $I$ does not interpret $\form$, we let $\otf(\trace, I)$ be arbitrary.
		Now assume that $I$ is a model of $\defn$ and that $I$ interprets $\form$.
		Let $\vocab$ be the vocabulary of $I$ and $\cont$ the $\defn$-context $I|_{\pars{\defn}}$.
		For every three-valued $\sym{\defn}$-structure $\tvstr$, we denote by $\tvstr^I$ the expansion of $\tvstr$ with vocabulary $\vocab$ such that $\nlsym^{\tvstr^I} = \nlsym^{\tvstr}$ for all $\nlsym \in \sym{\defn}$ and $\nlsym^{\tvstr^I} = \nlsym^{I}$ for all $\nlsym \in \vocab \setminus \sym{\defn}$.
		We set
		\begin{equation*}
			\otf(\trace, I) = \bigcap \{ i \mid \form^{\tvstr^I_i} \in \{\true, \false\} \text{ for a well-founded induction } \wfind \text{ of } \defn \text{ in } \cont \} .
		\end{equation*}
		Stated differently, $\otf(\trace, I)$ is the earliest point at which the truth or falsity of $\form$ is derived in any sequence $(\tvstr^I_i)_{0 \leq i \leq \beta}$ corresponding to a well-founded induction $\wfind$ of $\defn$ in $\cont$.
		Note that $\otf(\trace, I)$ is well-defined.
		Indeed, since $I$ interprets $\form$, $\form$ must be either true or false in $I$.
		Since $I \modelswf \defn$, there must be at least one terminal well-founded induction $\wfind$ of $\defn$ in $\cont$.
		This well-founded induction has the property that $\form^{\tvstr^I_i} \in \{\true, \false\}$ at a certain point $i$.
		
		Now take $I \in\intset$, let $\deriv = (N, E, r, \seqfun, \rulfun)$ be an \scinf-pre-proof, take $n \in N$, and suppose that $I \not\models \seqfun(n)$.
		We need to find an $n' \in N$ and an $I' \in \intset$ such that $(n, n') \in E$ and $I' \not\models \seqfun(n')$, and for all $\trace, \trace' \in \tset$:
		\begin{enumerate}
			\item if $\tpair(\trace, \trace')(\seqfun(n), \rulfun(n), \seqfun(n'))=1$, then $\otf(\trace',I') \leq \otf(\trace,I)$;
			\item if $\tpair(\trace, \trace')(\seqfun(n), \rulfun(n), \seqfun(n'))=2$, then $\otf(\trace',I') < \otf(\trace,I)$.
		\end{enumerate}
		We do so by case distinction on $\rulfun(n)$.
		Note that $\rulfun(n)$ cannot be (ax) or ($=$R), for otherwise, $\seqfun(n)$ would be valid, contradicting the fact that $I \not\models \seqfun(n)$.
		
		\textbf{Case:} $\rulfun(n)$ is (wk).
		Then $\seqfun(n)$ is of the form $\sregseq$, and $n$ has a single child $m$ such that $\seqfun(m)$ is of the form $\defn, \Gamma' \vdash \Delta'$ with $\Gamma' \subseteq \Gamma$ and $\Delta' \subseteq \Delta$.
		We pick $n' = m$ and $I' = I$.
		Clearly, $I' \not\models \seqfun(n')$.
		Take $\trace, \trace' \in \tset$, and suppose that $\tpair(\trace, \trace')(\seqfun(n), \rulfun(n), \seqfun(n'))=1$.
		Since in particular $(\trace, \seqfun(n)) \in \tval$ and $(\trace', \seqfun(n')) \in \tval$, it follows that $\trace = \trace'$, and hence, $\otf(\trace',I') = \otf(\trace,I)$.
		Note that $\tpair(\trace, \trace')(\seqfun(n), \rulfun(n), \seqfun(n'))$ cannot be equal to $2$, as $\rulfun(n)$ is not (case).
		
		\textbf{Case:} $\rulfun(n)$ is (subst).
		Then $\seqfun(n)$ is of the form $\defn[t/x], \Gamma[t/x] \vdash \Delta[t/x]$, and $n$ has a single child $m$ such that $\seqfun(m) \doteq \defn, \Gamma \vdash \Delta$.
		We pick $n' = m$ and $I' = I[x:t^I]$.
		Then clearly, $I' \not\models \seqfun(n')$.
		Take $\trace, \trace' \in \tset$, and suppose that $\tpair(\trace, \trace')(\seqfun(n), \rulfun(n), \seqfun(n'))=1$.
		Since in particular $(\trace, \seqfun(n)) \in \tval$ and $(\trace', \seqfun(n')) \in \tval$, $\trace'$ must be of the form $(\defn, \form)$ and $\trace$ of the form $(\defn[t/x], \form[t/x])$.
		Furthermore, by definition of $\tpair$, $\defn[t/x]$ must be equal to $\defn$.
		By definition of $\otf$, $\otf(\trace, I)$ is the smallest $i$ such that $\form[t/x]^{\tvstr^I_i} \in \{\true, \false\}$ for a well-founded induction $\wfind$ of $\defn$ in $I|_{\pars{\defn}}$.
		Let $\wfind$ be a well-founded induction of $\defn$ in $I|_{\pars{\defn}}$ that realizes this minimum.
		Since $x^{I'} = t^{I}$ for all $i$, it follows that $x^{\tvstr^{I'}_i} = t^{\tvstr^{I}_i}$ for all $i$, and hence, $\form^{\tvstr^{I'}_i} = \form[t/x]^{\tvstr^I_i}$ for all $i$.
		Thus, $\otf(\trace, I)$ is the smallest $i$ such that $\form^{\tvstr^{I'}_i} \in \{\true, \false\}$.
		Since $\defn = \defn[t/x]$ and $I|_{\pars{\defn[t/x]}} = I'|_{\pars{\defn[t/x]}}$, 
		$\wfind$ is also a well-founded induction of $\defn[t/x]$ in $I'|_{\pars{\defn[t/x]}}$, and therefore, $\otf(\trace',I') \leq \otf(\trace,I)$.
		As before $\tpair(\trace, \trace')(\seqfun(n), \rulfun(n), \seqfun(n'))$ cannot be equal to $2$, since $\rulfun(n)$ is not (case).
		
		\textbf{Case:} $\rulfun(n)$ is (cut).
		Then $\seqfun(n)$ is of the form $\defn, \Gamma \vdash \Delta$, and $n$ has two children $m_1$ and $m_2$ such that $\seqfun(m_1) \doteq \defn, \Gamma \vdash \form, \Delta$ and $\seqfun(m_2) \doteq \defn, \Gamma, \form \vdash \Delta$ for an \fo-formula $\form$.
		If $I$ interprets $\form$, we let $I' = I$.
		Otherwise, we let $I'$ be an arbitrary expansion of $I$ that interprets all non-logical symbols in $\form$.
		If $I' \models \form$, we let $n' = m_2$, and if $I' \not\models \form$, we let $n' = m_1$.
		This way, we ensure that $I' \not\models \seqfun(n')$.
		Take $\trace, \trace' \in \tset$, and suppose that $\tpair(\trace, \trace')(\seqfun(n), \rulfun(n), \seqfun(n'))=1$.
		By definition of $\tpair$, $\trace = \trace'$, and hence, $\otf(\trace',I') = \otf(\trace,I)$.
		
		\textbf{Case:} $\rulfun(n)$ is ($\lnot$L).
		Then $\seqfun(n)$ is of the form $\defn, \Gamma, \lnot \form \vdash \Delta$, and $n$ has a single child $m$ with $\seqfun(m) \doteq \defn, \Gamma \vdash \form, \Delta$.
		We pick $n' = m$ and $I' = I$.
		Clearly, $I' \not\models \seqfun(n')$.
		Take $\trace, \trace' \in \tset$, and suppose that $\tpair(\trace, \trace')(\seqfun(n), \rulfun(n), \seqfun(n'))=1$.
		Then $\trace$ is of the form $(\defn, \formtwo)$ and $\trace'$ of the form $(\defn, \formtwo')$.
		By definition of $\tpair$, either $\formtwo \doteq \formtwo'$ or $\formtwo \doteq \lnot \form$ and $\formtwo' \doteq \form$.
		In both cases, we have that $\otf(\trace',I') = \otf(\trace,I)$.
		Indeed, this is trivial in the first case and in the second case, it follows from the fact that $(\lnot \form)^\tvstr = \true$ iff $\form^\tvstr = \false$ for any three-valued structure $\tvstr$ that interprets $\form$.
		 
		 \textbf{Case:} $\rulfun(n)$ is ($\lnot$R).
		 This case is similar to the previous case.
		 
		 \textbf{Case:} $\rulfun(n)$ is ($\lor$L).
		 Then $\seqfun(n)$ is of the form $\defn, \Gamma, \form \lor \formtwo \vdash \Delta$, and $n$ has a two children $m_1$ and $m_2$ with $\seqfun(m_1) \doteq \defn, \Gamma , \form \vdash \Delta$ and $\seqfun(m_2) \doteq \defn, \Gamma , \formtwo \vdash \Delta$.
		 Since $(\form \lor \formtwo)^{\tvstr} = \true$ implies $\form^{\tvstr} = \true$ or $\formtwo^{\tvstr} = \true$ for all three-valued structures $\tvstr$, it follows that $\otf((\defn, \form), I) \leq \otf((\defn, \form \lor \formtwo), I)$ or $\otf((\defn, \formtwo), I) \leq \otf((\defn, \form \lor \formtwo), I)$.
		 In the first case, we pick $n' = m_1$ and in the second case, we pick $n' = m_2$.
		 In both cases, we pick $I' = I$.
		 Clearly, $I' \not\models \seqfun(n')$.
		 Take $\trace, \trace' \in \tset$, and suppose that $\tpair(\trace, \trace')(\seqfun(n), \rulfun(n), \seqfun(n'))=1$.
		 Then $\trace$ is of the form $(\defn, \formthree)$ and $\trace'$ of the form $(\defn, \formthree')$.

		 By definition of $\tpair$, either $\formthree \doteq \formthree'$, or $\formthree \doteq \form \lor \formtwo$ and $\formthree' \doteq \form$ or $\formthree' \doteq \formtwo$, depending on whether we picked $n' = m_1$ or $n' = m_2$.
		 In the first case, it trivially follows that $\otf(\trace',I') = \otf(\trace,I)$.
		 In the second case, our choice of $n'$ implies that $\otf(\trace',I') \leq \otf(\trace,I)$.
		 Indeed, we picked $n' = m_1$ in the case where $\otf((\defn, \form), I) \leq \otf((\defn, \form \lor \formtwo), I)$ and we picked $n' = m_2$ in the case where $\otf((\defn, \formtwo), I) \leq \otf((\defn, \form \lor \formtwo), I)$.
		 
		 \textbf{Case:} $\rulfun(n)$ is ($\lor$R).
		 Then $\seqfun(n)$ is of the form $\defn, \Gamma \vdash \form \lor \formtwo, \Delta$, and $n$ has one child $m$ with $\seqfun(m) \doteq \defn, \Gamma \vdash \form, \formtwo, \Delta$.
		 We pick $n' = m$ and $I'=I$.
		 Clearly, $I' \not\models \seqfun(n')$.
		 Take $\trace, \trace' \in \tset$, and suppose that $\tpair(\trace, \trace')(\seqfun(n), \rulfun(n), \seqfun(n'))=1$.
		 Then $\trace$ is of the form $(\defn, \formthree)$ and $\trace'$ of the form $(\defn, \formthree')$.
		 By definition of $\tpair$, either $\formthree \doteq \formthree'$, or $\formthree \doteq \form \lor \formtwo$ and $\formthree' \doteq \form$ or $\formthree' \doteq \formtwo$.
		 In the first case, it trivially follows that $\otf(\trace',I') = \otf(\trace,I)$.
		 For the second case, note that $(\form \lor \formtwo)^{\tvstr} = \false$ implies $\form^{\tvstr} = \false$ and $\formtwo^{\tvstr} = \false$ for all three-valued structures $\tvstr$.
		 Therefore, $\otf((\defn, \form), I) \leq \otf((\defn, \form \lor \formtwo), I)$ and $\otf((\defn, \formtwo), I) \leq \otf((\defn, \form \lor \formtwo), I)$.
		 Thus, in both cases, we have that $\otf(\trace',I') \leq \otf(\trace,I)$.
		 
		 \textbf{Case:} $\rulfun(n)$ is ($\land$L).
		 This case is similar to the previous case.
		 
		 \textbf{Case:} $\rulfun(n)$ is ($\land$R).
		 This case is similar to the the case where $\rulfun(n)$ is ($\lor$L).
		 
		 \textbf{Case:} $\rulfun(n)$ is ($\Rightarrow$L).
		 This case is similar to the the case where $\rulfun(n)$ is ($\lor$L).
		 
		 \textbf{Case:} $\rulfun(n)$ is ($\Rightarrow$R).
		 This case is similar to the the case where $\rulfun(n)$ is ($\lor$R).
		 
		 \textbf{Case:} $\rulfun(n)$ is ($\forall$L).
		 Then $\seqfun(n)$ is of the form $\defn, \Gamma, {\forall x: \form} \vdash \Delta$ and $n$ has a single child $m$ such that $\seqfun(m) \doteq \defn, \Gamma, \form[t/x] \vdash \Delta$.
		 We pick $n' = m$.
		 If $I$ interprets $t$, we let $I' = I$.
		 Otherwise, we let $I'$ be an arbitrary expansion of $I$ that interprets $t$.
		 Since $I \models {\forall x: \form}$, we have that $I[x:t^{I'}] \models \form$, and hence, $I' \models \form[t/x]$.
		 This shows that $I' \not\models \seqfun(n')$.
		 Take $\trace, \trace' \in \tset$, and suppose that $\tpair(\trace, \trace')(\seqfun(n), \rulfun(n), \seqfun(n'))=1$.
		 Then $\trace$ is of the form $(\defn, \formtwo)$ and $\trace'$ of the form $(\defn, \formtwo')$.
		 By definition of $\tpair$, either $\formtwo \doteq \formtwo'$, or $\formtwo \doteq \forall x: \form$ and $\formtwo \doteq \form$.
		 In the first case, it trivially follows that $\otf(\trace',I') = \otf(\trace,I)$.
		 For the second case, note that by definition of $\otf$, $\otf((\defn, {\forall x: \form}), I)$ is the minimal $\alpha$ such that ${(\forall x: \form)}^{\tvstr^I_\alpha} = \true$ for a well-founded induction $\wfind$ of $\defn$ in $I|_{\pars{\defn}}$.
		 Let $\wfind$ be a well-founded induction of $\defn$ in $I|_{\pars{\defn}}$ for which this minimum is attained, and fix $\alpha$ to be this minimum.
		 Since ${(\forall x: \form)}^{\tvstr^I_\alpha} = \true$, we have that ${\form}^{\tvstr^I_\alpha[x:t^{I'}]} = \true$, and since $\tvstr^{I'}_\alpha = \tvstr^I_\alpha[x:t^{I'}]$, this means that ${\form}^{\tvstr^{I'}_\alpha} = \true$.
		 This shows that $\otf((\defn, \form),I') \leq \otf((\defn, {\forall x: \form}),I)$.
		 Thus, in both cases, $\otf(\trace',I') \leq \otf(\trace,I)$.
		 
		 \textbf{Case:} $\rulfun(n)$ is ($\forall$R).
		 Then $\seqfun(n)$ is of the form $\defn, \Gamma \vdash {\forall x: \form}, \Delta$ such that $x$ does not occur freely in $\defn$, $\Gamma$ or $\Delta$, and $n$ has a single child $m$ such that $\seqfun(m) \doteq \defn, \Gamma \vdash \form, \Delta$.
		 We pick $n' = m$.
		 Since for all three-valued structures $\tvstr$, ${(\forall x : \form)}^{\tvstr} =  \false$ implies the existence of an $a \in \dom{\tvstr}$ such that ${\form}^{\tvstr[x:a]} =  \false$, there exists an $a \in \dom{\tvstr}$ such that $\otf((\defn, \form),I[x:a]) \leq \otf((\defn, {\forall x: \form}),I)$.
		 We fix such an $a \in \dom{\tvstr}$ and let $I' = I[x:a]$.
		 Since $I[x:a] \not\models \form$, and since $x$ does not occur freely in $\defn$, $\Gamma$ or $\Delta$, we have that $I' \not\models \seqfun(n')$.
		 Take $\trace, \trace' \in \tset$, and suppose that $\tpair(\trace, \trace')(\seqfun(n), \rulfun(n), \seqfun(n'))=1$.
		 By definition of $\tpair$, either $\formtwo \doteq \formtwo'$, or $\formtwo \doteq \forall x: \form$ and $\formtwo \doteq \form$.
		 In both cases, we have that $\otf(\trace',I') = \otf(\trace,I)$.
		 Indeed, in the first case, this is trivial, and in the second case, this follows from our choice of $I'$.
		 
		 \textbf{Case:} $\rulfun(n)$ is ($\exists$L).
		 This case is similar to the previous case.
		 
		 \textbf{Case:} $\rulfun(n)$ is ($\exists$R).
		 This case is similar to the case where $\rulfun(n)$ is ($\forall$L).
		 
		 \textbf{Case:} $\rulfun(n)$ is ($=$L).
		 Then $\seqfun(n)$ is of the form $\defn[t/x, s/y], \Gamma[t/x, s/y], {t=s} \vdash \Delta[t/x, s/y]$, and $n$ has a single child $m$ with $\seqfun(m) \doteq \defn[s/x, t/y], \Gamma[s/x, t/y] \vdash \Delta[s/x, t/y]$.
		 We pick $n' = m$ and $I'=I$.
		 Since $I' \models {t=s}$, it follows that $I' \not\models \seqfun(n')$.
		 Take $\trace, \trace' \in \tset$, and suppose that $\tpair(\trace, \trace')(\seqfun(n), \rulfun(n), \seqfun(n'))=1$.
		 Then $\trace$ is of the form $(\defn[t/x, s/y], \form[t/x, s/y])$ and $\trace'$ of the form $(\defn[t/x, s/y], \form[s/x, t/y])$.
		 Furthermore, $\defn[t/x, s/y] = \defn[s/x, t/y]$.
		 For any well-founded induction $\wfind$ of $\defn[t/x, s/y]$ in $I|_{\pars{\defn}}$ and any $i$, we have that $\form[t/x, s/y]^{\tvstr_i} = \form[s/x, t/y]^{\tvstr_i}$, as $t^{\tvstr_i} = s^{\tvstr_i}$.
		 Consequently, $\otf(\trace',I') = \otf(\trace,I)$.
		 
		 \textbf{Case:} $\rulfun(n)$ is (def R).
		 Then $\seqfun(n)$ is of the form $\defn, \Gamma \vdash P(\bar{t}[\bar{s}/\bar{x}]), \Delta$, and $n$ has a single child $m$ with $\seqfun(m) \doteq \defn, \Gamma \vdash \form[\bar{s}/\bar{x}], \Delta$.
		 We pick $n' = m$ and $I' = I$.
		 By Definitions \ref{def:refinement} and \ref{def:well-founded-induction}, $P(\bar{t}[\bar{s}/\bar{x}])^{\tvstr_i} = \false$ implies $\form[\bar{s}/\bar{x}]^{\tvstr_i} = \false$ for all three-valued structures $\tvstr_i$ in a well-founded induction $\wfind$ of $\defn$ in $I|_{\pars{\defn}}$.
		 This entails that $\otf((\defn, \form[\bar{s}/\bar{x}]), I') \leq \otf((\defn, P(\bar{t}[\bar{s}/\bar{x}])), I)$.
		 In particular, $I' \not\models \form[\bar{s}/\bar{x}]$, and hence, $I' \not\models \seqfun(n')$.
		 Take $\trace, \trace' \in \tset$, and suppose that $\tpair(\trace, \trace')(\seqfun(n), \rulfun(n), \seqfun(n'))=1$.
		 Then $\trace$ is of the form $(\defn, \formtwo)$ and $\trace'$ of the form $(\defn, \formtwo')$.
		 By definition of $\tpair$, either $\formtwo \doteq \formtwo'$ or $\formtwo \doteq P(\bar{t}[\bar{s}/\bar{x}])$ and $\formtwo' \doteq \form[\bar{s}/\bar{x}]$.
		 In this first case, it trivially follows that $\otf(\trace',I') = \otf(\trace,I)$, and in for second case, this follows from what we deduced before.
		 
		 \textbf{Case:} $\rulfun(n)$ is (case).
		 Then $\seqfun(n)$ is of the form $\defn, \Gamma, P(\bar{v}) \vdash \Delta$, and for every definitional rule $\defrul$ of $\defn$ defining $P$, $n$ has a child $m$ with $\seqfun(m) \doteq \defn, \Gamma, \bar{v} = \bar{t}[\bar{y}/\bar{x}], \form[\bar{y}/\bar{x}] \vdash \Delta$, where $\bar{y}$ is a tuple of object symbols with the same length of $\bar{x}$, none of which occurs freely in $\defn$, $\Gamma$, $P(\bar{v})$ or $\Delta$.
		 By definition of $\otf$, $\otf(P(\bar{v}), I)$ is the minimal $\alpha$ such that $P(\bar{v})^{\tvstr^I_\alpha} = \true$ for a well-founded induction $\wfind$ of $\defn$ in $I|_{\pars{\defn}}$.
		 Let $\wfind$ be a well-founded induction of $\defn$ in $I|_{\pars{\defn}}$ for which this minimum is attained, and fix $\alpha$ to be this minimum.

		 By Definitions \ref{def:refinement} and \ref{def:well-founded-induction}, there exists an $\alpha' < \alpha$ such that $\form_P(\bar{v})^{\tvstr^I_{\alpha'}} = \true$.
		 In other words, the truth of $\form_P(\bar{v})$ is derived strictly before the truth of $P(\bar{v})$ in $(\tvstr^I_i)_{0 \leq i \leq \beta}$.
		 By definition of $\form_P$, there exists a definitional rule $\defrul$ in $\defn$ defining $P$ such that $(\exists \bar{x} : {\bar{v} = \bar{t}} \land \form)^{\tvstr^I_{\alpha'}} = \true$.
		 This means that there exists a tuple of object symbols $\bar{a}$ in $\dom{\tvstr^I_{\alpha'}}$ %
		 with the same length as $\bar{x}$ such that $({\bar{v} = \bar{t}} \land \form)^{\tvstr^I_{\alpha'}[\bar{x}:\bar{a}]} = \true$.
		 Since none of the object symbols in $\bar{y}$ occurs in $\bar{v}$, this implies that $({\bar{v} = \bar{t}[\bar{y}/\bar{x}]} \land \form[\bar{y}/\bar{x}])^{\tvstr^I_{\alpha'}[\bar{y}:\bar{a}]} = \true$.
		 In particular, $\form[\bar{y}/\bar{x}]^{\tvstr^I_{\alpha'}[\bar{y}:\bar{a}]} = \true$.
		 Since none of the object symbols in $\bar{y}$ occur freely in $\defn$, none are in $\sym{\defn} = \voc{\tvstr_{\alpha'}}$, and hence $\tvstr^I_{\alpha'}[\bar{y}:\bar{a}] = \tvstr^{I[\bar{y}:\bar{a}]}_{\alpha'}$.
		 Therefore, $({\bar{v} = \bar{t}[\bar{y}/\bar{x}]} \land \form[\bar{y}/\bar{x}])^{\tvstr^{I[\bar{y}:\bar{a}]}_{\alpha'}} = \true$, which shows that $\otf((\defn, \form[\bar{y}/\bar{x}]), I[\bar{y}:\bar{a}]) < \otf((\defn, P(\bar{v})), I)$.
		 Let $n'$ be the child $m$ of $n$ corresponding to this definitional rule, and let $I' = I[\bar{y} : \bar{a}]$.
		 Since none of the object symbols of $\bar{y}$ occurs in $\defn$, $\Gamma$ or $\Delta$, we have that $I' \not\models \seqfun(n')$.
		 Take $\trace, \trace' \in \tset$, and suppose that $\tpair(\trace, \trace')(\seqfun(n), \rulfun(n), \seqfun(n'))=1$.
		 Then $\trace = \trace'$, and hence, $\otf(\trace',I') = \otf(\trace,I)$.
		 Now suppose that $\tpair(\trace, \trace')(\seqfun(n), \rulfun(n), \seqfun(n'))=2$.
		 Then $\trace = (\defn, P(\bar{v}))$ and $\trace' = (\defn, \form[\bar{y}/\bar{x}])$.
		 By our choice of $n'$ and $I'$, $\otf(\trace',I') < \otf(\trace,I)$. %
		
		In summary, we have shown that $\otf$ is indeed an ordinal trace function associated with $\tval$ and $\tpair$.
		
		It remains to show that any \scinf-pre-proof satisfies the generalized trace condition w.r.t.~$\tval$, $\tpair$ and $\otf$ iff it satisfies the trace condition.
		Let $\deriv$ be an \scinf-pre-proof and suppose that it satisfies the generalized trace condition w.r.t.~$\tval$, $\tpair$ and $\otf$.
		Let $\pth = (n_i)_i$ be an infinite branch in $\deriv$.
		Then there exists an infinitely progressing generalized trace $(\defn, \form_i)_i$ along $\pth$.
		This means that $\tpair((\defn, \form_i), (\defn, \form_{i+1}))(\seqfun(n_i), \rulfun(n_i), \seqfun(n_{i+1})) \in \{1, 2\}$ for all $i$.
		By definition of $\tval$, it follows that $(\form_i)_i$ is a trace along $\pth$.
		Furthermore, any progression point of $(\defn, \form_i)_i$ is a progression point of $(\form_i)_i$.
		Thus, $(\form_i)_i$ is an infinitely progressing trace along $\pth$, which shows that $\deriv$ satisfies the trace condition.
		
		Suppose conversely that $\deriv$ satisfies the trace condition, and let $\pth = (n_i)_i$ be an infinite branch in $\deriv$.
		Then there exists an infinitely progressing trace $(\form_i)_i$ along $\pth$.
		Let $\defn$ be the definition in any (and hence, every) sequent in $\deriv$.
		Then $(\defn, \form_i)_i$ is a generalized trace w.r.t.~$\tval$, $\tpair$ and $\otf$ along $\pth$.
		Indeed, by definition of $\tpair$, $\tpair(\form_i, \form_{i+1})(\seqfun(n_i), \rulfun(n_i), \seqfun(n_{i+1})) \in \{1, 2\}$ for all $i$.
		Furthermore, every progression point of $(\defn, \form_i)_i$ is a progression point of $(\form_i)_i$.
		This shows that $\deriv$ satisfies that generalized trace condition w.r.t.~$\tval$, $\tpair$ and $\otf$, and thus finishes the proof.
	\end{proof}

\subsection{Cycle Normalization} %

The cyclic proof $\deriv$ in Example \ref{ex:proof-liar} is not in \emph{cycle normal form}, as the companions $\repfun(n)$ of the buds $n$ are not ancestors of $n$ in $\deriv$.
However, by extending the branches, we can transform $\deriv$ to a proof in cycle normal form.
\emph{Cycle normalization} essentially says that this is possible for any cyclic (pre-)proof.
While this property may seem straightforward, its proof rather intricate \cite{Brotherston2006PhD}.
In this section, we introduce the notions required to the state cycle normalization theorem for \sccyc, which follows as a corollary of Proposition \ref{prop:instance-generic-inf-calculus}.

	\begin{definition}[Cycle Normal Form]
		An \sccyc-pre-proof $(\deriv, \repfun)$ is said to be in \emph{cycle normal form} if for every bud $n$ in $\deriv$, $\repfun(n)$ is an ancestor of $n$ in $\deriv$.
	\end{definition}
	
	Given a partial function $g : A \to B$ and $x, y \in A$, we write $g(x) \peq g(y)$ if $g(x)$ and $g(y)$ are both undefined, or it they are both defined and $g(x) = g(y)$.
	
	\begin{definition}[Derivation Homomorphism]
		Let $\deriv$ and $\deriv'$ be \scinf-derivations, and write $\deriv = (N, E, r, \seqfun, \rulfun)$ and $\deriv' = (N', E', r', \seqfun', \rulfun')$.
		A \emph{derivation homomorphism} from $\deriv$ to $\deriv'$ is a function $f : N \to N'$ such that for all $n, m \in N$: $(n,m) \in E$ iff $(f(n),f(m)) \in E'$, $\seqfun(n) = \seqfun(m)$, and $\rulfun(n) \peq \rulfun'(n)$. 
		We say that a derivation homomorphism $f : N \to N'$ from $\deriv$ to $\deriv'$ is \emph{invertible} if there exists a derivation homomorphism $g : N' \to N$ from $\deriv'$ to $\deriv$ such that $g(f(n)) = n$ for all $n \in N$ and $f(g(n')) = n'$ for all $n' \in N'$.
	\end{definition}
	
	\begin{definition}[Tree Unraveling]
		Let $\prf = (\deriv, \repfun)$ be an \sccyc-pre-proof with $\deriv = (N, E, r, \seqfun, \rulfun)$.
		Let $\branchset{\graph{\prf}}$ denote the set of finite branches in the graph $\graph{\prf}$ of $\prf$, i.e., the set of finite sequences $(n_0, \dots, n_k)$ %
		such that $n_0 = r$ and for all $i \in \{0, \dots, k-1\}$: $(n_i, n_{i+1}) \in E$.
		The \emph{tree unraveling} $\trunf{\prf}$ of $\prf$ is the \scinf-pre-proof $(N', E', r', \seqfun', \rulfun')$ such that:
		\begin{itemize}
			\item $N' = \branchset{\graph{\prf}}$;
			\item $E' = \{ ((n_0, \dots, n_k), (n_0, \dots, n_k, n_{k+1})) \mid (n_k, n_{k+1}) \in E \}$;
			\item $r' = (r)$;
			\item $\seqfun'((n_0, \dots, n_k)) = \seqfun(n_k)$ for all $(n_0, \dots, n_k) \in \branchset{\graph{\prf}}$; and
			\item $\rulfun'((n_0, \dots, n_k)) = \rulfun(n_k)$ for all $(n_0, \dots, n_k) \in \branchset{\graph{\prf}}$.\footnote{
				Note that $\rulfun$ is total on the nodes of $\graph{\prf}$.
			} %
		\end{itemize}
	\end{definition}
	
	\begin{proposition}
		The tree unraveling $\trunf{\prf}$ of an \sccyc-proof $\prf$ is an \scinf-proof.
	\end{proposition}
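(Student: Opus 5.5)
The plan is to check the two requirements on an \scinf-proof separately for $\trunf{\prf}$: that it is an \scinf-pre-proof, and that it satisfies the trace condition. Both follow from the observation that branches of $\trunf{\prf}$ correspond exactly to paths in $\graph{\prf}$ starting at the root.

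\textbf{$\trunf{\prf}$ is an \scinf-pre-proof.} The pair $(N', E')$ is a rooted tree: every node $(n_0, \dots, n_k)$ other than $(r)$ has exactly one parent, namely $(n_0, \dots, n_{k-1})$, and every node is reachable from $(r)$. Since buds are identified with their companions in $\graph{\prf}$, the rule labelling $\rulfun$ is total on the nodes of $\graph{\prf}$. Hence $\rulfun'$ is total on $N'$, and $\trunf{\prf}$ has no buds.

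For the local correctness of a rule application, take a node $(n_0, \dots, n_k)$. Its children are exactly the sequences $(n_0, \dots, n_k, m)$ with $(n_k, m)$ an edge of $\graph{\prf}$. The labels of these children are the labels of the successors of $n_k$ in $\graph{\prf}$. Replacing a bud by its companion does not change labels, because $\seqfun(\repfun(b)) = \seqfun(b)$. So these labels are the premises of the instance $\rulfun(n_k)$ in $\deriv$, whose conclusion is $\seqfun(n_k) = \seqfun'((n_0, \dots, n_k))$.

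One subtlety needs checking here. If two premises of the instance carry the same sequent, or if two children lead to the same node of $\graph{\prf}$, the children in the unraveling still yield the required set of premises. This holds because Definition~\ref{def:inf-derivation} only asks for the \emph{set} of premises.

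\textbf{Trace condition.} Let $(\pi'_i)_i$ be an infinite branch of $\trunf{\prf}$. By the definition of $E'$, we have $\pi'_i = (n_0, \dots, n_i)$ for a single infinite sequence $\pth = (n_i)_i$. This $\pth$ is an infinite path in $\graph{\prf}$ starting at the root, that is, an infinite branch of $\graph{\prf}$.

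Since $\prf$ is an \sccyc-proof, some tail $(n_i)_{i \geq j}$ of $\pth$ carries a trace $(\trace_i)_{i \geq j}$ with infinitely many progression points. I then check that the same sequence of formulae is a trace along $(\pi'_i)_{i \geq j}$. Every clause of Definition~\ref{def:trace} refers only to the following data:
\begin{itemize}
\item the sequent labels at consecutive nodes;
\item the rule applied at a node;
\item which premise is taken, i.e.\ which edge is followed.
\end{itemize}
These data coincide, because $\seqfun'(\pi'_i) = \seqfun(n_i)$, $\rulfun'(\pi'_i) = \rulfun(n_i)$, and the edge $(\pi'_i, \pi'_{i+1})$ corresponds to the edge $(n_i, n_{i+1})$. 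In particular, the progression points, which are the applications of (case) in which the trace follows the active formula, are the same. So the trace is infinitely progressing along a tail of the branch, as required.

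\textbf{Expected obstacle.} The main difficulty is only bookkeeping. The trace clauses for rules such as (case), ($\lor$L) and ($\forall$L) depend on the particular instance and premise, not just on the rule name. So I must argue that the edge $(n_i, n_{i+1})$ of $\graph{\prf}$ determines the same rule instance and premise as in $\deriv$.

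This follows because edges into an identified bud are edges of $\deriv$ into that bud, redirected to its companion. The instance at the parent node and the label of the premise are therefore unchanged.

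Alternatively, one could invoke Proposition~\ref{prop:instance-generic-inf-calculus} together with Brotherston's generic result for \scgcyc. The direct argument above is simpler, however.
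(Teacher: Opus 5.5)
Your proof is correct, but the paper takes a different route: it gives no argument for this proposition. It lists it among the facts about \sccyc inherited from Brotherston's generic cyclic calculus \scgcyc through Proposition~\ref{prop:instance-generic-inf-calculus}, and uses it informally at the start of Section~\ref{sec:relation-calculi}. You verify it directly. Unraveling preserves sequent labels, rule names and the premise followed along each edge, so infinite branches of $\trunf{\prf}$ correspond exactly to infinite branches of $\graph{\prf}$, and traces transfer verbatim with their progression points. This is more elementary and self-contained, and it shows the statement is purely combinatorial, independent of the ordinal trace function and of the semantics. The generic route buys uniformity with the other imported results (cycle normalization, trace manifolds). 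To apply it literally, though, you also need the equivalence of the concrete and generalized trace conditions on the \emph{graph} $\graph{\prf}$, whereas Proposition~\ref{prop:instance-generic-inf-calculus} is stated only for \scinf-pre-proofs. That graph version follows by the same label-preservation reasoning you give, so both routes rest on the same observation.

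One assumption should be stated explicitly. Your claim that $\rulfun$ is total on the nodes of $\graph{\prf}$ (mirroring the paper's footnote) requires companions to be non-bud nodes, as in Brotherston's definition. The paper's definition of companion only asks for equal labels. Read literally, a one-node derivation whose bud is its own companion would be an \sccyc-proof of an arbitrary sequent, since its graph has no infinite branches. Its unraveling, however, is a single bud and hence not an \scinf-pre-proof. Once companions are required to be internal nodes, your argument is complete.
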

	
	\begin{definition}
		Let $\prf$ and $\prf'$ be \sccyc-pre-proofs. 
		We say that $\prf$ is \emph{equivalent to} $\prf'$, and write $\prf \eqp \prf'$, if there exists an invertible derivation homomorphism from $\trunf{\prf}$ to $\trunf{\prf'}$. 
	\end{definition}
	
	\begin{theorem}[Cycle Normalization]
		Any \sccyc-(pre-)proof is equivalent to an \sccyc-(pre-)proof in cycle normal form.
		Furthermore, there exists a procedure that transforms any \sccyc-(pre-)proof with $n$ nodes into an \sccyc-(pre-)proof in cycle normal form with no more than $n^{2^{n/2}}$ nodes.
	\end{theorem}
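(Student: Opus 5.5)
The plan is to obtain the Cycle Normalization theorem for \sccyc directly from Brotherston's generic cycle normalization result for \scgcyc \cite{Brotherston2006PhD,tableaux/Brotherston05}. Proposition \ref{prop:instance-generic-inf-calculus} provides the bridge: it shows that \scinf is an instance of \scginf. Brotherston's normalization argument is purely combinatorial on the pre-proof graph. It never inspects the shape of the sequents or the inference rules. It uses only that derivations are finite graphs labeled by sequents and rules, that buds are matched to companions with identical labels, and that the tree unravelling determines equivalence. The trace condition is not touched by the construction, except that it is preserved.

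First, I will check that our notions line up with Brotherston's generic ones. These are \sccyc-derivation, repeat function, graph $\graph{\prf}$, tree unraveling $\trunf{\prf}$, derivation homomorphism, and the equivalence $\eqp$. Each of our definitions is the specialization of his to \foid-sequents and the rule set of \scinf, so this step is bookkeeping.

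Second, I will apply Brotherston's theorem. It states that every \scgcyc-pre-proof with $n$ nodes is $\eqp$-equivalent to one in cycle normal form with at most $n^{2^{n/2}}$ nodes, and the transformation is effective. Sketching the construction for the reader: one repeatedly unfolds the unravelling along branches and cuts each branch at the first point where a node repeats an ancestor's pair of original node and label. A König-style finiteness argument bounds the depth, and a counting argument over the sets of "open" companions along a branch gives the stated bound.

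Third, I will transfer the proof case. Equivalent pre-proofs have isomorphic tree unravellings via an invertible derivation homomorphism, which preserves labels and rules. So the infinite branches correspond, and traces along them correspond as well, including progression points, since these depend only on the labels and rules. Hence the trace condition transfers. Here the earlier Proposition on unravellings is the relevant tool, together with the fact that the trace condition for $\graph{\prf}$ is equivalent to that of $\trunf{\prf}$.

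The main obstacle is ensuring that nothing in Brotherston's generic argument relies on features absent here. One possible concern is the requirement that sequents and rule instances be finitely described, for example (case) with fresh variables $\bar{y}$. Since the construction only copies existing nodes and reroutes bud edges, no new rule instances are created. I therefore expect the check to go through verbatim.
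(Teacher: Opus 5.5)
Your proposal is correct and follows the paper's route. The paper likewise treats this theorem as a corollary of Proposition \ref{prop:instance-generic-inf-calculus}, importing Brotherston's generic cycle normalization result for \scgcyc, together with its bound, and preserving the trace condition via the shared tree unraveling. Your sketch of Brotherston's construction plays no role in the argument, since both you and the paper rely on the cited result for the construction and the bound.
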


	\subsection{Trace Manifolds}

	Pre-proofs in cycle normal form admit finitary alternatives to the trace condition, formulated in terms of \emph{trace manifolds}.
	Brotherston introduced two equivalent notions of trace manifold: one in terms of \emph{strongly connected subgraphs} and one in terms of an \emph{induction order} \cite{Brotherston2006PhD}.
	In this section, we provide the required notions to rigorously define these finitary soundness conditions for \sccyc-pre-proofs in cycle normal form.
	The fact that they entail soundness, and that they are equivalent to each other, follows as a corollary of Proposition \ref{prop:instance-generic-inf-calculus}.

	\begin{definition}[Basic Cycle]
		Let $\prf = (\deriv, \repfun)$ be an \sccyc-pre-proof in cycle normal form and let $n$ be a bud in $\prf$.
		The \emph{basic cycle} $\bacyc{n}$ in $\graph{\prf}$ is the path in $\graph{\prf}$ obtained from the unique path from $\repfun(n)$ to $n$ in $\deriv$ by replacing the final edge $(m,n)$ with the edge $(m, \repfun(n))$ in $\graph{\prf}$.\footnote{
			The fact that $\prf$ is in cycle normal form guarantees that there is a unique path from $\repfun(n)$ to $n$ in $\deriv$.
		}
	\end{definition}

	\begin{definition}[Structural Connectivity, Sprenger and Dam \cite{fossacs/SprengerD03}]
		Let $\prf = (\deriv, \repfun)$ be an \sccyc-proof in cycle normal form and $\budset$ the set of buds of $\deriv$.
		We define a binary relation $\structord$ on $\budset$ by letting $n_1 \structord n_2$ if $\repfun(n_1)$ appears on the basic cycle $\bacyc{n_2}$ in $\graph{\prf}$.
	\end{definition}
	
	Contrary to what its notation may suggest, the relation $\structord$ is generally neither anti-symmetric nor transitive.
	
	\begin{definition}[Weak $\structord$-Connectivity]
		Let $\prf$ be an \sccyc-proof and $\budset$ the set of buds of $\prf$.
		We say that a subset $S$ of $\budset$ is \emph{weakly $\structord$-connected} if for all $n,m \in \budset$, there exists a finite sequence $n_1, \dots, n_k$ such that $n_1 = n$, $n_k = m$, and for all $i \in \{1, \dots, k-1\}$: $n_i \structord n_{i+1}$ or $n_{i+1} \structord n_{i}$.
	\end{definition}
	
	\begin{definition}[Trace Manifold]
		Let $\prf = (\deriv, \repfun)$ be an \sccyc-pre-proof in cycle normal form and let $\budset = \{n_1, \dots, n_k\}$ be the set of buds of $\deriv$.
		A \emph{trace manifold} for $\prf$ is a set of traces
		\[
		\{
		\trace_{S,i} \mid
		S = \bigcup_{n \in \budset_S} \mathcal{C}_n  \text{  where $\budset_S$ is a weakly $\structord$-connected subset of $\budset$ containing $n_i$}
		\}
		\]
		such that:
		\begin{itemize}
			\item for all $S$ and $i$, $\trace_{S, i}$ is a trace along the basic cycle $\bacyc{n_i}$ that takes the same value at both instances of $\repfun(n_i)$ in $\bacyc{n_i}$;\footnote{
				A trace along a path in a graph $\graph{\prf}$ of an \sccyc-pre-proof is defined similarly as in Definition \ref{def:trace}.
			}
			\item for all $S$: if $n_i, n_j \in \budset_S$ and $n_i \structord n_j$, then $\trace_{S, i}(\repfun(n_i)) = \trace_{S, j}(\repfun(n_i))$; and
			\item for all $S$, there exists an $i$ such that $\trace_{S,i}$ has at least one progression point. %
		\end{itemize}
	\end{definition}
	
	\begin{theorem}
		Let $\prf$ be an \sccyc-pre-proof in cycle normal form.
		If $\prf$ has a trace manifold, then $\prf$ is an \sccyc-proof.
	\end{theorem}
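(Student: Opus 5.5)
The theorem is the instance, for \sccyc, of Brotherston's generic soundness result for trace manifolds. I would obtain it from Proposition \ref{prop:instance-generic-inf-calculus} in two steps. First, I show that a trace manifold for $\prf$ yields an infinitely progressing trace along a tail of every infinite path in $\graph{\prf}$. Hence $\graph{\prf}$, equivalently $\trunf{\prf}$, satisfies the trace condition of Definition \ref{def:trace}. By definition, $\prf$ is then an \sccyc-proof. Soundness itself is not needed here; the statement only asks that the trace condition hold.

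For the combinatorial core, fix an infinite path $\pth$ in $\graph{\prf}$. Since $\prf$ is in cycle normal form, $\deriv$ is a finite tree whose only back-edges go from buds to ancestors. So $\pth$ eventually traverses only basic cycles. Concretely, some tail of $\pth$ decomposes as a concatenation of basic cycles $\bacyc{n_{i_1}}$, $\bacyc{n_{i_2}}$, $\dots$, suitably nested: a later cycle may be entered from a companion lying on an earlier one, which is exactly the relation $\structord$. Let $\budset_S$ be the set of buds visited infinitely often along $\pth$, and pass to the tail where only these buds occur. Consecutive cycles are linked by $\structord$ in one direction or the other, so $\budset_S$ is weakly $\structord$-connected. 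Let $S = \bigcup_{n \in \budset_S} \bacyc{n}$.

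Now I glue the traces $\trace_{S,i}$ along this decomposition. The first manifold condition says each $\trace_{S,i}$ returns to its starting value at $\repfun(n_i)$, so it can be repeated along the cycle. The second condition says traces of $\structord$-related cycles agree at the junction companion, so switching from one cycle to the next keeps the sequence a valid trace. The third condition supplies an index $i$ whose $\trace_{S,i}$ has a progression point. Because $n_i \in \budset_S$, the cycle $\bacyc{n_i}$ is traversed infinitely often, and the glued trace progresses infinitely often.

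The main obstacle is making the decomposition of $\pth$ precise. When the path is on $\bacyc{n_j}$ and jumps, via some other bud's back-edge, into a cycle that only shares a segment with $\bacyc{n_j}$, the junction node may not be the companion $\repfun(n_i)$ itself. The glued trace must then agree pointwise on the shared segment, not only at companions. My first attempt is to argue by the tree structure: two basic cycles that overlap share a path segment from the deeper companion upward. I would then try to strengthen the gluing so that agreement at $\repfun(n_i)$ propagates along the shared segment, using that the trace rules of Definition \ref{def:trace} are deterministic only up to rule applications. If this fails, I would fall back on citing Brotherston's generic theorem, which depends only on the abstract trace pair structure supplied by Proposition \ref{prop:instance-generic-inf-calculus}.
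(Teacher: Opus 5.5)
Your fallback is exactly the paper's proof. The paper states this theorem as a corollary of Proposition \ref{prop:instance-generic-inf-calculus}: Brotherston showed that a cycle-normal pre-proof with a trace manifold satisfies the generalized trace condition in \scgcyc. This transfers to \sccyc because generalized traces $((\defn,\form_i))_i$ are exactly traces $(\form_i)_i$ with the same progression points. With that citation your proposal is complete. The direct argument you present as your main line, however, has a genuine gap, and the repair you suggest would not close it.

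\textbf{The gap.} The decomposition is wrong as stated. Between two back-edges the path need not follow a single basic cycle, and consecutive buds need not be $\structord$-related. Take the following tree:
\begin{itemize}
\item a root $c$ with child $x$, and $x$ with children $y$ and $c'$;
\item a bud $b$ above $y$ with $\repfun(b)=c$;
\item buds $b',b''$ above $c'$ with $\repfun(b')=c'$ and $\repfun(b'')=c$.
\end{itemize}
Let the path repeat $c\,x\,y\,b \rightsquigarrow c\,x\,c'\,b' \rightsquigarrow c'\,b'' \rightsquigarrow c\cdots$, where $\rightsquigarrow$ is the jump to the companion. Here $b$ is followed by $b'$, yet $c\notin\bacyc{b'}$ and $c'\notin\bacyc{b}$. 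Moreover, the segment $c\,x\,c'$ leading to $b'$ is not on $\bacyc{b'}$, so $\trace_{S,b'}$ (indexing manifold traces by buds) is not even defined there.

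Your fix, propagating agreement at a companion along a shared segment, cannot work. A trace is not determined by one of its values: at ($\land$L), ($\lor$R) or (case) the active formula may continue to any of several auxiliary formulae. The manifold conditions only constrain agreement at companions, so two cycle traces may legitimately differ on a common segment.

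\textbf{The missing idea.} Switch between cycle traces only at companions, following the nesting of cycle normal form.
\begin{enumerate}
\item Take a tail in which every node and edge that occurs recurs, and let $v_t$ be its $t$-th node.
\item Let $j(t)$ be the bud of the first back-edge taken at or after $t$ whose target is an ancestor of, or equal to, $v_t$. It exists because $v_t$ is revisited. Until that back-edge the path stays in the subtree of $v_t$, so $v_t$ lies on $\bacyc{j(t)}$.
\item If $t\to t+1$ is a tree edge with $j(t+1)\neq j(t)$, then $v_{t+1}=\repfun(j(t+1))$ lies on $\bacyc{j(t)}$, so $j(t+1)\structord j(t)$.
\item If $t\to t+1$ is a back-edge, it is that of $j(t)$, and $\repfun(j(t))$ lies on $\bacyc{j(t+1)}$, so $j(t)\structord j(t+1)$.
\item Consequently the range $\budset_S$ of $j$, which is the set of buds whose back-edges recur, is weakly $\structord$-connected. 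Your consecutive-cycles argument does not establish this.
\item Setting the trace value at $t$ to $\trace_{S,j(t)}(v_t)$ yields a trace, by the first two manifold conditions.
\item Finally, suppose $\trace_{S,i}$ progresses on the edge $(u,u')$ of $\bacyc{n_i}$. At the last traversal of $(u,u')$ before each traversal of the back-edge of $n_i$, one has $j=n_i$. These traversals are distinct, since after each jump to $\repfun(n_i)$ the path must re-enter the subtree of $u'$. So the glued trace progresses infinitely often.
\end{enumerate}
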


	\begin{definition}[Induction Order]
		Let $\prf$ be an \sccyc-pre-proof in cycle normal form and let $\budset$ be the set of buds of $\prf$.
		A partial order $\indord$ on $\budset$ is said to be an \emph{induction order} for $\prf$ if: 
		\begin{itemize}
			\item $\indord$ is \emph{forest-like}, i.e., if $n \indord m_1$ and $n \indord m_2$, then $m_1 = m_2$ or $m_1 \indord m_2$ or $m_2 \indord m_1$; and
			\item every weakly $\structord$-connected subset of $\budset$ has a greatest element w.r.t.~$\indord$, i.e., an element $n \in \budset$ such that $m \indord n$ for all $m \in \budset$.
		\end{itemize}
	\end{definition}
	
	\begin{definition}[Ordered Trace Manifold]
		Let $\prf = (\deriv, \repfun)$ be an \sccyc-pre-proof in cycle normal form and let $\budset = \{n_1, \dots, n_k\}$ be the set of buds of $\deriv$.
		Let $\indord$ be an induction order for $\prf$.
		An \emph{ordered trace manifold} w.r.t.~$\indord$ is a set of traces $\{ \trace_{i j} \mid n_i \indord n_j \}$ such that for all $i,j \in \{1, \dots, k\}$:
		\begin{itemize}
			\item $\trace_{i j}$ is a trace along the unique path from $\repfun(n_i)$ to $n_i$ in $\deriv$;
			\item $\trace_{i j}(n_i) = \trace_{i j}(\repfun(n_i))$;
			\item %
			if $n_i \indord n_k$, $n_j \indord n_k$ and $n_j \structord n_i$,
			then $\trace_{i k}(\repfun(n_j)) = \trace_{j k}(\repfun(n_j))$; and
			\item $\trace_{i i}$ has at least one progression point. %
		\end{itemize}
	\end{definition}
	
	\begin{proposition}
	Let $\prf$ be an \sccyc-proof in cycle normal form.
	Then $\prf$ has a trace manifold iff $\prf$ has an ordered trace manifold w.r.t.~some induction order $\indord$ for $\prf$.
	\end{proposition}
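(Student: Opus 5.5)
The plan is to obtain this statement as a corollary of Brotherston's generic result, in the same way as cycle normalization and the soundness of trace manifolds. In his thesis \cite{Brotherston2006PhD}, Brotherston proves the equivalence of the two trace manifold notions for any instance of the generic cyclic calculus \scgcyc. That proof only manipulates generalized traces along finite paths (basic cycles and paths from companions to buds), together with the combinatorics of the relation $\structord$. So the first step is to fix the trace value relation $\tval$, trace pair function $\tpair$ and ordinal trace function $\otf$ from Proposition \ref{prop:instance-generic-inf-calculus}. I then check that, under this instantiation, Brotherston's generic notions of (ordered) trace manifold coincide with the ones defined above.

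The translation step mirrors the last part of the proof of Proposition \ref{prop:instance-generic-inf-calculus}. Every sequent in an \sccyc-pre-proof carries the same definition $\defn$. Hence a generalized trace $(\defn,\form_i)_i$ along a finite path is exactly a trace $(\form_i)_i$ in the sense of Definition \ref{def:trace}, and the two notions of progression point agree. The equalities of trace values at companions and buds required in both manifold definitions translate verbatim, as do the progression requirements. The notions of basic cycle, $\structord$, weak $\structord$-connectivity and induction order depend only on the graph structure of $\prf$, not on the logic, so they are literally the generic ones. Brotherston's equivalence theorem then transfers directly.

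To make the argument less of a black box, I would also recall the shape of the generic proof.
\begin{itemize}
\item For the direction from an ordered trace manifold to a trace manifold, take a weakly $\structord$-connected set $\budset_S$ and let $n_k$ be its $\indord$-greatest element. Set $\trace_{S,i}$ to be $\trace_{ik}$ (extended along the closing edge of $\bacyc{n_i}$). The gluing condition follows from the third clause of ordered trace manifolds, and progression follows from $\trace_{kk}$.
\item For the converse, construct $\indord$ recursively. In each weakly $\structord$-connected set, choose a bud whose manifold trace progresses and make it the greatest element. Remove it and recurse on the weakly connected components of the remaining buds. Read the traces $\trace_{ij}$ off the manifold traces for the component determined by $n_j$.
\end{itemize}

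The main obstacle is this converse construction: ensuring the recursively built order is forest-like and that the chosen traces satisfy the compatibility clause across nested components. Since this is purely combinatorial and independent of \foid, I would not redo it but cite \cite{Brotherston2006PhD}. The only genuinely new work is the correspondence check of the first two steps, which is routine given Proposition \ref{prop:instance-generic-inf-calculus}.
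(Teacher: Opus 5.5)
Your proposal is correct and takes the same route as the paper. The paper obtains this equivalence as a corollary of Proposition~\ref{prop:instance-generic-inf-calculus} (that \scinf, and hence \sccyc, is an instance of Brotherston's generic calculus) together with the corresponding generic result in \cite{Brotherston2006PhD}. Your sketch of the underlying combinatorial argument goes beyond what the paper spells out, and it is sound: take $\trace_{S,i} := \trace_{ik}$ for the $\indord$-greatest bud $n_k$, and conversely build $\indord$ by recursively choosing a bud with a progressing trace in each weakly $\structord$-connected component.
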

	
	\begin{theorem}
		Let $\prf$ be an \sccyc-pre-proof in cycle normal form and let $\indord$ be an induction order for $\prf$.
		If $\prf$ has an ordered trace manifold w.r.t.~$\indord$, then $\prf$ is an \sccyc-proof.
	\end{theorem}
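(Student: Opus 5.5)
The plan is to obtain this as an instance of Brotherston's generic result for \scgcyc, transported along Proposition \ref{prop:instance-generic-inf-calculus}, and to check the combinatorial core directly. The key observation is that the ordered trace manifold conditions mention only traces along finite paths in $\graph{\prf}$, progression points, and the relations $\structord$ and $\indord$. By Proposition \ref{prop:instance-generic-inf-calculus}, the traces of Definition \ref{def:trace} coincide with the generalized traces for the chosen $\tval$ and $\tpair$, and so do their progression points. So Brotherston's theorem ``an ordered trace manifold w.r.t.\ an induction order implies the generalized trace condition'' for \scgcyc applies verbatim. It then yields that $\graph{\prf}$ satisfies the generalized trace condition, hence the trace condition, hence $\prf$ is an \sccyc-proof.

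To make the argument self-contained, I would reprove the combinatorial core as follows. Fix an infinite path $\pth$ in $\graph{\prf}$. Since $\prf$ is in cycle normal form, each passage of $\pth$ through a bud $n$ back to $\repfun(n)$ closes the basic cycle $\bacyc{n}$. After some point, $\pth$ only traverses basic cycles of buds in the set $\budset_\pth$ of buds it passes infinitely often.

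The first claim is that $\budset_\pth$ is weakly $\structord$-connected. Consecutive buds $n, m$ visited by $\pth$ (from some point on) satisfy $m \structord n$ or $n \structord m$. The reason is that after jumping to $\repfun(n)$, the path descends in the tree $\deriv$ and reaches $m$ only through nodes of $\bacyc{m}$, so either $\repfun(m)$ lies on the branch below $\repfun(n)$ or conversely.

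By the definition of induction order, $\budset_\pth$ has a $\indord$-greatest element $n_k$. Every $n_i \in \budset_\pth$ then satisfies $n_i \indord n_k$, so the traces $\trace_{ik}$ are available.

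Next, I glue the traces. On each segment of $\pth$ lying on a basic cycle $\bacyc{n_i}$, I use $\trace_{ik}$. The condition $\trace_{ik}(n_i) = \trace_{ik}(\repfun(n_i))$ makes the trace continuous across the back-edge. The third condition of an ordered trace manifold (with $n_j \structord n_i$) ensures that when $\pth$ passes from cycle $i$ into cycle $j$ at $\repfun(n_j)$, the values $\trace_{ik}(\repfun(n_j))$ and $\trace_{jk}(\repfun(n_j))$ agree. This yields a single trace along a tail of $\pth$.

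Since $n_k \in \budset_\pth$, the path traverses $\bacyc{n_k}$ infinitely often. Each such traversal carries $\trace_{kk}$, which has a progression point, so the glued trace is infinitely progressing.

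The main obstacle is making the segmentation of $\pth$ into basic-cycle pieces precise: a traversal of $\bacyc{n_i}$ may be interrupted by excursions into nested cycles, so gluing must be done on the nested structure rather than on a flat sequence. I would handle this as Brotherston does, by induction along $\indord$ restricted to $\budset_\pth$, peeling off innermost cycles first. I would simply cite his argument for this step rather than redo it.
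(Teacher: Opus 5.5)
Your main route is the paper's. The paper gives no separate argument and obtains this theorem as a corollary of Proposition~\ref{prop:instance-generic-inf-calculus}, transporting Brotherston's result for the generic cyclic calculus; your first paragraph does exactly this, so the proposal stands on that basis.

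The supplementary ``self-contained'' sketch, however, contains a false step: consecutive buds on $\pth$ need not be $\structord$-related, even from some point on. Take a companion $r$ with two subtrees. One subtree contains a bud $n$ with $\repfun(n)=r$. The other contains a node $c$ with two descendant buds $m$ and $m'$, where $\repfun(m)=c$ and $\repfun(m')=r$. Consider the path that cycles through $n, m, m'$:
\begin{itemize}
\item from $n$ jump to $r$, then descend to $m$;
\item from $m$ jump to $c$, then descend to $m'$;
\item from $m'$ jump to $r$, then descend to $n$.
\end{itemize}
Here $n$ and $m$ are consecutive and both visited infinitely often, yet $c\notin\bacyc{n}$ and $r\notin\bacyc{m}$.

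Your justification breaks as follows. Comparability of $\repfun(n)$ and $\repfun(m)$ yields $n\structord m$ only when $\repfun(m)$ lies at or above $\repfun(n)$. When $\repfun(m)$ lies strictly below, it sits on the descent to $m$, not on $\bacyc{n}$.

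Weak connectivity of $\budset_\pth$ is still true, but it needs a different argument. Pass to a tail visiting only buds of $\budset_\pth$, and let $v$ be the shallowest node visited infinitely often. All buds with companion $v$ are pairwise $\structord$-related, since $v$ lies on each of their basic cycles. For any other $m\in\budset_\pth$, consider the first bud $m'$ after a jump to $\repfun(m)$ whose back-edge leads strictly above $\repfun(m)$; such an $m'$ must exist because the path returns to $v$. This $m'$ lies below $\repfun(m)$, so $\repfun(m)\in\bacyc{m'}$, i.e.\ $m\structord m'$, and $\repfun(m')$ is strictly shallower than $\repfun(m)$. Iterating reaches a bud with companion $v$.

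Likewise, the nested gluing is an induction on the excursion structure of the path, where each excursion from a companion back to itself is strictly shorter. It is not an induction along $\indord$, which only serves to fix the index $k$.

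A cleaner self-contained route avoids path combinatorics altogether. For each weakly $\structord$-connected $\budset_S$ with $\indord$-greatest element $n_k$, set $\trace_{S,i}:=\trace_{ik}$. The ordered-manifold conditions, with $i$ and $j$ swapped in the third one, are then exactly the trace-manifold conditions, and $\trace_{kk}$ supplies the progression point. The theorem therefore follows from the preceding trace-manifold theorem.
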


\section{Justification Theory} \label{app:justifications}

In Section \ref{sec:well-founded-semantics}, we learned that the semantics of non-monotone inductive definitions can be captured with the notion of well-founded induction.
An alternative way to capture the semantics of non-monotone inductive definitions is with the notion of \emph{justification} \cite{DeneckerS93,DeneckerPhD93}.
Denecker developed justification theory to assign declarative meaning to logic programs by interpreting them as (non-monotone) inductive definitions.
Intuitively, a justification is a graph-like object showing how the truth or falsity of a defined atom 
follows from the rules of the corresponding definition.

The justification theory presented in this appendix can be seen as an instance of the framework by Marynissen \cite{phd/Marynissen22}. %
Marynissen considers three-valued models of various semantics for non-monotonic logics, whereas our purposes only require two-valued well-founded models.

\begin{definition}[Fact] \label{def:fact}
	Let $\vocab$ be a vocabulary and $D$ a set.
	For each $a \in D$, we fix an object symbol $c_a$ that does not occur in $\vocab$, and we denote the set of all these object symbols by $\consts{D}$.
	Let $\vocab^{\consts{D}} \coloneq \vocab \cup \consts{D}$.
	A \emph{fact} w.r.t.~$\vocab$ and $D$ is a tuple $(\form, v)$, where $\form$ is an \fo-formula over $\vocab^{\consts{D}}$ and $v \in \{\true, \false\}$.
	We call a fact $(\form, v)$ \emph{positive} if $v = \true$, \emph{negative} if $v = \false$, and \emph{atomic} if $\form$ is an atom.
	We denote the set of facts w.r.t.~$\vocab$ and $D$ by $\facts{\vocab}{D}$.
\end{definition}

By abuse of notation, we will often denote an object symbol $c_a \in \consts{D}$ simply by $a$.

\begin{definition} \label{def:just-derives}
	Let $\defn$ be a definition and $\cont$ a $\defn$-context (i.e., a $\pars{\defn}$-structure). 
	Let $\vocab = \sym{\defn}$ and $D = \dom{\cont}$.
	Let $\consts{D}$ be as in Definition \ref{def:fact}, and let $\cont^{\consts{D}}$ be the expansion of $\cont$ with vocabulary $\vocab \cup \consts{D}$ that interprets every object symbol $c_a \in \consts{D}$ as $a$.
	Let $\formtwo$ be an \fo-formula over $\vocab^{\consts{D}}$ and $P(\bar{s})$ an atom over $\vocab^{\consts{D}}$ such that $P \in \defp{\defn}$.\footnote{
		An \fo-formula \emph{over} a vocabulary $\vocab$ is an \fo-formula $\form$ such that all non-logical symbols in $\form$ are in $\vocab$.
	}
	We say that $\formtwo$ \emph{derives} $P(\bar{s})$ w.r.t.~$\defn$ and $\cont$ if there exists a rule $\defrul$ in $\defn$ and a tuple of object symbols $\bar{c}$ in $\consts{D}$ such that $\formtwo \doteq \form[\bar{c}/\bar{x}]$ and $\cont^{\consts{D}} \models \, {\bar{s} = \bar{t}[\bar{c}/\bar{x}]}$. 
\end{definition}

\begin{definition}[Justification] \label{def:justification}
	Let $\defn$ be a definition and $\cont$ a $\defn$-context. %
	Let $\vocab = \sym{\defn}$ and $D = \dom{\cont}$, and let $\fact$ be a fact w.r.t.~$\vocab$ and $D$.
	Let $J = (N, E, r, \labfun)$ be a connected, rooted, labeled, directed graph, where $N$ is the set of \emph{nodes}, $E \subseteq N \times N$ the set of \emph{edges}, $r \in N$ the \emph{root} and $\labfun: N \to \facts{\defn}{D}$ the \emph{labeling function}.
	Assume that $\labfun(r) = \fact$, and that for every node $n \in N$:
	\begin{itemize}
		\item if $\labfun(n)$ is an atomic fact $(\form, v)$ such that $\form$ is not a defined atom of $\defn$, 
		then $n$ has no children, and $\cont^{\consts{D}} \models \form$ iff $v=\true$ (where $\cont^{\consts{D}}$ is as in Definition \ref{def:just-derives});
		\item otherwise, $n$ either has no children, in which case it is called a \emph{bud} of $J$, or:
		\begin{itemize}
			\item if $\labfun(n)$ is an atomic fact $(P(\bar{t}), \true)$ with $P \in \defp{\defn}$, then $n$ has one child, with label $(\form, \true)$, such that $\form$ derives $P(\bar{t})$ w.r.t.~$\defn$ and $\cont$;
			\item if $\labfun(n)$ is an atomic fact $(P(\bar{t}), \false)$ with $P \in \defp{\defn}$, then for every $\form$ that derives $P(\bar{t})$ w.r.t.~$\defn$ and $\cont$, $n$ has a child with label $(\form, \false)$;
			\item if $\labfun(n)$ is of the form $(\lnot \form, \true)$, then $n$ has one child, with label $(\form, \false)$; %
			\item if $\labfun(n)$ is of the form $(\lnot \form, \false)$, then $n$ has one child, with label $(\form, \true)$; 
			\item if $\labfun(n)$ is of the form $(\form \land \formtwo, \true)$, then $n$ has two children, with labels $(\form, \true)$ and $(\formtwo, \true)$;
			\item if $\labfun(n)$ is of the form $(\form \land \formtwo, \false)$, then $n$ has one child, with label $(\form, \false)$ or $(\formtwo, \false)$;
			\item if $\labfun(n)$ is of the form $({\forall x: \form}, \true)$, then, for every $a \in D$, $n$ has a child with label $(\form[c_a/x], \true)$ (and no other children);
			\item if $\labfun(n)$ is of the form $({\forall x: \form}, \false)$, then $n$ has one child, with label $(\form[c_a/x], \false)$ for some $a \in D$.
			\item (The rules for $\lor$, $\Rightarrow$ %
			and $\exists$ can be derived from the rules above.)
		\end{itemize}
	\end{itemize}
	Then we call $J$ a \emph{justification} of $\fact$ w.r.t.~$\defn$ and $\cont$.
	If furthermore $J$ has no buds, we say that $J$ is \emph{locally complete}.
\end{definition}

\begin{example} \label{ex:just-even}
	Let $\defn$ be the definition of $\mathit{Even}$ from Example \ref{ex:def_even} and let $\cont$ be the standard $\defn$-context, i.e., $\dom{\cont} = \nat$, $\mathit{zero}^{\str} = 0$, $\mathit{succ}^{\str}(n) = n+1$ for all $n \in \nat$ and $\mathit{Nat}^{\str} = \nat$. 
	The tree in Figure \ref{fig:just-two-even} is a graphical representation of a locally complete justification $J$ of the fact $(\mathit{Even}(2), \true)$ w.r.t.~$\defn$ and $\cont$.
	
	\begin{figure}[h]
		\centering
				\begin{tikzpicture}[ node distance = -3mm and 8mm, ]
					\node[draw] (even2) at (0,0) {$(\mathit{Even}(2), \true)$}; 
					\node (nat1andnoteven1) [below= 3 mm of even2] {$(\mathit{Nat}(1) \land \lnot \mathit{Even}(1), \true)$}; \draw[just] (even2) to (nat1andnoteven1); 
					\node (nat1) [below left= 1mm and 0mm of nat1andnoteven1] {$(\mathit{Nat}(1), \true)$}; \draw[just] (nat1andnoteven1) to (nat1); 
					\node (noteven1) [below right= 1mm and 0mm of nat1andnoteven1] {$(\lnot \mathit{Even}(1), \true)$}; \draw[just] (nat1andnoteven1) to (noteven1); 
					\node (even1) [below= 3mm of noteven1] {$(\mathit{Even}(1), \false)$}; \draw[just] (noteven1) to (even1); 
					\node (nat0andnoteven0) [below= 3mm of even1] {$(\mathit{Nat}(0) \land \lnot \mathit{Even}(0), \false)$}; \draw[just] (even1) to (nat0andnoteven0); 
					\node (noteven0) [below= 3mm of nat0andnoteven0] {$(\lnot \mathit{Even}(0), \false)$}; \draw[just] (nat0andnoteven0) to (noteven0); 
					\node (even0) [below= 3mm of noteven0] {$(\mathit{Even}(0), \true)$}; \draw[just] (noteven0) to (even0); 
					\node (true) [below= 3mm of even0] {$(\top, \true)$}; \draw[just] (even0) to (true); 
				\end{tikzpicture}
			\caption{
					A justification with only finite branches.
					Since both leafs are labeled with facts that do not contain defined atoms of $\defn$, this justification is locally complete.
				} 
			\label{fig:just-two-even}
		\end{figure}
\end{example}
	
	\begin{example} \label{ex:just-reachable}
		Let $\defn$ be the definition
		\[
		\defin{
			\mathit{Reachable}(\mathit{start}) \rul \top \\
			\forall x,y: \mathit{Reachable}(y) \rul \mathit{Reachable}(x) \land \mathit{Edge}(x,y)
		} 
		\]
		and let $\cont$ be the $\defn$-context defined by $\dom{\cont} = \{a, b\}$, $\mathit{start}^{\cont} = a$ and $\mathit{Edge}^{\cont} = \{ (b,b) \}$. %
		The graph in Figure \ref{fig:just-b-not-reachable} represents a locally complete justification of the fact $(\mathit{Reachable}(b), \false)$ w.r.t.~$\defn$ and $\cont$, and the graph in Figure \ref{fig:just-b-reachable} represents a locally complete justification of the fact $(\mathit{Reachable}(b), \true)$ w.r.t.~$\defn$ and $\cont$.
		\begin{figure}[h]
			\centering
			\begin{tikzpicture}[ node distance = -3mm and 8mm, ]
				\node[draw] (rb) at (0,0) {$(\mathit{Reachable}(b), \false)$}; 
				\node (raandeab) [below left= 1mm and 0mm of rb] {$(\mathit{Reachable}(a) \land \mathit{Edge}(a,b), \false)$}; \draw[just] (rb) to (raandeab); 
				\node (rbandebb) [below right= 1mm and 0mm of rb] {$(\mathit{Reachable}(b) \land \mathit{Edge}(b,b), \false)$}; \draw[just] (rb) to (rbandebb); 
				\draw[just] (rbandebb) to (rb); 
				\node (eab) [below= 3mm of raandeab] {$(\mathit{Edge}(a,b), \false)$}; 
				\draw[just] (raandeab) to (eab); 
			\end{tikzpicture}
			\caption{A justification with a negative infinite branch.}
			\label{fig:just-b-not-reachable}
		\end{figure}
		
		\begin{figure}[h]
			\centering
			\begin{tikzpicture}[ node distance = -3mm and 8mm, ]
				\node[draw] (rb) at (0,0) {$(\mathit{Reachable}(b), \true)$}; 
				\node (rbandebb) [below= 3mm of rb] {$(\mathit{Reachable}(b) \land \mathit{Edge}(b,b), \true)$}; 
				\draw[just] (rb) to (rbandebb); 
				\draw[just] (rbandebb) to (rb); 
				\node (ebb) [below= 3mm of rbandebb] {$(\mathit{Edge}(b,b), \true)$}; 
				\draw[just] (rbandebb) to (ebb); 
			\end{tikzpicture}
			\caption{A justification with a positive infinite branch.}
			\label{fig:just-b-reachable}
		\end{figure}
	\end{example}
	
	Example \ref{ex:just-reachable} shows that not all justifications are acceptable, in the sense that not all justifications derive true facts.
	Justification theory determines which justifications are acceptable by specifying which sorts of \emph{branches} are allowed.
	The latter depends on the considered semantics in general.
	For the well-founded semantics, the allowed branches are precisely the finite and the \emph{negative} infinite branches.
	
	\begin{definition}
		Let $J =(N, E, r, \labfun)$ be a locally complete justification. %
		A \emph{branch} $\branch$ in $J$ is a maximal path $(n_i)_i$ in $(N, E)$ starting at the root $r$.
		We call an infinite branch $(n_i)_{i}$ \emph{positive} resp.~\emph{negative} if there exists an $i_0$ such that for all $i \geq i_0$: if $\labfun(n_i)$ is an atomic fact, it is positive resp.~negative.\footnote{
			Note that any infinite branch in a justification visits infinitely many atomic facts.
		}
		We call an infinite branch \emph{mixed} if it is %
		not positive or negative.
	\end{definition}
	
	\begin{definition}[Good Justification]
		We say that a locally complete justification $J$ is \emph{good} (under the well-founded semantics)
		if every infinite branch in $J$ is negative.
	\end{definition}
	
	\begin{example}
		The justification in Figure \ref{fig:just-two-even} is good as it has no infinite branches.
		The first justification in Figure \ref{fig:just-b-not-reachable} is good as its only infinite branch is negative.
		The second justification in Figure \ref{fig:just-b-reachable} is not good as it has a positive infinite branch. 
	\end{example}
	
	The following proposition is an instance of \cite[Theorem 6.5.12]{phd/Marynissen22}. 
	It shows that the well-founded semantics can be characterized with justifications.
	
	\begin{proposition} \label{prop:wf-sem-just}
		Let $\defn$ be a definition and $\str$ a structure interpreting $\defn$.
		Let $D \coloneq \dom{\str}$ and $\cont \coloneq \str|_{\pars{\defn}}$.
		Then $\str \modelswf \defn$ iff for every defined predicate $P$ of $\defn$ and every $\bar{a} \in D^{\ar{P}}$:
		\begin{itemize}
			\item if $\bar{a} \in P^\str$, then there exists a good justification of $(P(\bar{a}), \true)$ w.r.t.~$\defn$ and $\cont$;
			\item if $\bar{a} \notin P^\str$, then there exists a good justification of $(P(\bar{a}), \false)$ w.r.t.~$\defn$ and $\cont$.
		\end{itemize}
	\end{proposition}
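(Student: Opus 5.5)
The plan is to prove the two directions separately. The forward direction is built from a terminal well-founded induction, and the converse is an induction over that well-founded induction. Throughout, I identify domain elements with the constants $c_a$ of Definition~\ref{def:fact}. I also use the routine fact that for a sentence $\form$ over $\vocab^{\consts{D}}$, the Kleene value $\form^{\tvstr}$ agrees with the value computed in $\tvstr$ expanded by $c_a \mapsto a$.

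\emph{($\Rightarrow$).} Assume $\str \modelswf \defn$ and fix a terminal well-founded induction $\wfind$ of $\defn$ in $\cont$ with limit $\str|_{\sym{\defn}}$. This limit is two-valued, so every defined domain atom $A$ receives a stage $\mathrm{st}(A)$, namely the successor $i+1$ at which it leaves $\unknown$. Limits never introduce new values, so stages are indeed successors.
\begin{itemize}
\item If $A$ becomes true at $i+1$, then $\form_A^{\tvstr_i} = \true$. Hence some instance $\form[\bar c/\bar x]$ of a rule body deriving $A$ is true in $\tvstr_i$, and I take it as the unique child of $A$.
\item If $A$ becomes false at $i+1$, then every deriving body is false in $\tvstr_{i+1}$, and I take all of these bodies as children.
\end{itemize}
Below a non-atomic fact $(\formtwo, v)$ that is already determined in some $\tvstr_j$, I choose children as follows. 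Whenever a choice is needed (the $\land$-false and $\forall$-false cases), I pick a component that already has value $v$ in $\tvstr_j$; such a component exists by the Kleene clauses. In all cases every chosen child is again determined in $\tvstr_j$. Descending through connectives in this way eventually reaches defined atoms, where the construction restarts, and parameter atoms, which are leaves correct w.r.t.\ $\cont$. This yields a locally complete justification of $(P(\bar a), P^{\str}(\bar a))$, and all its labels agree with $\str$.

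To show the justification is good, I track the stages of consecutive defined atoms $A \to B$ on a branch.
\begin{itemize}
\item From a true $A$ with $\mathrm{st}(A) = i+1$: the body is evaluated in $\tvstr_i$, so $\mathrm{st}(B) \le i$, a strict decrease.
\item From a false $A$ with $\mathrm{st}(A) = i+1$: we have $\mathrm{st}(B) \le i+1$. If $B$ is true, then $\mathrm{st}(B) \le i$, because a refinement step sets atoms only to $\true$ or only to $\false$.
\end{itemize}
Hence every step into or out of a true atom strictly decreases the ordinal stage, and steps between false atoms never increase it. By well-foundedness of the ordinals, an infinite branch meets only finitely many positive atomic facts. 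So every infinite branch is negative and the justification is good.

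\emph{($\Leftarrow$).} Assume good justifications exist for all defined atoms, with root values as in $\str$. Let $\wfmod$ be the well-founded model in $\cont$. The plan has three steps.
\begin{enumerate}
\item Prove a consistency lemma: every fact occurring in a good justification whose root agrees with $\str$ is itself true in $\str$.
\item Show by transfinite induction on $i$ that $\tvstr_i \leqp \str$ along a terminal induction. For a true-refinement this uses $\form_A^{\tvstr_i} = \true$, which gives $\form_A^{\str} = \true$. A false atom of $\str$ with a true body would contradict its good justification, via the lemma, since that justification has a child labelled with this body as false.
\item Show that $\wfmod$ is two-valued.
\end{enumerate}
For step 2 with an unfounded set $U$, I argue as follows. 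If some $A \in U$ were true in $\str$, consider its good justification. Its true-atom part, cut at false atoms, is well-founded because the justification has no positive or mixed branches. Choose a true atom of $U$ of minimal rank in this part. Its chosen body is true in $\str$, and that body becomes false in $\tvstr_i[U:\false]$. The atoms responsible for the falsity are therefore either in $U$ with smaller rank, or already false in $\tvstr_i$ and thus false in $\str$, a contradiction either way.

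For step 3, let $S$ be the set of atoms that are still unknown in $\wfmod$. Consider first the atoms in $S$ that are false in $\str$. Their good justifications have only negative infinite branches, and I expect these atoms to form an unfounded set relative to $\wfmod$, since positive detours are finite and bottom out in atoms already decided in $\wfmod$. This would contradict terminality. Consider next the atoms in $S$ that are true in $\str$. Here I use induction on the well-founded positive rank to obtain a $\true$-refinement. That yields $\wfmod = \str|_{\sym{\defn}}$.

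The main obstacle is this last step. It requires a careful simultaneous treatment of positive and negative atoms, because the rank used for true atoms depends on false atoms being settled and vice versa. I would first try to organize it as a single induction on pairs consisting of an ordinal stage of $\wfmod$ and a positive rank. If that fails, I would fall back on the alternative characterization of the well-founded model as a least fixpoint and invoke \cite[Theorem 6.5.12]{phd/Marynissen22} for this converse direction.
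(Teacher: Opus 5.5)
The paper gives no proof of this proposition. It states it as an instance of \cite[Theorem 6.5.12]{phd/Marynissen22}.

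\textbf{Forward direction and steps 1--2.} Your ($\Rightarrow$) direction is a correct self-contained replacement for that citation. Stages are successors, and a true atom's chosen body is already decided one stage earlier. Since refinement steps are uniformly $\true$ or uniformly $\false$, a true atom below a false one is also decided strictly earlier. So every step into or out of a positive atomic fact lowers the stage.

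In ($\Leftarrow$), steps 1--2 are workable, but step 1 needs an argument you do not supply. First descend from an inconsistent node to an inconsistent defined atom $(B,v)$. Then show that good justifications of $(B,\true)$ and $(B,\false)$ cannot coexist by walking them in lockstep: whenever one side makes a choice, the other side has all alternatives as children. This either reaches a parameter atom with complementary labels, which is impossible, or yields two infinite branches with complementary atomic labels. In the latter case one branch has infinitely many positive atomic facts, so it is not negative.

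\textbf{The gap: step 3.} This step is the heart of the converse, and it is not established. You claim that the atoms of $S$ false in $\str$ form an unfounded set w.r.t.\ $\wfmod$ ``because positive detours are finite and bottom out in atoms already decided in $\wfmod$''. That claim is unjustified. A positive segment hanging below a false atom of $S$ is well-founded, but its atoms are decided in $\wfmod$ only if the negative atoms at its own boundary are, and those may again lie in $S$. So a body can stay $\unknown$ in $\wfmod[S_{\false}:\false]$. Symmetrically, a true atom of $S$ can have its body $\unknown$ in $\wfmod$ because of negative atoms in $S$. Your proposed measure (stage of $\wfmod$, positive rank) cannot break this circularity, since atoms of $S$ have no stage.

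\textbf{The missing idea.} Induct on the \emph{alternation} structure of a single good justification.
\begin{itemize}
\item Cut the justification at the nodes where the polarity of successive atomic facts switches. The relation ``switch node $m$ lies on the boundary of the segment of switch node $n$'' is well-founded, because infinitely many switches along a branch give infinitely many positive atomic facts.
\item Take a positive segment whose negative boundary atoms are false in $\wfmod$. Your minimal-rank argument, together with the absence of a $\true$-refinement of $\wfmod$, makes all atoms of that segment true in $\wfmod$.
\item For negative segments, let $N$ be the set of atoms that are not false in $\wfmod$ and have a good justification of $(C,\false)$ whose positive boundary atoms are all true in $\wfmod$. By your steps 1--2 these atoms are unknown in $\wfmod$. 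Tracing a non-false body down such a justification, as in your step 2, shows that $N$ is unfounded w.r.t.\ $\wfmod$. Hence $N=\emptyset$ by terminality.
\end{itemize}
Induction on the alternation rank then decides every atom in $\wfmod$ as in $\str$.

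Without this argument, or your fallback citation (which is exactly what the paper does for the whole statement), your converse remains unproven.
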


\section{Completeness Proof} \label{app:completeness}

In this appendix, we prove the completeness of \scinf:

\Completeness*
	
	We start the proof by fixing an \foid-sequent $\sregfix$.
	Let $\vocab$ be the vocabulary consisting of all non-logical symbols in $\sregfix$, and $\terms$ the set of all terms composed of function symbols in $\vocab$.
	Let $\varset$ be a countably infinite set of object symbols.
	We will construct a search tree for $\sregfix$ alongside a \emph{schedule}, which is an infinite list of \emph{schedule elements}, corresponding to applications of inference rules.

	\begin{definition}[Schedule]
		A \emph{schedule element} is defined by the following rules:
		\begin{itemize}
			\item If $\form$ and $\formtwo$ are \fo-formulae over $\vocab$, then $\lnot \form$, $\form \land \formtwo$, $\form \lor \formtwo$, $\form \Rightarrow \formtwo$, $\form \Leftrightarrow \formtwo$ and $\cut{\form}$ are schedule elements.
			\item If $\form$ is an \fo-formula over $\vocab$, $x \in \varset$ and $t \in \terms$, then $\langle {\forall x: \form},\, t \rangle$ and $\langle {\exists x: \form},\, t \rangle$ are schedule elements.
			\item If $t$ and $s$ are terms in $\terms$, $x$ and $y$ distinct object symbols in $\varset$, and $\Gamma$ and $\Delta$ finite sets of \fo-formulae over $\vocab$, then $\langle t=s, x, y, \Gamma, \Delta \rangle$ is a schedule element. 
			\item If $\defrul$ is a definitional rule in $\defn$ and $\bar{s}$ a tuple of terms with the same length as $\bar{x}$, then $\langle P(\bar{t}[\bar{s}/\bar{x}]), \form[\bar{s}/\bar{x}] \rangle$ is a schedule element.
		\end{itemize}
		A \emph{schedule} is a sequence $\sched$ of schedule elements in which every schedule element appears infinitely often.
	\end{definition}
	
	Henceforth, we fix a schedule $\sched$, whose existence follows from the fact that there are countably many schedule elements.
	
	\begin{definition}[Search Tree] \label{def:search-tree}
		The \emph{search tree} for $\sregfix$ %
		is an \scinf-derivation of $\sregfix$, which we construct as the limit $\schtree$ of a sequence of \scinf-derivations $(\deriv_i)_{i}$ of $\sregfix$ such that for every $i$, $\deriv_{i+1}$ extends $\deriv_i$.\footnote{
			We say that an \scinf-derivation $\deriv = (N, E, r, \seqfun, \rulfun)$ \emph{extends} an \scinf-derivation $\deriv' = (N', E', r', \seqfun', \rulfun')$ if $N' \subseteq N$, $E' = E \cap (N' \times N')$, $r' = r$, $\seqfun' = \seqfun|_{N'}$ and $\rulfun' = \rulfun|_{N'}$.
		}
		The first derivation $\deriv_0$ consists of a single node with label $\sregfix$.
		Given a derivation $\deriv_i$, we construct $\deriv_{i+1}$ as follows.
		First, we replace every bud in $\deriv_i$ with label $\sregseq$ such that $\Gamma \cap \Delta \neq \emptyset$, $\bot \in \Gamma$, or $\top \in \Delta$ with the following derivation:
		\begin{equation*}
			\begin{prooftree}
				\infer0[\textup{(ax)}]{\sregseq}
			\end{prooftree}
		\end{equation*}
		Similarly, we replace every bud in $\deriv_i$ with label $\sregseq$ such that ${t=t} \in \Delta$ for a term $t$ with the following derivation:
		\begin{equation*}
			\begin{prooftree}
				\infer0[\textup{($=$R)}]{\sregseq}
			\end{prooftree}
		\end{equation*}

		We proceed by case distinction on the $i$-th schedule element $\xi_{i}$ in the fixed schedule $\sched$:
		
		\textbf{Case}: $\xi_{i}$ is of the form $\lnot \form$.
		We replace every bud with label $\sregseq$ such that $\lnot \form \in \Gamma$ with the derivation
		\begin{equation*}
			\begin{prooftree}
				\hypo{\defn, \Gamma \vdash \form, \Delta}
				\infer1[\textup{($\lnot$L)}]{\sregseq}
			\end{prooftree}
		\end{equation*}	
		and every bud with label $\sregseq$ such that $\lnot \form \in \Delta$ with the derivation
		\begin{equation*}
			\begin{prooftree}
				\hypo{\defn, \Gamma, \form \vdash \Delta}
				\infer1[\textup{($\lnot$R)}]{\sregseq}
			\end{prooftree}
		\end{equation*}	
		
		\textbf{Case}: $\xi_{i}$ is of the form $\form \land \formtwo$.
		We replace every bud with label $\sregseq$ such that $\form \land \formtwo \in \Gamma$ with the derivation
		\begin{equation*}
			\begin{prooftree}
				\hypo{\defn, \Gamma, \form, \formtwo \vdash \Delta}
				\infer1[\textup{($\land$L)}]{\sregseq}
			\end{prooftree}
		\end{equation*}	
		and every bud with label $\sregseq$ such that $\form \land \formtwo \in \Delta$ with the derivation
		\begin{equation*}
			\begin{prooftree}
				\hypo{\defn, \Gamma \vdash \form, \Delta}
				\hypo{\defn, \Gamma \vdash \formtwo, \Delta}
				\infer2[\textup{($\land$R)}]{\sregseq}
			\end{prooftree}
		\end{equation*}	
		
		\textbf{Cases}: $\xi_{i}$ is of the form $\form \lor \formtwo$ or $\form \Rightarrow \formtwo$.
		These cases are similar to the previous case.
		
		\textbf{Case}: $\xi_{i}$ is of the form $\cut{\form}$.
		We replace every bud with label $\sregseq$ with the derivation
		\begin{equation*}
			\begin{prooftree}
				\hypo{\defn, \Gamma \vdash \form, \Delta}
				\hypo{\defn, \Gamma, \form \vdash \Delta}
				\infer2[\textup{(cut)}]{\sregseq}
			\end{prooftree}
		\end{equation*}	
		
		\textbf{Case}: $\xi_{i}$ is of the form $\langle {\forall x: \form}, t \rangle$.
		We replace every bud with label $\sregseq$ such that ${\forall x: \form} \in \Gamma$ with the derivation
		\begin{equation*}
			\begin{prooftree}
				\hypo{\defn, \Gamma, \form[t/x] \vdash \Delta}
				\infer1[\textup{($\forall$L)}]{\sregseq}
			\end{prooftree}
		\end{equation*}	
		and every bud with label $\sregseq$ such that ${\forall x: \form} \in \Delta$ with the derivation
		\begin{equation*}
			\begin{prooftree}
				\hypo{\defn, \Gamma \vdash \form[z/x], \Delta}
				\infer1[\textup{($\forall$R)}]{\sregseq}
			\end{prooftree}
		\end{equation*}	
		where $z$ is an object symbol in $\varset$ that does not occur freely in $\defn$, $\Gamma$ or $\Delta$.
		
		\textbf{Case}: $\xi_{i}$ is of the form $\langle {\exists x: \form}, t \rangle$.
		This case is similar to the previous case.
		
		\textbf{Case}: $\xi_{i}$ is of the form $\langle {t=s}, x, y, \Gamma, \Delta \rangle$.
		We replace every bud with label of the form $\defn, \Gamma', {t=s} \vdash \Delta'$ with the derivation\footnote{
			For this to be a correct instance of ($=$L), we assume that $x$ and $y$ do not occur freely in $\defn$, $\Gamma'$ and $\Delta'$.
			If this is not the case, we can replace $x$ and $y$ with other object symbols in $\varset$ that do not occur freely in $\defn$, $\Gamma'$ and $\Delta'$ (which exist since these sets are finite) and modify $\Gamma$ and $\Delta$ accordingly.
		}
		\begin{equation*}
			\begin{prooftree}
				\hypo{\defn, \Gamma', \Gamma''[s/x,t/y], t=s \vdash \Delta', \Delta''[s/x,t/y]}
				\infer1[\textup{($=$L)}]{\defn, \Gamma', t=s \vdash \Delta'}
			\end{prooftree}
		\end{equation*}	
		where $\Gamma''$ consists of all formulae $\form \in \Gamma$ such that $\form[t/x,s/y] \in \Gamma'$ and $\Delta''$ consists of all formulae $\form \in \Delta$ such that $\form[t/x,s/y] \in \Delta'$.
		
		\textbf{Case}: $\xi_{i}$ is of the form $\langle P(\bar{t}[\bar{s}/\bar{x}]), \form[\bar{s}/\bar{x}] \rangle$.
		We replace every bud with label $\sregseq$ such that $P(\bar{t}[\bar{s}/\bar{x}]) \in \Gamma$ with the derivation
		\begin{equation*}
			\begin{prooftree}
				\hypo{\textup{case distinctions}}
				\infer1[\textup{(case)}]{\sregseq}
			\end{prooftree}
		\end{equation*}	
		For every rule $\forall \bar{y}: P(\bar{v}) \rul \formtwo$ in $\defn$ defining $P$, this rule has a premise $\defn, \Gamma, \bar{t}[\bar{s}/\bar{x}] = \bar{v}[\bar{z}/\bar{y}], \formtwo[\bar{z}/\bar{y}] \vdash \Delta$,
		where $\bar{z}$ is a tuple of object symbols in $\varset$ that do not occur freely in $\defn$, $\Gamma$ or $\Delta$, with the same length as $\bar{y}$.
		Furthermore, we replace every bud with label $\sregseq$ such that $P(\bar{t}[\bar{s}/\bar{x}]) \in \Delta$ with the derivation
		\begin{equation*}
			\begin{prooftree}
				\hypo{\defn, \Gamma \vdash \form[\bar{s}/\bar{x}], \Delta}
				\infer1[\textup{(def R)}]{\sregseq}
			\end{prooftree}
		\end{equation*}	
		
		Since for every $i \in \nat$, $\deriv_{i+1}$ extends $\deriv_i$, the sequence $(\deriv_i)_i$ has a well-defined limit $\schtree$,\footnote{
			The existence of a limit follows from a generalization of the Knaster-Tarski fixpoint theorem \cite{au/Markowsky76}, as the extension relation defines a \emph{chain-complete partial order} on the set of \scinf-derivations.
		} 
		which is also an \scinf-derivation of $\sregfix$.
		We call $\schtree$ the \emph{search tree} for $\sregfix$. %
	\end{definition}
	
	By construction, the search tree $\schtree = (N, E, s, \seqfun, \rulfun)$ for $\sregseq$ %
	has the property that for all $(n,n') \in E$: if $\seqfun(n) \doteq \sregseq$ and $\seqfun(n') \doteq \defn, \Gamma' \vdash \Delta'$, then $\Gamma \subseteq \Gamma'$ and $\Delta \subseteq \Delta'$.
	Furthermore, $\schtree$ has no buds, meaning that it is an \scinf-pre-proof.
	If $\schtree$ is moreover an \scinf-proof, then there is nothing left to prove.
	Thus, we can assume for the remainder of the proof that $\schtree$ is not an \scinf-proof.
	This means that there exists a branch in $\schtree$ along which no infinitely progressing trace exists.
	We fix such a branch $(n_i)_i$ and refer to it as the \emph{untraceable branch} $\untbranch$.
	For every $i$, we write $\seqfun(n_i) \doteq \defn, \Gamma_i \vdash \Delta_i$.
	We define the sets $\Gamma_\omega \coloneq \bigcup_{i} \Gamma_i$ and $\Delta_\omega \coloneq \bigcup_{i} \Delta_i$, and refer to $\limseq$ as the \emph{limit sequent}.\footnote{
		Since $\Gamma_\omega$ and $\Delta_\omega$ contain infinitely many formulae, the limit sequent is strictly speaking not a sequent.
	}	
	The sets $\Gamma_\omega$ and $\Delta_\omega$ form a partitioning of the set of \fo-formulae over $\vocab$:
	
	\begin{proposition} \label{prop:limit-sequent-partition}
		For every \fo-formula $\form$ over $\vocab$, either $\form \in \Gamma_\omega$ or $\form \in \Delta_\omega$, but not both.
	\end{proposition}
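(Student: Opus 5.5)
We prove the two halves separately: first that every \fo-formula over $\vocab$ lies in $\Gamma_\omega \cup \Delta_\omega$, then that no formula lies in both. Both arguments rely on the two monotonicity facts noted right after Definition~\ref{def:search-tree}. The first is that along any edge $(n,n')$ of $\schtree$ the left and right sides only grow. The second is that every schedule element occurs infinitely often in $\sched$.

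\textbf{Covering.} Fix an \fo-formula $\form$ over $\vocab$. The schedule element $\cut{\form}$ occurs at some position $i$ of $\sched$. Since $\untbranch$ is infinite, the stage-$i$ construction of Definition~\ref{def:search-tree} acts on the node of $\untbranch$ that is a bud of $\deriv_i$. This claim is the step I expect to need the most care: one must check that at each stage the branch passes through a bud of $\deriv_i$, and that the bud is not closed by (ax) or ($=$R). It cannot be so closed, because such a closure would create a leaf and contradict infinitude.

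Every bud is replaced by a (cut) on $\form$, so the next node of $\untbranch$ is labelled either $\defn, \Gamma \vdash \form, \Delta$ or $\defn, \Gamma, \form \vdash \Delta$. In the first case $\form \in \Delta_\omega$, and in the second $\form \in \Gamma_\omega$. If the bud reached at stage $i$ was already closed earlier, I would instead take a later occurrence of $\cut{\form}$ in the schedule.

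\textbf{Disjointness.} Suppose $\form \in \Gamma_\omega \cap \Delta_\omega$. Then $\form \in \Gamma_j$ and $\form \in \Delta_k$ for some $j$ and $k$. By monotonicity, $\form \in \Gamma_m \cap \Delta_m$ with $m = \max(j,k)$, and this persists at every later node. At the next stage at which the node of $\untbranch$ sitting at that depth or below is a bud, it satisfies $\Gamma \cap \Delta \neq \emptyset$. The construction then replaces it by an (ax) leaf, so the branch terminates. This contradicts the fact that $\untbranch$ is infinite.

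\textbf{Why $\untbranch$ is infinite.} This follows from how $\untbranch$ was chosen. It is a branch along which no infinitely progressing trace exists and which violates the trace condition, and only infinite branches can do so. Finite branches of a pre-proof end in (ax) or ($=$R) and impose no requirement.

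\textbf{Bookkeeping.} The last technical point is to argue that each node $n_i$ of $\untbranch$ is a bud of some $\deriv_k$, and that $\deriv_{k+1}$ expands it. This holds because nodes of $\schtree$ are created at finite stages, and every bud existing at stage $k$ is processed at stage $k$. The only exception is a schedule element that does not apply to that bud, in which case the bud is left for later; this happens only for non-cut elements. The cut elements apply unconditionally, so infinitely many cut stages force the branch to keep growing, which settles the bookkeeping.
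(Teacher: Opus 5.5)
Your proposal is correct and follows the paper's proof. Covering comes from the stage at which the schedule element $\cut{\form}$ forces a (cut) on the bud lying on $\untbranch$. Disjointness comes from the fact that a node with $\form \in \Gamma_i \cap \Delta_i$ would be closed by (ax), contradicting the infinitude of $\untbranch$.

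Your extra bookkeeping, that at each stage the branch passes through a bud of $\deriv_i$ that cannot be an axiom leaf, is a welcome refinement. The paper silently identifies stage $j$ with the transition to $n_{j+1}$ on the branch, which is not literally accurate when a non-applicable schedule element leaves a bud untouched.
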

	
	\begin{proof}
		Let $\form$ be an \fo-formula over $\vocab$ 
		There exists a $j$ such that $\cut{\form}$ occurs as the $j$-th schedule element $\xi_j$ in $\sched$.
		Write $\untbranch =  (n_i)_i$ and $\seqfun(n_i) = \defn, \Gamma_i \vdash \Delta_i$ for every $i$. %
		By construction of $\schtree$, the sequent $\seqfun(n_{j+1})$ is of the form $\defn, \Gamma_j \vdash \form, \Delta_j$ or of the form $\defn, \Gamma_j, \form \vdash \Delta_j$.
		Thus, $\form \in \Gamma_{j+1}$ or $\form \in \Delta_{j+1}$, and hence, $\form \in \Gamma_\omega$ or $\form \in \Delta_\omega$.
		The formula $\form$ cannot belong to both $\Gamma_\omega$ and $\Delta_\omega$, for otherwise, there would be an $i$ such that $\form \in \Gamma_i \cap \Delta_i$.
		However, by construction of $\schtree$, this would imply that $\untbranch$ were finite, which is a contradiction.
	\end{proof}
	
	We proceed by constructing a countermodel $\cntmod$ of $\limseq$, which is also a countermodel of $\sregfix$, as $\Gamma_0 \subseteq \Gamma_\omega$ and $\Delta_0 \subseteq \Delta_\omega$.
	The domain of $\cntmod$ consists of all equivalence classes $[t]$ of terms $t \in \terms$ under an equivalence relation $\eqrel$ over $\terms$:
	
	\begin{definition} \label{def:eqrel}
		For all $t,s \in \terms$, we let $t \eqrel s$ iff ${t=s} \in \Gamma_\omega$.
	\end{definition}
	
	\begin{lemma} \label{lem:interchange-equivalent-terms}
		Let $x$ and $y$ be distinct object symbols in $\varset$ and $\form$ an \fo-formula over $\vocab$.
		Let $t$ and $s$ be terms such that $t \eqrel s$.
		If $\form[t/x, s/y] \in \Gamma_\omega$, then $\form[s/x, t/y] \in \Gamma_\omega$, and if $\form[t/x, s/y] \in \Delta_\omega$, then $\form[s/x, t/y] \in \Delta_\omega$.
	\end{lemma}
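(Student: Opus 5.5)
The plan is to exploit the fact that the schedule $\sched$ contains every schedule element infinitely often, in particular the element $\langle t=s, x, y, \Gamma, \Delta\rangle$ for suitably chosen finite sets $\Gamma$ and $\Delta$. Suppose first that $\form[t/x,s/y] \in \Gamma_\omega$. Since $t \eqrel s$, we also have ${t=s} \in \Gamma_\omega$. Because the antecedents grow monotonically along $\untbranch$, there is an index $i_0$ such that both ${t=s}$ and $\form[t/x,s/y]$ belong to $\Gamma_{i}$ for all $i \geq i_0$.

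Next, pick $j \geq i_0$ such that $\xi_j = \langle t=s, x, y, \{\form\}, \emptyset\rangle$. Without loss of generality, $x$ and $y$ do not occur freely in the current sequent; otherwise, as in the footnote to the construction, we rename them to fresh symbols of $\varset$ and rename inside $\form$ accordingly. Renaming bound-by-substitution variables does not change $\form[t/x,s/y]$ or $\form[s/x,t/y]$, so this is harmless.

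Since $\untbranch$ is infinite, the node $n_j$ is not closed by (ax) or ($=$R) at that stage. Its label has the form $\defn, \Gamma', {t=s} \vdash \Delta'$ with $\form[t/x,s/y] \in \Gamma'$. The construction therefore applies ($=$L) at $n_j$ with $\Gamma'' = \{\form\}$, and the unique child, which must be $n_{j+1}$, has antecedent containing $\form[s/x,t/y]$. Hence $\form[s/x,t/y] \in \Gamma_{j+1} \subseteq \Gamma_\omega$.

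The case $\form[t/x,s/y] \in \Delta_\omega$ is symmetric. We choose $\xi_j = \langle t=s,x,y,\emptyset,\{\form\}\rangle$ with $j$ large enough that $\form[t/x,s/y]\in\Delta_j$ and ${t=s}\in\Gamma_j$. We then get $\form[s/x,t/y]\in\Delta_{j+1}\subseteq\Delta_\omega$.

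The main obstacle is a bookkeeping subtlety in the ($=$L) step of the search-tree construction. Applied literally, the rule's premise substitutes into the whole context, and the construction glosses this by keeping $\Gamma'$ alongside the added formulae. The step needing care is checking that the bud $n_j$ really is expanded by this case, so that it is not left unexpanded or closed, and that the new formula appears exactly in the premise. I would verify this directly from the displayed derivation in Definition \ref{def:search-tree}.
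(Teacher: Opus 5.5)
Your proof is correct and is essentially the paper's argument: from $t \eqrel s$ you get ${t=s}$ into the antecedents along $\untbranch$ from some point on, wait for a late enough occurrence of $\langle t=s, x, y, \{\form\}, \emptyset\rangle$ (resp.\ $\langle t=s, x, y, \emptyset, \{\form\}\rangle$) in $\sched$, and read off $\form[s/x,t/y]$ from the premise of the resulting ($=$L) step. You are slightly more careful than the paper in two respects: you also require $\form[t/x,s/y]$ to be present at that stage, since otherwise $\Gamma''$ would be empty and nothing would be added, and you handle freshness of $x,y$; the one bookkeeping point you share with the paper is that the node expanded at stage $j$ is the unique bud of $\deriv_j$ on $\untbranch$, not literally the $j$-th node $n_j$, so $j$ should be chosen large enough that this bud lies beyond $i_0$.
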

	
	\begin{proof}
		Suppose that $\form[t/x, s/y] \in \Gamma_\omega$.
		Since $t \eqrel s$, we have that ${t = s} \in \Gamma_\omega$, and hence, there exists an $i_0$ such that $t=s \in \Gamma_i$ for all $i \geq i_0$.
		Consider the schedule element $\langle {t = s}, x, y, \{\form\}, \emptyset \rangle$.
		Since it occurs infinitely often on the schedule $\sched$, there exists an $i \geq i_0$ such that $\xi_i = \langle {t = s}, x, y, \{\form\}, \emptyset \rangle$.
		By construction of $\schtree$, the $i$-th transition on the untraceable branch $\untbranch$ is of the following form:
		\begin{equation*}
			\begin{prooftree}
				\hypo{\defn, \Gamma, \form[s/x, t/y], {t=s} \vdash \Delta}
				\infer1[($=$L)]{\defn, \Gamma, {t=s} \vdash \Delta}
			\end{prooftree}
		\end{equation*}
		Thus, $\form[s/x, t/y] \in \Gamma_{i+1}$, which implies that $\form[s/x, t/y] \in \Gamma_\omega$.
		The proof for $\Delta_\omega$ is similar.
	\end{proof}
	
	\begin{proposition} \label{prop:equiv-relation}
		For all $t,s,u, t_1, \dots, t_k, s_1, \dots, s_k \in \terms$ and all $k$-ary function symbols $f$ in $\vocab$:
		\begin{enumerate}%
			\item $t \eqrel t$; \label{it:refl} %
			\item if $t \eqrel s$, then $s \eqrel t$; \label{it:symm} %
			\item if $t \eqrel s$ and $s \eqrel u$, then $t \eqrel u$; \label{it:trans} %
			\item if $t_1 \eqrel s_1, \dots, t_k \eqrel s_k$, then $f(t_1, \dots, t_k) \eqrel f(s_1, \dots, s_k)$. \label{it:cong} %
		\end{enumerate}
	\end{proposition}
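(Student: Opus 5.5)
We prove the four properties of $\eqrel$ from the construction of $\schtree$, Proposition~\ref{prop:limit-sequent-partition} and Lemma~\ref{lem:interchange-equivalent-terms}. Throughout, $\untbranch = (n_i)_i$, and sequents only grow along the branch, so membership in $\Gamma_\omega$ or $\Delta_\omega$ is witnessed from some index on.

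\emph{Reflexivity.} By Proposition~\ref{prop:limit-sequent-partition}, ${t=t}$ lies in $\Gamma_\omega$ or in $\Delta_\omega$, so it suffices to exclude ${t=t} \in \Delta_\omega$. Suppose ${t=t} \in \Delta_i$. In the construction of $\deriv_{i+1}$, every bud whose right side contains some ${t=t}$ is closed by ($=$R) before any schedule element is processed. Hence the branch would end at a leaf, contradicting that $\untbranch$ is infinite. So ${t=t} \in \Gamma_\omega$.

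\emph{Symmetry, transitivity and congruence.} Each of these is an application of Lemma~\ref{lem:interchange-equivalent-terms} with a suitable formula $\form$ and fresh distinct $x,y \in \varset$.

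For symmetry, assume $t \eqrel s$ and take $\form \doteq (y = x)$. Then $\form[t/x,s/y] \doteq (s=t)$, which is not yet known to be in $\Gamma_\omega$, so this instance is the wrong way round. Instead take $\form \doteq (x = x)$ with only $x$ substituted, writing the instance as $\form[t/x, s/y]$ where $y$ does not occur. Reflexivity gives ${t=t} \in \Gamma_\omega$. A cleaner choice is $\form \doteq (x = t)$: then $\form[t/x,s/y] \doteq (t=t) \in \Gamma_\omega$, so the lemma gives $\form[s/x,t/y] \doteq (s=t) \in \Gamma_\omega$, i.e.\ $s \eqrel t$.

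For transitivity, assume $t \eqrel s$ and $s \eqrel u$. By symmetry $s \eqrel t$. Take $\form \doteq (y = u)$. Then $\form[s/x,t/y] \doteq (t=u)$, which is what we want, and $\form[t/x,s/y] \doteq (s=u) \in \Gamma_\omega$. Applying the lemma to the pair $(t,s)$ yields $\form[s/x,t/y] \doteq (t=u) \in \Gamma_\omega$.

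For congruence, replace one argument at a time and then chain the steps by transitivity. Let $r_j \doteq f(s_1,\dots,s_{j-1},t_j,\dots,t_k)$. To show $r_j \eqrel r_{j+1}$, take
\[
\form \doteq \bigl(f(s_1,\dots,s_{j-1},t_j,\dots,t_k) = f(s_1,\dots,s_{j-1},y,t_{j+1},\dots,t_k)\bigr).
\]
Then $\form[t_j/x, s_j/y] \doteq \bigl(r_j = f(s_1,\dots,s_{j-1},s_j,t_{j+1},\dots,t_k)\bigr) \doteq (r_j = r_{j+1})$, so the first choice lands on the target itself. So put the variable in the left side instead: let $\form \doteq (r_j = f(s_1,\dots,s_{j-1},x,t_{j+1},\dots,t_k))$. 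Then $\form[t_j/x,s_j/y] \doteq (r_j = r_j) \in \Gamma_\omega$ by reflexivity, and the lemma gives $\form[s_j/x,t_j/y] \doteq (r_j = r_{j+1}) \in \Gamma_\omega$. Transitivity then gives $r_1 \eqrel r_{k+1}$, i.e.\ $f(t_1,\dots,t_k) \eqrel f(s_1,\dots,s_k)$.

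The only delicate point is picking each $\form$ so that the known-true instance is $\form[t/x,s/y]$ and the desired one is $\form[s/x,t/y]$; this is the recurring pattern of placing the single substituted variable next to a fixed term. One must also check that $x,y \in \varset$ are fresh for all terms involved. If they are not, rename as in the footnote to the ($=$L) case of Definition~\ref{def:search-tree}; this is possible because $\varset$ is infinite.
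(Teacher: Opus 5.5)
Your proof is correct and follows the paper's route. Reflexivity comes from the fact that buds with ${t=t}$ on the right are closed by ($=$R), together with Proposition~\ref{prop:limit-sequent-partition}, and symmetry, transitivity and congruence are all instances of Lemma~\ref{lem:interchange-equivalent-terms}. The only differences are in the chosen instances, and your write-up should drop the abandoned attempts. The paper's symmetry step implicitly uses $\varphi \doteq (x=y)$, which needs no appeal to reflexivity. For congruence, the paper rewrites the right-hand side of $f(t_1,\dots,t_k)=f(t_1,\dots,t_k)$ one argument at a time, so it needs no transitivity chain.
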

	
	\begin{proof}
		Let $t,s,u, t_1, \dots, t_k, s_1, \dots, s_k \in \terms$ and $f$ a $k$-ary function symbol in $\vocab$.
		\begin{enumerate}
			\item By construction of $\schtree$, ${t=t}$ cannot be in $\Delta_\omega$, for otherwise, $\untbranch$ would be finite.
			Therefore, by Proposition \ref{prop:limit-sequent-partition}, ${t=t} \in \Gamma_\omega$, and by Definition \ref{def:eqrel}, $t \eqrel t$.
			\item Suppose that $t \eqrel s$.
			Then ${t=s} \in \Gamma_\omega$.
			By Lemma \ref{lem:interchange-equivalent-terms}, ${s=t} \in \Gamma_\omega$, and hence,
			$s \eqrel t$.
			\item Suppose that $t \eqrel s$ and $s \eqrel u$.
			Then ${t=s} \in \Gamma_\omega$. %
			By %
			Lemma \ref{lem:interchange-equivalent-terms}, ${t=u} \in \Gamma_\omega$.
			Thus, $t \eqrel u$.
			\item Suppose that $t_1 \eqrel s_1, \dots, t_k \eqrel s_k$. %
			By rule \ref{it:refl}, ${f(t_1, \dots, t_k) = f(t_1, \dots, t_k)} \in \Gamma_\omega$.
			By repeated application of Lemma \ref{lem:interchange-equivalent-terms}, ${f(t_1, \dots, t_k) = f(s_1, \dots, s_k)} \in \Gamma_\omega$.
			Hence, $f(t_1, \dots, t_k) \eqrel f(s_1, \dots, s_k)$. \qedhere
		\end{enumerate}
	\end{proof}
	
	By rules \ref{it:refl}-\ref{it:trans} of Proposition \ref{prop:equiv-relation}, $\eqrel$ is an equivalence relation on $\terms$.
	We denote the equivalence class of a term $t \in \terms$ under $\eqrel$ by $[t]$.
	We write $(t_1, \dots, t_k) \eqrel (s_1, \dots, s_k)$ for $t_1 \eqrel s_1, \dots, t_k \eqrel s_k$, 
	and $[(t_1, \dots, t_k)]$ for $([t_1], \dots, [t_k])$.
	
	\begin{definition}%
		\label{def:countermodel}
		We define $\cntmod$ as the structure with $\voc{\cntmod} = \vocab$, $\dom{\cntmod} = \{ [t] \mid t \in \terms \}$, 
		and such that:
		\begin{itemize}
			\item for every $k$-ary function symbol $f$ in $\vocab$: $f^{\cntmod}([t_1], \dots, [t_k]) = [f(t_1, \dots, t_k)]$ for all $t_1, \dots, t_k \in \terms$;\footnote{
				Note that for every function symbol $f$ in $\vocab$, $f^{\cntmod}$ is well-defined by rule \ref{it:cong} in Proposition \ref{prop:equiv-relation}.
						} 
			\item for every $k$-ary predicate symbol $P$ in $\vocab$: $P^{\cntmod} = \{ ([t_1], \dots, [t_k]) \mid P(t_1, \dots, t_k) \in \Gamma_\omega \}$.
		\end{itemize}
	\end{definition}
	
	In what follows, we will show that $\cntmod$ is a countermodel of $\limseq$.
	In Proposition \ref{prop:cnt-FO-forms}, we show that $\cntmod$ satisfies every formula in $\Gamma_\omega$ and no formula in $\Delta_\omega$, and in Proposition \ref{prop:cnt-models-defs}, we show that $\cntmod$ satisfies the definition $\defn$.
	
	\begin{proposition} \label{prop:t^I=[t]}
		$t^{\cntmod} = [t]$ for all $t \in \terms$.
	\end{proposition}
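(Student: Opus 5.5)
The plan is a structural induction on the term $t \in \terms$; this is the standard term-model argument, so only the inductive step needs care.

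In the base case, $t$ is an object symbol $c$ in $\vocab$, i.e., a $0$-ary function symbol. By Definition \ref{def:countermodel} with $k=0$, $c^{\cntmod} = [c]$, which is what we need.

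In the inductive step, $t \doteq f(t_1, \dots, t_k)$ with $f$ a $k$-ary function symbol in $\vocab$. The value of a term is computed compositionally, so
\[
t^{\cntmod} = f^{\cntmod}(t_1^{\cntmod}, \dots, t_k^{\cntmod}).
\]
The induction hypothesis gives $t_j^{\cntmod} = [t_j]$ for every $j$, so this equals $f^{\cntmod}([t_1], \dots, [t_k])$. By Definition \ref{def:countermodel}, that value is $[f(t_1, \dots, t_k)] = [t]$.

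The only point that needs attention is that $f^{\cntmod}$ is well-defined on equivalence classes. Its value on $([t_1], \dots, [t_k])$ must not depend on the chosen representatives. This is exactly rule \ref{it:cong} of Proposition \ref{prop:equiv-relation}, as already noted in the footnote to Definition \ref{def:countermodel}, so I expect no real obstacle. I would also note that every element of $\terms$ is built only from function symbols in $\vocab$, so the induction covers all of $\terms$.
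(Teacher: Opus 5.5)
Your proof is correct and essentially matches the paper's: a structural induction on $t$ that uses compositional evaluation of terms, the induction hypothesis, and the definition of $f^{\cntmod}$, with well-definedness coming from the congruence property. The only difference is that the paper has no separate base case, since object symbols are $0$-ary function symbols and so are covered by the general $k$-ary case.
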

	
	\begin{proof}
		We show that $t^{\cntmod} = [t]$ for all $t \in \terms$ by structural induction on $t$.
		Since terms are composed exclusively of function symbols, the only case is where $t$ is of the form $f(t_1, \dots, t_k)$ for a $k$-ary function symbol $f$ and terms $t_1, \dots, t_k$.\footnote{
			We do not need a separate case for object symbols, since we defined them as $0$-ary function symbols.
		}
		By induction hypothesis, $t_i^{\cntmod} = [t_i]$ for every $i \in \{ 1, \dots, k \}$.
		Using Definition \ref{def:countermodel}, we obtain that $f(t_1, \dots, t_k)^{\cntmod} = f^{\cntmod}(t_1^{\cntmod}, \dots, t_k^{\cntmod}) = f^{\cntmod}([t_1], \dots, [t_k]) = [f(t_1, \dots, t_k)]$, which finishes the proof.
	\end{proof}
	
	\begin{proposition} \label{prop:cnt-FO-forms}
		Let $\form$ be an \fo-formula over $\vocab$.
		If $\form \in \Gamma_\omega$, then $\cntmod \models \form$, and if $\form \in \Delta_\omega$, then $\cntmod \not\models \form$.
	\end{proposition}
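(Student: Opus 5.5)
We prove both implications simultaneously by structural induction on $\form$, treating $\Rightarrow$, $\lor$, $\exists$ alongside the primitive connectives. The base cases are atoms. For an atom $P(t_1,\dots,t_k)$ with $P$ a predicate symbol in $\vocab$, if $P(\bar t)\in\Gamma_\omega$ then $[\bar t]\in P^{\cntmod}$ by Definition \ref{def:countermodel}, and Proposition \ref{prop:t^I=[t]} gives $\bar t^{\cntmod}=[\bar t]$, so $\cntmod\models P(\bar t)$.

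If $P(\bar t)\in\Delta_\omega$ and nevertheless $\cntmod\models P(\bar t)$, there is $\bar s$ with $P(\bar s)\in\Gamma_\omega$ and $\bar s\eqrel\bar t$. Repeated use of Lemma \ref{lem:interchange-equivalent-terms}, replacing one argument at a time, gives $P(\bar t)\in\Gamma_\omega$. This contradicts Proposition \ref{prop:limit-sequent-partition}. Equalities $t=s$ are handled the same way: membership in $\Gamma_\omega$ gives $t\eqrel s$, hence $[t]=[s]$. Membership in $\Delta_\omega$ excludes $t\eqrel s$ by the partition property.

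For the compound cases we use the fact that on $\untbranch$ every formula in $\Gamma_i$ or $\Delta_i$ persists to all later sequents, because labels grow along edges. Each schedule element recurs infinitely often, so each relevant introduction rule is eventually applied on $\untbranch$.

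\begin{itemize}
\item If $\lnot\form\in\Gamma_\omega$, a later ($\lnot$L) step puts $\form\in\Delta_\omega$, and the induction hypothesis gives $\cntmod\not\models\form$. The right-hand case is symmetric.
\item For $\form\land\formtwo\in\Gamma_\omega$, ($\land$L) puts both conjuncts in $\Gamma_\omega$.
\item For $\form\land\formtwo\in\Delta_\omega$, the branch passes through one premise of ($\land$R), so some conjunct lies in $\Delta_\omega$. The $\lor$ and $\Rightarrow$ cases are analogous.
\end{itemize}

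The quantifier cases need more care, and we expect them to be the main obstacle. Suppose $\forall x:\form\in\Gamma_\omega$. Every element of $\dom{\cntmod}$ has the form $[t]$ with $t\in\terms$. The schedule element $\langle\forall x:\form,t\rangle$ recurs, so $\form[t/x]\in\Gamma_\omega$ for every $t\in\terms$. The induction hypothesis applies to $\form[t/x]$ as a smaller formula, where we induct on the number of connectives rather than on syntactic subformulae. It gives $\cntmod\models\form[t/x]$. A substitution lemma together with Proposition \ref{prop:t^I=[t]} yields $\cntmod[x:[t]]\models\form$, hence $\cntmod\models\forall x:\form$.

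Suppose instead $\forall x:\form\in\Delta_\omega$. Some ($\forall$R) step on $\untbranch$ puts $\form[z/x]\in\Delta_\omega$ for a fresh $z\in\varset$. We need $z\in\vocab$, so that $z\in\terms$, $[z]$ is a domain element, and $\form[z/x]$ is over $\vocab$. To ensure this we take $\vocab$ to contain $\varset$ as object symbols, which is consistent with how $\terms$ and schedule elements are set up. The induction hypothesis then gives a counterexample witness $[z]$. The $\exists$ cases are dual.

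Atomic defined predicates of $\defn$ fall under the base case, since $P^{\cntmod}$ is defined uniformly. Satisfaction of $\defn$ itself is deferred to Proposition \ref{prop:cnt-models-defs}. Hence this proposition needs no use of untraceability beyond the infiniteness of $\untbranch$, which is what makes Proposition \ref{prop:limit-sequent-partition} and the persistence argument above available.
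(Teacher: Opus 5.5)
Your proof is correct and follows the paper's argument essentially step for step. Both are a simultaneous induction on $\form$ using Proposition~\ref{prop:limit-sequent-partition}, Lemma~\ref{lem:interchange-equivalent-terms}, Proposition~\ref{prop:t^I=[t]}, persistence of formulae along $\untbranch$, and the infinite recurrence of schedule elements. Your two refinements make explicit what the paper's proof leaves implicit: you induct on the number of connectives and quantifiers so that the hypothesis legitimately applies to the instances $\formtwo[t/x]$ and $\formtwo[z/x]$, and you ensure the fresh symbols from $\varset$ lie in $\vocab$ so that $[z]$ is a domain element.
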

	
	\begin{proof}
		We prove this property by structural induction on $\form$.
		
		\textbf{Case}: $\form$ is of the form $t=s$.
		Suppose that $t=s \in \Gamma_\omega$. 
		Then $t \eqrel s$ by Definition \ref{def:eqrel}. 
		Consequently, $t^{\cntmod} = [t] = [s] = s^{\cntmod}$, and hence $\cntmod \models t=s$.
		
		Now suppose that $t=s \in \Delta_\omega$. 
		Then $t=s \notin \Gamma_\omega$ by Proposition \ref{prop:limit-sequent-partition}, and therefore, $t \not\eqrel s$ by Definition \ref{def:eqrel}. 
		Consequently, $t^{\cntmod} = [t] \neq [s] = s^{\cntmod}$, and hence $\cntmod \not\models t=s$.
		
		\textbf{Case}: $\form$ is of the form $P(\bar{t})$.
		Suppose that $P(\bar{t}) \in \Gamma_\omega$. 
		Then $[\bar{t}] \in P^{\cntmod}$ by Definition \ref{def:countermodel}, and hence, $\cntmod \models P(\bar{t})$.
		
		Now suppose that $P(\bar{t}) \in \Delta_\omega$, and suppose for contradiction that $\cntmod \models P(\bar{t})$.
		Then $[\bar{t}] \in P^{\cntmod}$ by Definition \ref{def:countermodel}.
		This means that there exists an $\ar{P}$-tuple of terms $\bar{s}$ such that $\bar{s} \eqrel \bar{t}$ and $P(\bar{s}) \in \Gamma_\omega$.
		By repeated application of Lemma \ref{lem:interchange-equivalent-terms}, $P(\bar{t}) \in \Gamma_\omega$.
		However, this contradicts Proposition \ref{prop:limit-sequent-partition}.
		
		\textbf{Case}: $\form$ is of the form $\lnot \formtwo$.
		Suppose that $\lnot \formtwo \in \Gamma_\omega$. 
		By construction of $\schtree$ and definition of the limit sequent $\limseq$, it follows that $\formtwo \in \Delta_\omega$.
		Indeed, as $\lnot \formtwo \in \Gamma_\omega$, there exists a $j$ such that $\lnot \formtwo \in \Gamma_j$.
		Since $\lnot \formtwo$ occurs infinitely often on the schedule $\sched$, it occurs as the $i$-th schedule element $\xi_i$ for some $i \geq j$.
		By construction of $\schtree$, %
		the set $\Delta_{i+1}$ contains $\formtwo$, and therefore, we find that indeed $\formtwo \in \Delta_\omega$.
		By induction hypothesis, it follows that $\cntmod \not\models \formtwo$.
		Consequently, $\cntmod \models \lnot \formtwo$.
		
		Now suppose that $\lnot \formtwo \in \Delta_\omega$.
		Then by a similar reasoning as before, $\formtwo \in \Gamma_\omega$. %
		By induction hypothesis, it follows that $\cntmod \models \formtwo$, and hence, $\cntmod \not\models \lnot \formtwo$.
		
		\textbf{Case}: $\form$ is of the form $\formtwo \land \formthree$.
		Suppose that $\formtwo \land \formthree \in \Gamma_\omega$. 
		Then $\formtwo,\formthree \in \Gamma_\omega$ by construction of $\schtree$. 
		By induction hypothesis, $\cntmod \models \formtwo$ and $\cntmod \models \formthree$.
		Hence, $\cntmod \models \formtwo \land \formthree$.
		
		Now suppose that $\formtwo \land \formthree \in \Delta_\omega$.
		Then $\formtwo \in \Delta_\omega$ or $\formthree \in \Delta_\omega$ by construction of $\schtree$. 
		By induction hypothesis, $\cntmod \not\models \formtwo$ or $\cntmod \not\models \formthree$.
		Hence, $\cntmod \not\models \formtwo \land \formthree$.
		
		\textbf{Cases}: $\form$ is of the form $\formtwo \lor \formthree$ or $\formtwo \Rightarrow \formthree$. %
		These cases are similar to the previous case.
		
		\textbf{Case}: $\form$ is of the form $\forall x: \formtwo$.
		Suppose that ${\forall x: \formtwo} \in \Gamma_\omega$, and pick an arbitrary term $t \in \terms$.
		By construction of $\schtree$, we have that $\formtwo[t/x] \in \Gamma_\omega$.
		Indeed, since $\forall x: \formtwo \in \Gamma_\omega$, there exists a $j$ %
		such that $\forall x: \formtwo \in \Gamma_j$. Since $\langle \forall x: \formtwo, t \rangle$ occurs infinitely often on the schedule $\sched$, it occurs as the $i$-th schedule element of $\sched$ for some $i \geq j$.
		By construction of $\schtree$, %
		the set $\Gamma_{i+1}$ contains $\formtwo[t/x]$.
		Thus, we indeed find that $\formtwo[t/x] \in \Gamma_\omega$.
		By induction hypothesis, $\cntmod \models \formtwo[t/x]$. %
		Since $t^{\cntmod} = [t]$ by Proposition \ref{prop:t^I=[t]}, this implies that $\cntmod[x:[t]] \models \formtwo$.
		Since we picked $t \in \terms$ arbitrarily, and since any domain element of $\cntmod$ is of the form $[t]$ for some term $t$, this shows that $\cntmod \models {\forall x: \formtwo}$.
		
		Now suppose that ${\forall x: \formtwo} \in \Delta_\omega$.
		By construction of $\schtree$, the set $\Delta_\omega$ contains $\formtwo[z/x]$ for some object symbol $z \in \varset$. 
		By induction hypothesis, $\cntmod \not\models \formtwo[z/x]$.
		Since $z^{\cntmod} = [z]$, %
		this implies that $\cntmod[x:[z]] \not\models \formtwo$.
		Consequently, $\cntmod \not\models {\forall x: \formtwo}$.
		
		\textbf{Case}: $\form$ is of the form $\exists x: \formtwo$. 
		This case is similar to the previous case.
	\end{proof}
	
	\begin{proposition} \label{prop:cnt-models-defs}
		$\cntmod \modelswf \defn$.
	\end{proposition}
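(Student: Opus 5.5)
We will use the justification characterization of the well-founded semantics, Proposition \ref{prop:wf-sem-just}. Let $\cont = \cntmod|_{\pars{\defn}}$ and $D = \dom{\cntmod}$. For each defined atom $P([\bar{t}])$ we must produce a good justification of $(P([\bar t]),\true)$ when $P(\bar t)\in\Gamma_\omega$, and of $(P([\bar t]),\false)$ when $P(\bar t)\in\Delta_\omega$; by Proposition \ref{prop:limit-sequent-partition} exactly one case applies. Since every domain element is $[t]$ for some term $t$, we identify the constant $c_{[t]}$ with the term $t$, using Lemma \ref{lem:interchange-equivalent-terms} to move between representatives. The plan is to build one justification $J$ whose nodes are facts $(\form, v)$ with $\form \in \Gamma_\omega$ when $v=\true$ and $\form\in\Delta_\omega$ when $v=\false$, and whose edges follow the untraceable branch $\untbranch$.

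\textbf{Local construction.} A node $(\form,\true)$ with $\form\in\Gamma_\omega$ or $(\form,\false)$ with $\form\in\Delta_\omega$ receives children as follows. Non-defined atoms are leaves; they are correctly labelled by Proposition \ref{prop:cnt-FO-forms}. Connectives and quantifiers are handled as in the proof of Proposition \ref{prop:cnt-FO-forms}, using the schedule. For example, $(\forall x\colon\formtwo,\false)$ gets the child $(\formtwo[z/x],\false)$ for the eigenvariable $z$ chosen on $\untbranch$, and $(\forall x\colon\formtwo,\true)$ gets the children $(\formtwo[t/x],\true)$ for all $t$. For $(P(\bar t),\false)$ with $P(\bar t)\in\Delta_\omega$, every rule body $\form[\bar s/\bar x]$ deriving $P(\bar t)$ must become a $\false$-child, which requires $\form[\bar s/\bar x]\in\Delta_\omega$. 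This follows because the schedule element $\langle P(\bar t'[\bar s/\bar x]),\form[\bar s/\bar x]\rangle$ fires (def R) on $\untbranch$, after adjusting to $\bar t'$ with $\bar t'[\bar s/\bar x]\eqrel\bar t$ via Lemma \ref{lem:interchange-equivalent-terms}. For $(P(\bar t),\true)$ with $P(\bar t)\in\Gamma_\omega$, the corresponding (case) step on $\untbranch$ selects some premise, i.e.\ a rule and fresh $\bar z$ with $\bar t=\bar v[\bar z/\bar y]$ and $\formtwo[\bar z/\bar y]$ in $\Gamma_\omega$. We take $(\formtwo[\bar z/\bar y],\true)$ as the unique child. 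The equalities in $\Gamma_\omega$ together with Proposition \ref{prop:cnt-FO-forms} show that this body derives $P([\bar t])$. The result is locally complete.

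\textbf{Goodness, the main obstacle.} We must show that $J$ has no infinite branch that is positive or mixed. Every infinite branch of $J$ shifts from a $\true$-fact to a $\true$-fact along the positive defined atoms, and we must rule out infinitely many such (case) unfoldings. The idea is that each branch $(\form_k,v_k)_k$ of $J$ corresponds to an infinitely long trace along a tail of $\untbranch$. Each edge of $J$ was read off from a rule application on $\untbranch$ at a later index in which $\form_k$ is active and $\form_{k+1}$ auxiliary. Between these indices the formula is carried unchanged, since $\Gamma_i,\Delta_i$ only grow. Equality rewrites are absorbed into trace steps via ($=$L). A positive defined-atom step in $J$ corresponds precisely to a (case) application, i.e.\ a progression point. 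So an infinite branch with infinitely many positive atomic facts would yield an infinitely progressing trace along $\untbranch$, contradicting untraceability. Hence all infinite branches eventually contain only negative atomic facts and are negative.

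The delicate point is to make these schedule indices strictly increasing along a branch. This requires choosing, for each node, a schedule occurrence after the index at which its formula first appears. The choice is possible because each schedule element recurs infinitely often. The ($=$L) detours needed to match representatives must also be valid trace steps. Once this bookkeeping is done, Proposition \ref{prop:wf-sem-just} yields $\cntmod\modelswf\defn$.
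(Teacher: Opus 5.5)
Your route is the paper's. You apply Proposition~\ref{prop:wf-sem-just}, read the justification steps off the untraceable branch $\untbranch$, attach to each branch of the justification a trace along a tail of $\untbranch$, and observe that positive defined-atom facts are expanded via (case). A positive or mixed branch would then give an infinitely progressing trace. Your use of Lemma~\ref{lem:interchange-equivalent-terms} and ($=$L) trace steps to switch representatives is more careful than the paper, which simply identifies $t$ with $c_{[t]}$. The gap is in the bookkeeping you yourself flag as delicate: as you set it up, it does not work.

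\textbf{Shared nodes do not work.} A single justification whose nodes are the facts, each with one fixed set of children, cannot support the trace correspondence. Three expansions are choices made at one particular step of $\untbranch$: $(P(\bar t),\true)$ by (case), $(\formtwo\land\formthree,\false)$ by ($\land$R), and $(\forall x\colon\formtwo,\false)$ by ($\forall$R). Later occurrences of the same schedule element may select another rule, the other conjunct, or a fresh eigenvariable. Suppose a branch of $J$ re-enters such a node after the step its edge was read from (e.g.\ around a cycle). Then no trace along $\untbranch$ follows that branch, and untraceability yields no contradiction.

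\textbf{Your anchor fails too.} Anchoring the choice at ``the index at which its formula first appears'' fails even in a tree, because a child's formula may already have been present long before its parent's step. Concretely, take $\defn=\{P\rul P,\ P\rul\top\}$, and a branch $\untbranch$ whose first (case) step on $P$ selects the body $P$ while all later ones select $\top$. This branch is untraceable. Yet your anchor permits reading every $(P,\true)$-node off that first step, which gives the positive branch $(P,\true)\to(P,\true)\to\cdots$.

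\textbf{The paper's fix.} The choices must be path-dependent. For each root fact, build a tree as the limit of finite stages $J_i$. For every branch $\branch$, keep a segment $(n_{s_\branch},\dots,n_{e_\branch})$ of $\untbranch$ with a trace ending in the branch's current formula; that formula lies in $\Gamma_{e_\branch}$ or $\Delta_{e_\branch}$ according to its truth value. Expand a bud using the least $r\ge e_\branch$ at which the relevant schedule element occurs: hold the trace constant up to $r$, then pass to the auxiliary formula. Successive segments extend one another, and their limits are exactly the traces your goodness argument needs.
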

	
	\begin{proof}
		Let $\vocab \coloneq \sym{\defn}$, $D \coloneq \dom{\cntmod}$ and $\cont \coloneq \cntmod|_{\pars{\defn}}$, and let  $\consts{D}$ be as in Definition \ref{def:fact}.
		Using Proposition \ref{prop:wf-sem-just}, we show that $\cntmod \modelswf \defn$ by constructing a good justification w.r.t.~$\defn$ and $\cont$ of every atomic fact $(P(\bar{a}), v)$ such that $P(\bar{a})$ is a defined atom of $\defn$ and $v = \true$ iff $\bar{a} \in P^{\cntmod}$.
		To accomplish this, we more generally construct a good justification w.r.t.~$\defn$ and $\cont$ of any fact $(\form, v)$ w.r.t.~$\vocab$ and $D$ such that %
		$v = \true$ iff $\cntmod^{\consts{D}} \models \form$.
		Here $\consts{D}$ is a set of distinguished constant symbols $c_a$ for every $a \in D$, as defined in Definition \ref{def:fact}, and $\cntmod^{\consts{D}}$ is the expansion of $\cntmod$ that interprets every object symbol $c_a \in \consts{D}$ as $a$, defined similarly to $\cont^{\consts{D}}$ in Definition \ref{def:just-derives}.
		
		Fix a fact $(\form_0, v_0)$ of the above form.
		We construct a good justification of $(\form_0, v_0)$ w.r.t.~$\defn$ and $\cont$ as the limit $J_\omega$ of a sequence of justifications $(J_i)_{i}$ of $(\form_0, v_0)$ w.r.t.~$\defn$ and $\cont$ such that for every $i$, $J_{i+1}$ extends $J_i$.\footnote{
			We say that a justification $J = (N, E, r, \labfun)$ \emph{extends} a justification $J' = (N', E', r', \labfun')$ if $N' \subseteq N$, $E' = E \cap (N' \times N')$, $r' = r$ and $\labfun' = \labfun|_{N'}$.
		}
		By abuse of notation, we will often identify \fo-formulae over $\vocab$ with \fo-formulae over $\vocab^{\consts{D}}$ by identifying terms $t$ over $\vocab$ with object symbols $c_{[t]}$ in $\vocab^{\consts{D}}$.
		To guarantee that $J_\omega$ is a good justification, we also construct, for any branch $\branch$ in any of the justifications $J_i$, a corresponding finite subpath $\subpthcor{\branch}$ of the untraceable branch $\untbranch = (n_i)_i$, as well as a trace $\tsubpthcor{\branch}$ along $\subpthcor{\branch}$.
		We construct these subject to the following invariant:
		\begin{enumerate}
			\item If a branch $\branch$ ends with a fact $(\form, \true)$, then $\trace^\branch_{e_\branch} = \form$ and $\form \in \Gamma_{e_\branch}$.
			\item If a branch $\branch$ ends with a fact $(\form, \false)$, then $\trace^\branch_{e_\branch} = \form$ and $\form \in \Delta_{e_\branch}$. \label{it:const-neg}
		\end{enumerate}
		We will use these paths and traces to construct, for every infinite branch $\branch$ in $J_\omega$, a corresponding trace $(\trace_i)_i$ along a tail of $\untbranch$, and use the fact that $(\trace_i)_i$ has only finitely many progression points to argue that $\branch$ is negative.
		
		The first justification $J_0$ of the sequence $(J_i)_{i}$ consists of a single node $n$ with label $(\form_0, v_0)$.
		This justification has a single branch $\branch$ consisting of the single node $n$, for which we need to construct a subpath $\subpthcor{\branch}$ of $\untbranch$ and a trace $\tsubpthcor{\branch}$ along $\subpthcor{\branch}$.
		Since, by assumption, $v_0 = \true$ iff $\cntmod^{\consts{D}} \models \form_0$, it follows that if $v_0 = \true$, then $\form_0 \in \Gamma_\omega$, and if $v_0 = \false$, then $\form_0 \in \Delta_\omega$ (using Proposition \ref{prop:cnt-FO-forms}).
		Let $j$ be such that $\form_0 \in \Gamma_j \cup \Delta_j$.
		We let $\subpthcor{\branch}$ be the single-noded subpath $(n_j)$ of $\untbranch$ (stated differently, we let $s_\branch = e_\branch = j$), 
		and we let $\subpthcor{\pthjust}$ be the trace $(\form_0)$ along $(n_j)$.
		These choices entail that the invariant holds for $J_0$.
		
		Suppose we have constructed the justification $J_i$ for a certain $i$, as well as subpaths $\subpthcor{\branch}$ of $\untbranch$ and traces $\tsubpthcor{\branch}$ along $\subpthcor{\branch}$ corresponding to every branch $\branch$ in $J_i$.
		Then we construct $J_{i+1}$ by ``expanding'' all buds of $J_i$, and extend the paths $\subpthcor{\branch}$ and traces $\tsubpthcor{\branch}$ accordingly.
		More concretely, for every bud $n$ in $J_i$, we do the following. Write $\labfun(n) = (\form, v)$. 
		Let $\branch$ be the unique branch in $J_i$ that ends with $n$. 
		Let $\subpth$ be the corresponding subpath of $\untbranch$ and $\tsubpth$ the trace along $\subpth$ corresponding to $\branch$.
		We proceed based on the form of $(\form, v)$:
		\begin{itemize}
			\item Suppose that $\form$ is an atom and $v = \true$.
			Since $n$ is a bud, $\form$ must be a defined atom $P(\bar{t})$ of $\defn$ (by Definition \ref{def:justification}). 
			By the invariant on $J_i$, we have that $P(\bar{t}) \in \Gamma_{e}$.
			Let $r \geq e$ be minimal such that the $r$-th schedule element $\xi_r$ is of the form $\langle P(\bar{t}), \formtwo \rangle$.
			Then, by construction of $\schtree$, the $r$-th transition in $\untbranch$ corresponds to an instance of (case) with active formula $P(\bar{t})$.
			The $(r+1)$-th node $n_{r+1}$ in $\untbranch$ is labeled with a premise $\defn, \Gamma_r, \formthree \vdash \Delta_r$ of this instance, where $\formthree$ derives $P(\bar{t})$ w.r.t.~$\defn$ and $\cont$.
			We add a child $m$ to $n$ with label $(\formthree, \true)$. 
			Furthermore, we extend the subpath $\subpth$ of $\untbranch$ by concatenating it with the subpath $\concsubpth$ of $\untbranch$, and we extend the corresponding trace $\tsubpth$ by concatenating it with $(\trace_{e}, \dots, \trace_{e}, \formthree)$.\footnote{
				Since we took $r \geq e$ minimal such that the $r$-th schedule element $\xi_r$ is of the form $\langle P(\bar{t}), \formtwo \rangle$, $(\trace_{e}, \dots, \trace_{e}, \formthree)$ is a trace along $\tsubpth$.
			}
			By construction, this operation preserves the invariant.
			
			\item Suppose that $\form$ is an atom and $v = \false$.
			Again, since $n$ is a bud, $\form$ must be a defined atom $P(\bar{t})$ of $\defn$. 
			For every formula $\formtwo$ that derives $P(\bar{t})$ w.r.t.~$\defn$ and $\cont$, we add a child $m$ with label $(\formtwo, \false)$ to $n$.
			By the invariant on $J_i$, $P(\bar{t}) \in \Delta_{e}$.
			Fix a formula $\formtwo$ that derives $P(\bar{t})$ w.r.t.~$\defn$ and $\cont$, and let $m$ be the corresponding child of $n$.
			Let $\branch'$ be the unique branch in $J_{i+1}$ that ends with $m$.
			We obtain a subpath of $\untbranch$ and a trace along this subpath corresponding to $\branch'$ by extending the path $\subpth$ and the trace $\tsubpth$, respectively.
			Let $r \geq e$ be minimal such that the $r$-th schedule element $\xi_r$ is $\langle P(\bar{t}), \formtwo \rangle$.
			Then, by construction of $\schtree$, the $r$-th transition in $\untbranch$ corresponds to an instance of (def R) with active formula $P(\bar{t})$ and auxiliary formula $\formtwo$.
			This entails that $\formtwo \in \Delta_{r+1}$.
			We extend $\subpth$ by concatenating it with $\concsubpth$ and extend $\tsubpth$ by concatenating it with $(P(\bar{t}), \dots, P(\bar{t}), \formtwo)$.
			By construction, this operation preserves the invariant.
			
			\item Suppose that $\form$ is of the form $\lnot \formtwo$ and $v = \true$.
			Then we add a child $m$ to $n$ with label $(\formtwo, \false)$.
			By the invariant on $J_i$, we have that $\lnot \formtwo \in \Gamma_{e}$.
			Let $\branch'$ be the unique branch in $J_{i+1}$ that ends with $m$.
			We obtain a subpath of $\untbranch$ and a trace along this subpath corresponding to $\branch'$ by extending the path $\subpth$ and the trace $\tsubpth$, respectively.
			Let $r \geq e$ be minimal such that $\lnot \formtwo$ occurs as the $r$-th schedule element $\xi_{r}$.
			Then the $r$-th transition in $\untbranch$ corresponds to an instance of ($\lnot$L) with active formula $\lnot \formtwo$ and auxiliary formula $\formtwo$.
			Hence, $\formtwo \in \Delta_{r+1}$.
			We extend $\subpth$ by concatenating it with $\concsubpth$ and extend $\tsubpth$ by concatenating it with $(\lnot \formtwo, \dots, \lnot \formtwo, \formtwo)$.
			By construction, this operation preserves the invariant. 
			
			\item Suppose that $\form$ is of the form $\lnot \formtwo$ and $v = \false$.
			Then we add a child $m$ to $n$ with label $(\formtwo, \true)$.
			By the invariant on $J_i$, we have that $\lnot \formtwo \in \Delta_{e}$.
			Let $\branch'$ be the unique branch in $J_{i+1}$ that ends with $m$.
			We obtain a subpath of $\untbranch$ and a trace along this subpath corresponding to $\branch'$ by extending the path $\subpth$ and the trace $\tsubpth$, respectively.
			Let $r \geq e$ be minimal such that $\lnot \formtwo$ occurs as the $r$-th schedule element $\xi_{r}$.
			Then the $r$-th transition in $\untbranch$ corresponds to an instance of ($\lnot$R) with active formula $\lnot \formtwo$ and auxiliary formula $\formtwo$.
			Hence, $\formtwo \in \Gamma_{r+1}$.
			We extend $\subpth$ by concatenating it with $\concsubpth$ and extend $\tsubpth$ by concatenating it with $(\lnot \formtwo, \dots, \lnot \formtwo, \formtwo)$.
			By construction, this operation preserves the invariant. 
	
			\item Suppose that $\form$ is of the form $\formtwo \land \formthree$ and $v = \true$. 
			Then we add two children $m_1$ and $m_2$ to $n$ with the respective labels $(\formtwo, \true)$ and $(\formthree, \true)$.
			By the invariant on $J_i$, we have that $\formtwo \land \formthree \in \Gamma_{e}$.
			Let $\branch_1$ be the unique branch in $J_{i+1}$ that ends with $m_1$, and let $\branch_2$ the unique branch in $J_{i+1}$ that ends with $m_2$.
			We obtain subpaths of $\untbranch$ and traces along these subpaths corresponding to $\branch_1$ and $\branch_2$ by extending the path $\subpth$ and the trace $\tsubpth$, respectively.
			Let $r \geq e$ be minimal such that $\formtwo \land \formthree$ occurs as the $r$-th schedule element $\xi_{r}$.
			Then the $r$-th transition in $\untbranch$ corresponds to an instance of ($\land$L) with active formula $\formtwo \land \formthree$ and auxiliary formulae $\formtwo$ and $\formthree$.
			Hence, $\formtwo \in \Gamma_{r+1}$ and $\formthree \in \Gamma_{r+1}$.
			For $\branch_1$, we extend $\subpth$ by concatenating it with $\concsubpth$ and extend $\tsubpth$ by concatenating it with $(\formtwo \land \formthree, \dots, \formtwo \land \formthree, \formtwo)$.
			Similarly, for $\branch_2$, we extend $\subpth$ by concatenating it with $\concsubpth$ and extend $\tsubpth$ by concatenating it with $(\formtwo \land \formthree, \dots, \formtwo \land \formthree, \formthree)$.
			By construction, this operation preserves the invariant. 
			
			\item Suppose that $\form$ is of the form $\formtwo \land \formthree$ and $v = \false$.
			By the invariant on $J_i$, we have that $\formtwo \land \formthree \in \Delta_{e}$.
			Let $r \geq e$ be minimal such that $\formtwo \land \formthree$ occurs as the $r$-th schedule element $\xi_{r}$ in $\sched$.
			Then the $r$-th transition in $\untbranch$ corresponds to an instance of ($\land$R) with active formula $\formtwo \land \formthree$.
			Then $\formtwo \in \Delta_{r+1}$ or $\formthree \in \Delta_{r+1}$.
			In the first case, we add a child $m$ to $n$ with label $(\formtwo, \false)$ and in the second case, we add a child $m$ to $n$ with label $(\formthree, \false)$.
			Let $\branch'$ be the unique branch in $J_{i+1}$ that ends with $m$.
			We obtain a subpath of $\untbranch$ and a trace along this subpath corresponding to $\branch'$ by extending the path $\subpth$ and the trace $\tsubpth$, respectively.
			We do so by concatenating $\subpth$ with $\concsubpth$ and $\tsubpth$ with $(\formtwo \land \formthree, \dots, \formtwo \land \formthree, \formtwo)$ or $(\formtwo \land \formthree, \dots, \formtwo \land \formthree, \formthree)$, depending on the case.
			By construction, this operation preserves the invariant. 
			
			\item Suppose that $\form$ is of the form $\forall x: \formtwo$ and $v = \true$.
			For every $[t] \in \dom{\cntmod}$, we add a child $m$ to $n$ with label $(\formtwo[t/x], \true)$.
			By the invariant on $J_i$, we have that $\forall x: \formtwo \in \Gamma_{e}$.
			Fix a $[t] \in \dom{\cntmod}$, and let $m$ be the child of $n$ with label $\formtwo[t/x]$.
			Let $\branch'$ be the unique branch in $J_{i+1}$ that ends with $m$.
			We obtain a subpath of $\untbranch$ and a trace along this subpath corresponding to $\branch'$ by extending the path $\subpth$ and the trace $\tsubpth$, respectively.
			Let $r \geq e$ be minimal such that $\langle \forall x: \formtwo, t \rangle$ occurs as the $r$-th schedule element $\xi_{r}$ in $\sched$.
			Then the $r$-th transition in $\untbranch$ corresponds to an instance of ($\forall$L) with active formula $\forall x: \formtwo$ and auxiliary formula $\formtwo[t/x]$.
			It follows that $\formtwo[t/x] \in \Gamma_{r+1}$.
			We extend $\subpth$ by concatenating it with $\concsubpth$ and extend $\tsubpth$ by concatenating it with $({\forall x: \formtwo}, \dots,{ \forall x: \formtwo}, \formtwo[t/x])$.
			By construction, this operation preserves the invariant. 
			
			\item Suppose that $\form$ is of the form $\forall x: \formtwo$ and $v = \false$.
			By the invariant on $J_i$, we have that $\forall x: \formtwo \in \Delta_{e}$.
			Let $r \geq e$ be minimal such that the $r$-th schedule element $\xi_{r}$ in $\sched$ is of the form $\langle \forall x: \formtwo, t \rangle$.
			Then the $r$-th transition in $\untbranch$ corresponds to an instance of ($\forall$R) with active formula $\forall x: \formtwo$ and auxiliary formula of the form $\formtwo[z/x]$.
			It follows that $\formtwo[z/x] \in \Delta_{r+1}$.
			We add a child $m$ to $n$ with label $(\formtwo[z/x], \false)$.
			Let $\branch'$ be the unique branch in $J_{i+1}$ that ends with $m$.
			We obtain a subpath of $\untbranch$ and a trace along this subpath corresponding to $\branch'$ by extending the path $\subpth$ and the trace $\tsubpth$, respectively.
			We do so by concatenating $\subpth$ with $\concsubpth$ and $\tsubpth$ with $({\forall x: \formtwo}, \dots, {\forall x: \formtwo}, \formtwo[z/x])$.
			By construction, this operation preserves the invariant.
			
			\item (The cases for $\lor$, $\Rightarrow$ %
			and $\exists$ can be derived from the cases above.)
		\end{itemize}
		
		Let $J_\omega$ be the limit of the sequence $(J_i)_i$.\footnote{
			As with the construction of the search tree $\schtree$ in Definition \ref{def:search-tree}, the existence of the limit $J_\omega$ follows from a generalization of the Knaster-Tarski fixpoint theorem \cite{au/Markowsky76}, since the extension relation defines a chain-complete partial order on the set of justifications.
		}
		Then $J_\omega$ is a justification of $(\form_0, v_0)$ w.r.t.~$\defn$ and $\cont$.
		Note that the atomic facts $(\form, v)$ labeled by leaves in $J_\omega$ have the property that $\cont^{\consts{D}} \models \form$ iff $v = \true$, which follows from the %
		invariant on the justifications $J_i$
		and Proposition \ref{prop:cnt-FO-forms}.
		By construction, $J_\omega$ has no buds.
		Hence, it is a locally complete justification.
		
		It remains to argue that $J_\omega$ is good.
		Let $\branch$ be an infinite branch in $J_\omega$, and denote by $\branch_j$ the subpath of $\branch$ in $J_\omega$ consisting of its first $j$ elements.
		We associate with $\branch$ an (infinite) tail $\pth$ of $\untbranch$ and a trace $(\trace_i)_i$ along $\pth$ obtained as the limits of the paths %
		$\subpthcor{\branch_j}$ and the traces $\tsubpthcor{\branch_j}$, respectively, for $j$ going to infinity.\footnote{
			Note that we constructed the subpaths %
			and traces 
			corresponding to the branches in $J_i$
			in such a way that for every $j$, $\subpthcor{\branch_{j+1}}$ extends $\subpthcor{\branch_{j}}$ and $\tsubpthcor{\branch_{j+1}}$ extends $\tsubpthcor{\branch_{j}}$.
			Hence, the existence of these limits follows from a generalization of the Knaster-Tarski fixpoint theorem \cite{au/Markowsky76}, since the extension relation defines a chain-complete partial order on the set of sequences.
		}
		Suppose for contradiction that $\branch$ is positive or mixed.
		Then $\branch$ contains infinitely many positive atomic facts $(P(\bar{t}), \true)$ with $P \in \defp{\defn}$.
		By construction, the trace $(\trace_i)_i$ %
		corresponding to $\branch$ has infinitely many progression points.
		However, this contradicts the defining property of $\untbranch$.
	\end{proof}
	
	We are now ready to prove Theorem \ref{thm:completeness}.
	
	\begin{proof}[Proof of Theorem \ref{thm:completeness}]
		Suppose that $\seqshort$ is a valid \foid-sequent.
		Let $\schtree$ be a search tree for $\seqshort$ (as defined in Definition \ref{def:search-tree}).
		If $\schtree$ had an untraceable branch, then there would exist a structure $\cntmod$ (as defined in Definition \ref{def:countermodel}) that is a countermodel of $\seqshort$ (as shown in Propositions \ref{prop:cnt-FO-forms} and \ref{prop:cnt-models-defs}).
		However, this would contradict the validity of $\seqshort$.
		Therefore, every infinite branch in $\schtree$ admits %
		an infinitely progressing trace, 
		which means that $\schtree$ is an \scinf-proof of $\seqshort$.
	\end{proof}

\section{Relation between \texorpdfstring{\scfoid}{SCFO(ID)} and \texorpdfstring{\sccyc}{SCFO(ID)-cyc}} \label{app:relation-calculi}

In this appendix, we show that every \scfoid-theorem is also an \sccyc-theorem, by extending Brotherston and Simpson's proof of the fact that every \lkid-theorem is a \clkid-theorem \cite{jlc/BrotherstonS11}.
The sequent calculus \scfoid \cite{lpnmr/VandenEedeVD24,arxiv} is defined similarly as \scinf, with the difference that proofs in \scfoid are {finite} trees, and that the left introduction rule for defined atoms is the \emph{induction rule} (ind) instead of the case distinction rule (case):
\begin{equation*}
	\begin{prooftree}
		\hypo{\text{minor premises}}
		\hypo{\defn, \Gamma, \ihf{P}{\bar{t}} \vdash \Delta}
		\infer2[(ind)]{\defn, \Gamma, P(\bar{t}) \vdash \Delta}
	\end{prooftree}
\end{equation*}
Here %
$P(\bar{t})$ is a defined atom of $\defn$.
As explained in \cite{arxiv}, an application of (ind) involves the selection of a subset $\Pi$ of $\defp{\defn}$ containing $P$, and for every $Q \in \Pi$ an \emph{induction hypothesis} $\ih_Q$ associated with $Q$.
An induction hypothesis $\ih_Q$ associated with $Q$ is a \emph{set expression} $\setexpg$, where $\bar{z}$ is an $\ar{Q}$-tuple of object symbols and $\form$ an \fo-formula.
Informally, $\ih_Q$ must be chosen such that it describes an upper bound on $Q$. %
Given an induction hypothesis $\ih_Q \doteq \setexpg$ and an $\ar{Q}$-tuple of terms $\bar{s}$, we write $\ihf{Q}{\bar{s}}$ to denote the formula $\form[\bar{s}/\bar{z}]$.
For every definitional rule $\forall \bar{y}: Q(\bar{s}) \rul \formtwo$ in $\defn$ with $Q \in \Pi$, the inference rule (def L) has a \emph{minor premise}
\[\defn, \Gamma, \formtwo\substposihpi \vdash \ihf{Q}{\bar{s}}, \Delta . \]
Here $\formtwo\substposihpi$ denotes the formula obtained from $\formtwo$ by replacing all positive occurrences of atoms $R(\bar{u})$ with $R \in \Pi$ by $\ihf{R}{\bar{u}}$.
This inference rule requires that for all definitional rules $\forall \bar{y}: Q(\bar{s}) \rul \formtwo$ in $\defn$ with $Q \in \Pi$, no object symbol in $\bar{y}$ occurs freely in $\defn$, $\Delta$ or $\Gamma$. 
The premise $\defn, \Gamma, \ihf{P}{\bar{t}} \vdash \Delta$ is called the \emph{major premise} of (ind).

The induction rule serves as a formalization of the \emph{principle of mathematical induction}, in which the elements of inductively defined sets are shown to satisfy a certain property by finding a suitable induction hypothesis that entails the property and that is preserved under the definitional rules. 
The former is formalized by the major premise, and the latter by the minor premises.
The set $\Pi$ contains the defined predicates involved in the induction argument.
To guarantee soundness for non-monotone definitions, we only replace positive occurrences of defined atoms in the minor premises.

To prove Theorem \ref{thm:ind-thm-is-cyc-thm}, we show that every instance of (ind) is derivable in \sccyc.

\IndDerivableInCyc*

\begin{proof}
	Consider an instance of (ind):
	\begin{equation} \label{eq:instance-ind}
		\begin{prooftree}
			\hypo{\text{minor premises}}
			\hypo{\defn, \Gamma, \ihf{P}{\bar{v}} \vdash \Delta}
			\infer2[(ind)]{\defn, \Gamma, P(\bar{v}) \vdash \Delta}
		\end{prooftree}
	\end{equation}
	with the corresponding set $\Pi \subseteq \defp{\defn}$ and induction hypotheses $\ihs{Q}$ for all $Q \in \Pi$.
	For every definitional rule $\forall \bar{y}: Q(\bar{s}) \rul \formtwo$ in $\defn$ with $Q \in \Pi$, this instance has a minor premise $\defn, \Gamma, \formtwo\substposihpi \vdash \ihf{Q}{\bar{s}}, \Delta$.
	
	Let $\mset$ be the set of formulae $\forall \bar{y}: \formtwo\substposihpi \Rightarrow \ihf{Q}{\bar{s}}$ corresponding to the definitional rules $\forall \bar{y}: Q(\bar{s}) \rul \formtwo$ in $\defn$ with $Q \in \Pi$.
	For each $Q \in \Pi$, we fix an $\ar{Q}$-tuple of object symbols $\bar{z}_Q$.
	Consider the following \sccyc-derivation:
	\begin{equation} \label{eq:cyc-derivation-ind-rule}
		\begin{prooftree}
			\hypo{\defn, P(\bar{z}_P), \mset \vdash \ihf{P}{\bar{z}_P}}
			\infer1[(subst)]{\defn, P(\bar{v}), \mset \vdash \ihf{P}{\bar{v}}}
			\infer1[($\land$L)]{\vdots}
			\infer1[($\land$L)]{\defn, P(\bar{v}), \bigwedge \mset \vdash \ihf{P}{\bar{v}}}
			\infer1[($\Rightarrow$R)]{\defn, P(\bar{v}) \vdash \bigwedge \mset \Rightarrow \ihf{P}{\bar{v}}}
			\infer1[(wk)]{\defn, \Gamma, P(\bar{v}) \vdash \bigwedge \mset \Rightarrow \ihf{P}{\bar{v}}, \Delta}
			
			\hypo{\defn, \Gamma \vdash \bigwedge \mset, \Delta}
			\hypo{\defn, \Gamma, \ihf{P}{\bar{v}} \vdash \Delta}
			\infer2[($\Rightarrow$L)]{\defn, \Gamma, \bigwedge \mset \Rightarrow \ihf{P}{\bar{v}} \vdash \Delta}
			\infer1[(wk)]{\defn, \Gamma, P(\bar{v}), \bigwedge \mset \Rightarrow \ihf{P}{\bar{v}} \vdash \Delta}
			
			\infer2[(cut)]{\defn, \Gamma, P(\bar{v}) \vdash \Delta}
		\end{prooftree}
	\end{equation}
	Here $\bigwedge \mset$ denotes the conjunction of all formulae in $\mset$. %
	By repeated application of ($\land$R), the sequent $\defn, \Gamma \vdash \bigwedge \mset, \Delta$ can derived from sequents of the form $\defn, \Gamma \vdash \forall \bar{y}: \formtwo\substposihpi \Rightarrow \ihf{Q}{\bar{s}}, \Delta$, corresponding to the definitional rules $\forall \bar{y}: Q(\bar{s}) \rul \formtwo$ in $\defn$ with $Q \in \Pi$.
	For each such definitional rule, consider the following \sccyc-derivation:
	\begin{equation*}
		\begin{prooftree}
			\hypo{\defn, \Gamma, \formtwo\substposihpi \vdash \ihf{Q}{\bar{s}}, \Delta}
			\infer1[($\Rightarrow$R)]{\defn, \Gamma \vdash \formtwo\substposihpi \Rightarrow \ihf{Q}{\bar{s}}, \Delta}
			\infer1[($\forall$R)]{\vdots}
			\infer1[($\forall$R)]{\defn, \Gamma \vdash \forall \bar{y}: \formtwo\substposihpi \Rightarrow \ihf{Q}{\bar{s}}, \Delta}
		\end{prooftree}
	\end{equation*}
	Extending (\ref{eq:cyc-derivation-ind-rule}) with these derivations, we obtain an \sccyc-derivation of $\defn, \Gamma, P(\bar{v}) \vdash \Delta$ with buds labeled by the premises of (\ref{eq:instance-ind}), as well as $\defn, P(\bar{z}_P), \mset \vdash \ihf{P}{\bar{z}_P}$.
	Thus, it suffices to find an \sccyc-proof of $\defn, P(\bar{z}_P), \mset \vdash \ihf{P}{\bar{z}_P}$. 
	In the remainder of this proof, we will construct an \sccyc-proof $\prf = (\deriv, \repfun)$ of $\defn, P(\bar{z}_P), \mset \vdash \ihf{P}{\bar{z}_P}$.
	More specifically, we will construct $\deriv$ as the final derivation $\deriv_k$ of a finite sequence $\deriv_0, \dots, \deriv_k$ of \sccyc-derivations of $\defn, P(\bar{z}_P), \mset \vdash \ihf{P}{\bar{z}_P}$ and construct the repetition function $\repfun$ alongside the sequence $\deriv_0, \dots, \deriv_k$.
	Furthermore, we will construct the derivations $\deriv_0, \dots, \deriv_k$ subject to the invariant that every bud of $\deriv_i$ is labeled with a sequent of the form $\defn, \mset, \form \vdash \form\substposihpi$ or $\defn, \mset, \form\substnegihpi \vdash \form$, for an \fo-formula $\form$. %
	Here $\form\substnegihpi$ denotes the formula obtained from $\form$ by replacing all negative occurrences of atoms $R(\bar{u})$ with $R \in \Pi$ by $\ihf{R}{\bar{u}}$.
	
	The first derivation $\deriv_0$ derives the sequent $\defn, \mset, P(\bar{z}_P) \vdash \ihf{P}{\bar{z}_P}$ via (case) from the sequents $\defn, \mset, \bar{z}_P=\bar{t}, \form \vdash \ihf{P}{\bar{z}_P}$ corresponding to the definitional rules $\defrul$ in $\defn$ defining $P$, and every such sequent is derived as follows:
	\begin{equation*}
		\begin{prooftree}
			\hypo{\defn, \mset, \form \vdash \form\substposihpi}
			\infer1[(wk)]{\defn, \mset, \form \vdash \form\substposihpi, \ihf{P}{\bar{t}}}
			\infer0[(ax)]{\defn, \mset, \ihf{P}{\bar{t}}, \form \vdash \ihf{P}{\bar{t}}}
			\infer2[($\Rightarrow$L)]{\defn, \mset, \form\substposihpi \Rightarrow \ihf{P}{\bar{t}}, \form \vdash \ihf{P}{\bar{t}}}
			\infer1[($\forall$L)]{\vdots}
			\infer1[($\forall$L)]{\defn, \mset, \form \vdash \ihf{P}{\bar{t}}}
			\infer1[($=$L)]{\vdots}
			\infer1[($=$L)]{\defn, \mset, \bar{z}_P=\bar{t}, \form \vdash \ihf{P}{\bar{z}_P}}
		\end{prooftree}
	\end{equation*}
	In the final application of ($\forall$L), we used the fact that $\mset$ contains the formula $\forall \bar{x}: \form\substposihpi \Rightarrow \ihf{P}{\bar{t}}$.
	Clearly, $\deriv_0$ satisfies the invariant.
	
	Suppose we have constructed the derivation $\deriv_i$ for some $i$.
	If $\deriv_i$ has no bud $n$ without a companion $\repfun(n)$, then it is the final derivation $\deriv_k$ in the sequence $\deriv_0, \dots, \deriv_k$.
	Otherwise, we construct a derivation $\deriv_{i+1}$ from $\deriv_i$ as follows.
	For every bud $n$ in $\deriv_i$ without a companion $\repfun(n)$, labeled with a sequent of the form $\defn, \mset, \form \vdash \form\substposihpi$, we do the following:
	\begin{itemize}
		\item If $\form$ does not contain any predicate of $\Pi$, then $\form\substposihpi \doteq \form$, and we replace $n$ with the proof 
		\begin{equation*}
			\begin{prooftree}
				\infer0[(ax)]{\defn, \mset, \form \vdash \form\substposihpi}
			\end{prooftree}
		\end{equation*}
		\item If $\form$ is of the form $Q(\bar{s})$ such that $Q \in \Pi$ and $\bar{s} \neq \bar{z}_Q$, then we replace $n$ with the derivation
		\begin{equation*}
			\begin{prooftree}
				\hypo{\defn, \mset, Q(\bar{z}_Q) \vdash \ihf{Q}{\bar{z}_Q}}
				\infer1[(subst)]{\defn, \mset, Q(\bar{s}) \vdash \ihf{Q}{\bar{s}}}
			\end{prooftree}
		\end{equation*}
		Denote the top node by $m$.
		If $\defn, \mset, Q(\bar{z}_Q) \vdash \ihf{Q}{\bar{z}_Q}$ already occurs as the label of an ancestor of $m$, then we pick this ancestor as the companion $\repfun(m)$ of $m$.
		Otherwise, we replace $m$ with a derivation analogous to the derivation $\deriv_0$ of $\defn, \mset, P(\bar{z}_P) \vdash \ihf{P}{\bar{z}_P}$. 
		\item If $\form$ is of the form $\lnot \formtwo$, then we replace $n$ with the derivation
		\begin{equation*}
			\begin{prooftree}
				\hypo{\defn, \mset, \formtwo\substnegihpi \vdash \formtwo}
				\infer1[($\lnot$R)]{\defn, \mset \vdash \lnot \formtwo\substnegihpi, \formtwo}
				\infer1[($\lnot$L)]{\defn, \mset, \lnot \formtwo \vdash (\lnot \formtwo)\substposihpi}
			\end{prooftree}
		\end{equation*}
		Here we used the fact that $(\lnot \formtwo)\substposihpi \doteq \lnot \formtwo\substnegihpi$.
		\item If $\form$ is of the form $\formtwo \land \formthree$, then we replace $n$ with the derivation
		\begin{equation*}
			\begin{prooftree}
				\hypo{\defn, \mset, \formtwo \vdash \formtwo\substposihpi}
				\infer1[(wk)]{\defn, \mset, \formtwo, \formthree \vdash \formtwo\substposihpi}
				\hypo{\defn, \mset, \formthree \vdash \formthree\substposihpi}
				\infer1[(wk)]{\defn, \mset, \formtwo, \formthree \vdash \formthree\substposihpi}
				\infer2[($\land$R)]{\defn, \mset, \formtwo, \formthree \vdash \formtwo\substposihpi \land \formthree\substposihpi}
				\infer1[($\land$L)]{\defn, \mset, \formtwo \land \formthree \vdash (\formtwo \land \formthree)\substposihpi}
			\end{prooftree}
		\end{equation*}
		Here we used the fact that $(\formtwo \land \formthree)\substposihpi \doteq \formtwo\substposihpi \land \formthree\substposihpi$.
		\item If $\form$ is of the form $\forall x: \formtwo$, then we replace $n$ with the derivation
		\begin{equation*}
			\begin{prooftree}
				\hypo{\defn, \mset, \formtwo \vdash \formtwo\substposihpi}
				\infer1[($\forall$L)]{\defn, \mset, \forall x: \formtwo \vdash \formtwo\substposihpi}
				\infer1[($\forall$R)]{\defn, \mset, \forall x: \formtwo \vdash (\forall x: \formtwo)\substposihpi}
			\end{prooftree}
		\end{equation*}
		Here we used the fact that $(\forall x: \formtwo)\substposihpi \doteq \forall x: \formtwo\substposihpi$, and that $x$ does not occur freely in $\defn$ or $\mset$.\footnote{
			The fact that $x$ does not occur freely in $\defn$ or $\mset$ follows from the fact that, by construction, the formulae $\form$ in the sequents $\defn, \mset, \form \vdash \form\substposihpi$ and $\defn, \mset, \form\substnegihpi \vdash \form$ that occur as labels of buds in $\deriv_i$ are subformulae of bodies of definitional rules in $\defn$.
		}
		\item The cases for $\lor$, $\Rightarrow$ %
		and $\exists$ are similar to the cases above.
	\end{itemize}
	Note that the invariant is preserved in all cases.
	The procedure for sequents of the form $\defn, \mset, \form\substnegihpi \vdash \form$ is similar to the procedure for sequents of the form $\defn, \mset, \form \vdash \form\substposihpi$.
	
	The above procedure terminates after finitely many steps. 
	This is because at every step, a bud $n$ without companion $\repfun(n)$ and with label $\seqshort$ of the form $\defn, \mset, \form \vdash \form\substposihpi$ or $\defn, \mset, \form\substnegihpi \vdash \form$ is replaced with a derivation $\deriv$ of $\seqshort$ such that either: 
	\begin{itemize}
		\item $\deriv$ is a (non-cyclic) proof of $\seqshort$;
		\item $\deriv$ has a single bud $m$ with a companion $\repfun(m)$ to an ancestor;
		\item $\deriv$ contains a sequent of the form $\defn, \mset, Q(\bar{z}_Q) \vdash \ihf{Q}{\bar{z}_Q}$ for a predicate symbol $Q \in \Pi$ that did not yet occur as the label of an ancestor; or
		\item the buds in $\deriv$ are labeled with sequents of the form $\defn, \mset, \formtwo \vdash \formtwo\substposihpi$ or\\ $\defn, \mset, \formtwo\substnegihpi \vdash \formtwo$, where $\formtwo$ is a strict subformula of $\form$. 
	\end{itemize}
	Since the subformula order is well-founded, a branch in the constructed derivation repeatedly encounters sequents of the form $\defn, \mset, Q(\bar{z}_Q) \vdash \ihf{Q}{\bar{z}_Q}$.
	Since $\Pi$ is finite, a branch must eventually encounter a sequent of this form twice, in which case the first occurrence is picked as the companion of the second occurrence and the procedure terminates on this branch.
	
	Since every bud $n$ in $\deriv_k$ has a companion $\repfun(n)$, $\prf = (\deriv_k, \repfun)$ is an \sccyc-pre-proof.
	It is furthermore an \sccyc-proof, as $\graph{\deriv_k}$ satisfies the global trace condition.
	This is because, by construction, every node  in $\graph{\deriv_k}$ with label of the form $\defn, \mset, Q(\bar{z}_Q) \vdash \ihf{Q}{\bar{z}_Q}$ is derived by %
	(case), and therefore, every infinite branch in $\graph{\deriv_k}$ is infinitely progressing. %
\end{proof}
	
\IndThmIsCycThm*

	\begin{proof}
		Let $\seqshort$ be an \scfoid-theorem.
		Then there exists an \scfoid-proof $\deriv$ of $\seqshort$.
		By Lemma \ref{lem:ind-derivable-in-cyc}, we can replace every instance of (ind) in $\deriv$ with an \sccyc-derivation satisfying the global trace condition, resulting in an \sccyc-pre-proof $\deriv'$.
		Since every infinite branch in $\deriv'$ ends up in one of the substituted \sccyc-derivations that satisfy the global trace condition, $\deriv'$ also satisfies the global trace condition.
		Thus, $\deriv'$ is an \sccyc-proof of $\seqshort$.
	\end{proof}

\end{document}